\documentclass{amsart}
\usepackage{amsmath,amsfonts,amssymb}
\usepackage{graphicx,amsmath,amscd}
\usepackage{color,comment}
\usepackage[T1]{fontenc}
\usepackage[latin1]{inputenc}
\usepackage{revsymb}

\usepackage[lmargin=2cm,rmargin=2cm,tmargin=2cm,bmargin=2cm]{geometry}
\usepackage[toc,page]{appendix}

%\usepackage{graphicx,epic,eepic,epsfig,parskip,amsmath,amsfonts,amssymb,amscd,latexsym,verbatim,color,bbold,theorem,float}

%\usepackage[small,nohug,heads=vee]{diagrams}
%\diagramstyle[labelstyle=\scriptstyle]

\newtheorem{theorem}{Theorem}[section]
\newtheorem{lemma}[theorem]{Lemma}
\newtheorem{corollary}[theorem]{Corollary}
\newtheorem{proposition}[theorem]{Proposition}
\newtheorem{remark}[theorem]{Remark}
\newtheorem{defi/prop}[theorem]{Definition/Proposition}

\newcommand{\N}{\mathbf{N}}

\newcommand{\C}{\mathbf{C}}
\renewcommand{\P}{\mathbf{P}}

\renewcommand{\leq}{\leqslant}
\renewcommand{\geq}{\geqslant}

\newcommand{\st}{\  : \ }

\newcommand{\A}{\mathrm{A}}
\newcommand{\B}{\mathrm{B}}

\newcommand{\Id}{\openone}

\newcommand{\cH}{\mathcal{H}}

\newcommand{\cK}{\mathcal{K}}

\DeclareMathOperator{\Span}{Span}

\DeclareMathOperator{\Tr}{Tr}
\DeclareMathOperator{\Sym}{Sym}

\newcommand{\braket}[2]{\langle #1 | #2\rangle}

\newcommand{\proj}[1]{| #1 \rangle\!\langle #1 |}
\newcommand{\bra}[1]{\langle #1 |}
\newcommand{\ket}[1]{| #1 \rangle}

\begin{document}

\title{Flexible constrained de Finetti reductions and applications}

\date{May 29th 2016}

\author{C\'{e}cilia Lancien}

\author{Andreas Winter}

\address{\textbf{C\'{e}cilia Lancien:} Institut Camille Jordan, Universit\'{e} Claude Bernard Lyon 1, 69622 Villeurbanne Cedex, France
\& Departament de F\'{\i}sica: Grup d'Informaci\'{o} Qu\`{a}ntica, Universitat Aut\`{o}noma de Barcelona, 08193 Bellaterra, Barcelona, Spain.}
\email{lancien@math.univ-lyon1.fr}

\address{\textbf{Andreas Winter:} Departament de F\'{\i}sica: Grup d'Informaci\'{o} Qu\`{a}ntica, Universitat Aut\`{o}noma de Barcelona, 08193 Bellaterra, Barcelona, Spain \& Instituci\'{o} Catalana de Recerca i Estudis Avan\c{c}ats (ICREA), 08010 Barcelona, Spain.}
\email{andreas.winter@uab.cat}

\begin{abstract}
De Finetti theorems show how sufficiently exchangeable states are well-approximated by convex combinations of i.i.d.~states. Recently,
it was shown that in many quantum information applications a more relaxed \emph{de Finetti reduction} (i.e.~only a matrix inequality
between the symmetric state and one of de Finetti form) is enough, and that it leads to more concise and elegant arguments.

Here we show several uses and general flexible applicability of a \emph{constrained de Finetti reduction} in quantum information theory, which was recently discovered by Duan, Severini and Winter. In particular we show that the technique can accommodate other symmetries commuting with the permutation action, and permutation-invariant linear constraints.
We then demonstrate that, in some cases, it is also fruitful with convex constraints, in particular separability in a bipartite setting. This is a constraint particularly interesting in the context of the complexity class $\mathrm{QMA}(2)$ of interactive quantum Merlin-Arthur games with unentangled provers, and our results relate to the soundness gap amplification of $\mathrm{QMA}(2)$ protocols by parallel repetition. It is also relevant for the regularization of certain entropic channel parameters. Finally, we explore an extension to infinite-dimensional systems, which usually pose inherent problems to de Finetti techniques in the quantum case.
\end{abstract}

\maketitle

%%%%%% Auf geht's beim Schichtl... %%%%%%

\section{Introduction}
\label{sec:intro}
The main motivation behind all de Finetti type theorems is to reduce
the study of permutation-invariant scenarios to that of i.i.d.~ones,
which are often much easier to understand.
In many information theoretic situations, the problem is posed in such
a way that one almost directly sees that the solution is (or is without
loss of generality) permutation-invariant. Furthermore, in many
scenarios one needs only to upper bound (and not to accurately approximate)
a permutation-invariant object by i.i.d.~ones. The seminal \textit{de Finetti
reduction} (aka \textit{post-selection lemma}) of Christandl, K\"{o}nig and Renner \cite{CKR}
was precisely designed for that: for any permutation-invariant
state $\rho$ on $\mathcal{H}^{\otimes n}$, with $d=|\mathcal{H}|$ the
``local'' Hilbert space dimension,
\begin{equation}
  \label{eq:CKR}
  \rho \leq (n+1)^{d^2} \int_{\sigma\in\mathcal{D}(\cH)} \sigma^{\otimes n} \,\mathrm{d}\sigma,
\end{equation}
where $\mathrm{d}\sigma$ is a universal probability measure over the set of mixed states $\mathcal{D}(\cH)$ on $\mathcal{H}$,
and the inequality refers to the matrix order ($A\leq B$ meaning
that $B-A$ is positive semidefinite).
The beauty of this statement is that on the right hand side we have a
universal object: one and the same convex combination provides the
upper bound to all permutation-invariant states.
At the same time, though, its very universality can be a drawback:
every permutation-invariant state (quantum or classical) is upper bounded
by the same convex combination of tensor power states, so that any
other a priori information (apart from its permutation-symmetry),
that one may have on it, is lost. In~\cite[Appendix~B]{DSW}, it was shown that
at the sole cost of slightly increasing the polynomial pre-factor in
front of the upper bounding de Finetti operator, it is actually possible
to make it depend on the state of interest, or on some property
that this state has, including in the integral on the right hand side
of equation~(\ref{eq:CKR}) a fidelity term between $\rho$ and the
i.i.d.~state $\sigma^{\otimes n}$.
In~\cite{DSW}, this \emph{constrained de Finetti reduction}
was applied to prove a coding
theorem in a setting with adversarially chosen channel.
In~\cite{LaW2} another application to parallel repetition
of no-signalling games was given.
%\textcolor{red}{Mention approximate microcanonical subspace construction [Cf. arXiv:1512.01189]?}

In Section~\ref{sec:finite-de-finetti}, we first review the constrained de
Finetti reduction of~\cite[Appendix~B]{DSW}, for the sake both of completeness and of presenting its proof in a slightly alternative way
(Subsection~\ref{subsec:finite-general}). We then show that certain \emph{linear} constraints lead to very simple
and at the same time useful forms of the de Finetti reduction, such
that certain ``unwanted'' contributions in the integral on the right hand
side of equation~(\ref{eq:CKR}) are either completely absent or exponentially
suppressed (Subsection~\ref{subsec:linear-constraints}).
Next, in Sections~\ref{sec:sep1} and \ref{sec:sep2} we study in depth
the case of separability, a convex constraint. In particular we show
that there are several essentially equivalent ways of thinking about
the exponential decay of the fidelity term.
Inspired by separability, in Section~\ref{sec:general} we present an
axiomatic treatment of a wider class of convex constraints.
Finally, in Section~\ref{sec:infinite} we move to de Finetti reductions
in the infinite-dimensional case.

\section{Flexible de Finetti reductions for finite-dimensional symmetric quantum systems}
\label{sec:finite-de-finetti}

\subsection{A general constrained de Finetti reduction}
\label{subsec:finite-general}
Before getting into more specific statements, let us fix once and for
all some definitions and notation that we shall use throughout the
whole paper. Consider $\cH$ a finite-dimensional Hilbert space, and denote by $\{\ket{1},\ldots,\ket{d}\}$ an orthonormal basis of $\cH$, where $d=|\mathcal{H}|<+\infty$.
Next, for any natural number $n$ and permutation $\pi\in\mathcal{S}_n$,
define $U_{\pi}$ as the associated permutation unitary on
$\mathcal{H}^{\otimes n}$, characterized by
\[
  \forall\ 1\leq j_1,\ldots, j_n\leq d,\ U_{\pi}\ket{j_{1}}\otimes\cdots\otimes\ket{j_{n}}
           = \ket{j_{\pi(1)}}\otimes\cdots\otimes\ket{j_{\pi(n)}}.
\]
Note that this definition is independent of the basis.
The $n$-symmetric subspace of $\mathcal{H}^{\otimes n}$ can then be defined as
the simultaneous $+1$-eigenspace of all $U_\pi$'s,
\begin{align*}
  \Sym^n(\mathcal{H})
     :=& \left\{ \ket{\psi}\in\cH^{\otimes n}
                \st \forall\ \pi\in\mathcal{S}_n,\ U_{\pi}\ket{\psi}=\ket{\psi} \right\} \\
      =&  \Span\left\{ \ket{v_{j_1,\ldots,j_n}}=\sum_{\pi\in\mathcal{S}_n}
                \ket{j_{\pi(1)}}\otimes\cdots\otimes\ket{j_{\pi(n)}} \st 1\leq j_1\leq\cdots\leq j_n\leq d \right\}.
\end{align*}
The orthogonal projector onto $\Sym^n(\mathcal{H})$ may thus be written as
\[
  P_{\Sym^n(\mathcal{H})}
              = \sum_{1\leq j_1\leq\cdots\leq j_n\leq d} \proj{\psi_{j_1,\ldots,j_n}}
              = {n+d-1 \choose n} \int_{\ket{\psi}\in S_{\cH}} \proj{\psi}^{\otimes n}\mathrm{d}\psi,
\]
where for each $1\leq j_1\leq\cdots\leq j_n\leq d$, $\ket{\psi_{j_1,\ldots,j_n}}$
denotes the unit vector having the same direction as $\ket{v_{j_1,\ldots,j_n}}$,
and where $\mathrm{d}\psi$ stands for the uniform probability measure
on the unit sphere $S_{\cH}$ of $\mathcal{H}$. The second line is due to Schur's Lemma, since $\Sym^n\left(\cH\right)$ is an irreducible representation (irrep) of the commutant action of $\{U_\pi:\pi\in S_n\}$,
the local unitaries $V^{\otimes n}$, $V\in SU(\mathcal{H})$ (see e.g.~\cite{Harrow} for more details).

A state $\rho$ on $\mathcal{H}^{\otimes n}$ is then called permutation-invariant (or simply symmetric) if
$U_{\pi}\rho U_{\pi}^{\dagger} = \rho$ for all $\pi\in\mathcal{S}_n$.
This can be expressed equivalently by saying that there exists a
unit vector $\ket{\psi}\in\Sym^n(\mathcal{H}\otimes\mathcal{H}')$ such that
$\rho = \Tr_{\mathcal{H}'^{\otimes n}} \proj{\psi}$.

Going from rigid to more flexible de Finetti reductions relies essentially on
the so-called ``pinching trick'', which we state formally as Lemma \ref{lemma:pinching} below. This is a generalization of results appearing in \cite{Hayashi} and \cite{HO}.

\begin{lemma}
\label{lemma:pinching}
Let $\cH$ be a Hilbert space and $M_1,\ldots,M_r$ be operators on $\cH$.
Then, for any state $\rho$ on $\cH$,
\[
  \sum_{i,j=1}^r M_i\rho M_j^{\dagger} \leq r\sum_{i=1}^r M_i\rho M_i^{\dagger}.
\]
\end{lemma}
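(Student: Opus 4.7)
My plan is to reduce the matrix inequality to an elementary operator Cauchy--Schwarz statement by absorbing the state $\rho$ into the $M_i$. Using the unique positive semidefinite square root, I set $N_i := M_i \sqrt{\rho}$, so that $N_i N_j^\dagger = M_i \rho M_j^\dagger$ for every $i,j$. The inequality to be proved then becomes the $\rho$-free operator inequality
\[
\sum_{i,j=1}^r N_i N_j^\dagger \;\leq\; r \sum_{i=1}^r N_i N_i^\dagger,
\]
which I need to establish for arbitrary operators $N_1,\ldots,N_r$ on $\cH$.

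To prove this reduced statement, I would simply expand the manifestly positive semidefinite operator $\sum_{i,j}(N_i - N_j)(N_i - N_j)^\dagger$. Observing that $\sum_{i,j} N_i N_j^\dagger$ is Hermitian (its adjoint just swaps the dummy indices $i \leftrightarrow j$), the expansion collapses to
\[
0 \;\leq\; \sum_{i,j=1}^r (N_i - N_j)(N_i - N_j)^\dagger \;=\; 2r \sum_{i=1}^r N_i N_i^\dagger - 2 \sum_{i,j=1}^r N_i N_j^\dagger,
\]
which is exactly what is required. Re-substituting $N_i = M_i \sqrt{\rho}$ on both sides then yields the lemma in its original form.

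I do not foresee a real obstacle here: the only conceptual step is the absorption of $\rho$ into the operators via its positive square root, after which the claim is a standard discrete Cauchy--Schwarz/Jensen inequality at the level of operators. The ``pinching'' terminology simply reflects the mechanism that a cross-sum over an index family is dominated by the ``diagonal'' ($i=j$) part up to the factor $r$, which is also what underlies the more refined pinching inequalities of Hayashi and Hayashi--Ogawa cited just above the lemma.
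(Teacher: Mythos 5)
Your proof is correct, but it takes a genuinely different route from the paper's. The paper first reduces to pure states $\rho=\proj{\psi}$ (legitimate by linearity of both sides in $\rho$ and the spectral decomposition of a state), then sandwiches both sides with an arbitrary unit vector $\ket{\varphi}$ and applies the scalar Cauchy--Schwarz inequality $\left|\sum_{i=1}^r z_i\right|^2\leq r\sum_{i=1}^r |z_i|^2$ to the numbers $z_i=\bra{\varphi}M_i\ket{\psi}$. You instead absorb the state into the operators via $N_i=M_i\sqrt{\rho}$ (using $\sqrt{\rho}\,\sqrt{\rho}^{\dagger}=\rho$) and establish the resulting $\rho$-free inequality by expanding the manifestly positive semidefinite operator $\sum_{i,j}(N_i-N_j)(N_i-N_j)^{\dagger}$; the index-swap symmetry of the cross-sum collapses this to exactly $2r\sum_i N_iN_i^{\dagger}-2\sum_{i,j}N_iN_j^{\dagger}\geq 0$, which is what is needed. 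The computation checks out. What your route buys is that it dispenses with both the pure-state reduction and the test-vector sandwich, proving the operator inequality in one algebraic stroke; it also applies verbatim to any positive semidefinite $\rho$, normalized or not, which is precisely the generalization invoked in the Remark at the end of Subsection~\ref{subsec:linear-constraints}. The paper's route, on the other hand, makes the ``discrete Cauchy--Schwarz on matrix elements'' mechanism completely transparent, which is the intuition reused later in the proof of Proposition~\ref{prop:ps-pure}.
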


\begin{proof}
To prove that Lemma \ref{lemma:pinching} holds for any state on $\cH$, it is
sufficient to prove that it holds for any pure state on $\cH$. Let therefore
$\ket{\psi}$ be a unit vector in $\cH$. Then, for any unit vector $\ket{\varphi}$ in $\cH$,
we have by the Cauchy-Schwarz inequality
\[\begin{split}
  \bra{\varphi} \left(\sum_{i,j=1}^r M_i\proj{\psi} M_j^{\dagger}\right) \ket{\varphi}
     &=    \left| \sum_{i=1}^r \bra{\varphi}M_i\ket{\psi} \right|^2 \\
     &\leq r \sum_{i=1}^r \left|\bra{\varphi}M_i\ket{\psi} \right|^2 \\
     &=    \bra{\varphi} \left( r\sum_{i=1}^r M_i\proj{\psi} M_i^{\dagger}\right) \ket{\varphi},
\end{split}\]
which concludes the proof.
\end{proof}

With this tool at hand, we are ready to get, first of all,
the pure state version of the flexible de Finetti reduction.

\begin{proposition}
  \label{prop:ps-pure}
  Any unit vector $\ket{\theta}\in\Sym^n\left(\cH\right)$ satisfies
  \[
     \proj{\theta} \leq {n+d-1 \choose n}^3 \int_{\ket{\psi}\in S_{\cH}}
             \left|\langle\theta|\psi^{\otimes n}\rangle\right|^2 \proj{\psi}^{\otimes n}\mathrm{d}\psi.
  \]
\end{proposition}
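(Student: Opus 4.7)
The plan is to exploit the integral representation $P_{\Sym^n(\cH)} = D \int \proj{\psi}^{\otimes n} \, d\psi$ with $D=\binom{n+d-1}{n}$ (already recalled in the excerpt) in combination with the pinching inequality of Lemma~\ref{lemma:pinching}. Since $\ket\theta \in \Sym^n(\cH)$ is fixed by $P_{\Sym^n(\cH)}$, I would first write
\[
\ket\theta = P_{\Sym^n(\cH)}\ket\theta = D \int f(\psi) \ket{\psi^{\otimes n}} \, d\psi, \qquad f(\psi) := \braket{\psi^{\otimes n}}{\theta},
\]
so that the task reduces to bounding $\proj\theta = D^2 \iint f(\psi)\overline{f(\varphi)} \ket{\psi^{\otimes n}}\bra{\varphi^{\otimes n}} \, d\psi\, d\varphi$ by a single integral weighted by $|f(\psi)|^2$, at the cost of polynomial factors in $D$.

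To bring the discrete Lemma~\ref{lemma:pinching} to bear on this continuous double integral, the plan is to discretize via a complex projective $2n$-design $\{\psi_i\}_{i=1}^N \subset S_\cH$ with uniform weights $1/N$, chosen so that $(1/N)\sum_i g(\psi_i) = \int g(\psi) \, d\psi$ for every polynomial of degree up to $(2n,2n)$ in $(\psi,\bar\psi)$. In particular $(1/N)\sum_i \proj{\psi_i^{\otimes n}} = P_{\Sym^n(\cH)}/D$, so the previous display becomes the finite sum $\ket\theta = (D/N)\sum_i f(\psi_i) \ket{\psi_i^{\otimes n}}$. Applying Lemma~\ref{lemma:pinching} with $r=N$ and operators $M_i$ chosen so that $M_i \proj\omega M_j^\dagger = (D/N)^2 f(\psi_i)\overline{f(\psi_j)}\ket{\psi_i^{\otimes n}}\bra{\psi_j^{\otimes n}}$ (for some fixed auxiliary pure state $\ket\omega$) yields
\[
\proj\theta = \sum_{i,j} M_i \proj\omega M_j^\dagger \leq N\sum_i M_i\proj\omega M_i^\dagger = \frac{D^2}{N}\sum_i |f(\psi_i)|^2\proj{\psi_i^{\otimes n}}.
\]
The design property at degree $(2n,2n)$ then converts the right-hand side back into $D^2\int |f(\psi)|^2\proj{\psi^{\otimes n}}\,d\psi$, which is a fortiori bounded by the stated $D^3\int |f|^2 \proj{\psi^{\otimes n}}\,d\psi$.

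The main obstacle is the continuous-to-discrete passage. Lemma~\ref{lemma:pinching} requires finitely many operators, so rigor demands either (i) invoking existence of complex projective $t$-designs of arbitrary degree (a classical fact, available via random-unitary constructions or the Seymour--Zaslavsky theorem), or (ii) working directly with a continuous operator-valued Cauchy--Schwarz inequality obtained as the limit $N\to\infty$ of Lemma~\ref{lemma:pinching}. Both routes lead to the same polynomial factor, and once the discretization is handled, every remaining step is a direct manipulation of the integral representation of $P_{\Sym^n(\cH)}$.
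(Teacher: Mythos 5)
Your argument is correct and reaches the stated bound (in fact a slightly stronger one), but it implements the crucial continuous-to-discrete step differently from the paper. Both proofs start from the integral representation of $P_{\Sym^n(\cH)}$ and both rest on Lemma~\ref{lemma:pinching}; the difference lies entirely in the discretization. The paper invokes Carath\'eodory's theorem to write $\int \proj{\psi}^{\otimes n}\,\mathrm{d}\psi=\sum_{i=1}^{r}p_i\proj{\psi_i}^{\otimes n}$ with $r=\binom{n+d-1}{n}^2$ points and \emph{non-uniform} weights; after pinching the weights appear squared, and this is paid for with the bound $p_i\leq 1/\sqrt{r}$ (obtained by contracting the decomposition with $\bra{\psi_i^{\otimes n}}\cdot\ket{\psi_i^{\otimes n}}$) followed by an average over all such decompositions --- which is exactly where the third power of $\binom{n+d-1}{n}$ comes from. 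Your equal-weight $2n$-design makes the pinching factor $N$ cancel exactly against the uniform weights $1/N$, so you actually obtain the sharper constant $\binom{n+d-1}{n}^{2}$, which a fortiori implies the proposition; the cost is appealing to the existence of exact unweighted projective designs (Seymour--Zaslavsky), a heavier input than Carath\'eodory, and your degree count is right: the entries of $|f(\psi)|^2\proj{\psi}^{\otimes n}$ are linear combinations of entries of $\proj{\psi}^{\otimes 2n}$, so a $2n$-design reproduces both the vector identity and the final integral exactly. Your alternative route (ii) is even cleaner and needs no design at all: for every unit vector $\ket{\chi}$, the Cauchy--Schwarz inequality for the probability measure $\mathrm{d}\psi$ gives
\[
\left|\int f(\psi)\,\braket{\chi}{\psi^{\otimes n}}\,\mathrm{d}\psi\right|^2 \leq \int \left|f(\psi)\right|^2\left|\braket{\chi}{\psi^{\otimes n}}\right|^2\mathrm{d}\psi,
\]
which is precisely the $N\to\infty$ limit of Lemma~\ref{lemma:pinching} and yields $\proj{\theta}\leq\binom{n+d-1}{n}^2\int\left|\braket{\theta}{\psi^{\otimes n}}\right|^2\proj{\psi}^{\otimes n}\mathrm{d}\psi$ in one line. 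The remaining ingredients of your write-up (the rank-one choice $M_i=(D/N)f(\psi_i)\ket{\psi_i^{\otimes n}}\bra{\omega}$, which indeed reproduces $\proj{\theta}$ as $\sum_{i,j}M_i\proj{\omega}M_j^{\dagger}$) are sound.
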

\begin{proof}
Let $\ket{\theta}\in\Sym^n\left(\cH\right)$ be a unit vector. Then,
\[
  \proj{\theta} = P_{\Sym^n(\cH)} \proj{\theta} P_{\Sym^n(\cH)}^{\dagger}
                = {n+d-1 \choose n}^2 \int_{\ket{\psi},\ket{\varphi}\in S_{\cH}}
                                \proj{\psi}^{\otimes n} \proj{\theta}
                                \proj{\varphi}^{\otimes n} \,\mathrm{d}\psi\,\mathrm{d}\varphi.
\]
Now observe, setting $r={n+d-1 \choose n}^2$, that the span of $\left\{ \proj{\psi}^{\otimes n},\ \ket{\psi}\in S_{\cH} \right\}$, subject to the condition of having trace $1$, has dimension $r-1$. So by Caratheodory's theorem, we know that there exist $\{p_1,\ldots,p_r\}$, a convex combination, and $\{\psi_1,\ldots,\psi_r\}$, a set of unit vectors in $\cH$, such that
\begin{equation} \label{eq:cara} \int_{\ket{\psi}\in S_{\cH}} \proj{\psi}^{\otimes n}\mathrm{d}\psi = \sum_{i=1}^r p_i \proj{\psi_i}^{\otimes n}. \end{equation}
%, for any function $f:S_{\mathrm{H}}\rightarrow\R$,
%\[ \int_{\ket{\psi}\in S_{\mathrm{H}}} f(\psi) \proj{\psi}^{\otimes n}\mathrm{d}\psi = \sum_{i=1}^r p_i f(\psi_i)\proj{\psi_i}^{\otimes n}. \]
We can therefore rewrite
\[ \proj{\theta} = r \sum_{i,j=1}^r p_ip_j \proj{\psi_i}^{\otimes n}\proj{\theta}\proj{\psi_j}^{\otimes n} \leq r^2 \sum_{i=1}^r p_i^2\left|\braket{\theta}{\psi_i^{\otimes n}}\right|^2 \proj{\psi_i}^{\otimes n} \leq r^{3/2} \sum_{i=1}^r p_i\left|\braket{\theta}{\psi_i^{\otimes n}}\right|^2 \proj{\psi_i}^{\otimes n}, \]
where the next to last inequality is by Lemma \ref{lemma:pinching}, and the last inequality is because, for each $1\leq i\leq r$, $p_i\leq 1/\sqrt{r}$ (which can be seen by contracting both sides of equation \eqref{eq:cara} with $\bra{\psi_i^{\otimes n}}\cdot\ket{\psi_i^{\otimes n}}$). And consequently, since this holds for any ensemble $\{p_i,\,\psi_i\}_{1\leq i\leq r}$ satisfying equation \eqref{eq:cara}, we have by convex combination
\[ \proj{\theta} \leq r^{3/2} \int_{\ket{\psi}\in S_{\cH}} \left|\langle\theta|\psi^{\otimes n}\rangle\right|^2 \proj{\psi}^{\otimes n} \,\mathrm{d}\psi, \]
%And since by assumption on the ensemble $\{p_i,\,\psi_i\}_{1\leq i\leq r}$,
%\[ \sum_{i=1}^r p_i\left|\braket{\theta}{\psi_i^{\otimes n}}\right|^2 \proj{\psi_i}^{\otimes n} = \int_{\ket{\psi}\in S_{\mathrm{H}}} \left|\langle\theta|\psi^{\otimes n}\rangle\right|^2 \proj{\psi}^{\otimes n} \,\mathrm{d}\psi, \]
which is precisely the advertised result.
\end{proof}

From Proposition \ref{prop:ps-pure}, we can now easily derive the general
mixed state version of our flexible de Finetti reduction, which was originally obtained in \cite{DSW} by a slightly different route.

\begin{theorem}[Cf.~{\cite[Lemma 18]{DSW}}]
  \label{th:ps-mixed}
  Any symmetric state $\rho$ on $\cH^{\otimes n}$ satisfies
  \[
     \rho \leq {n+d^2-1 \choose n}^3 \int_{\ket{\psi}\in S_{\cH\otimes\cH'}}
        F\left(\rho,\sigma(\psi)^{\otimes n}\right)^2 \sigma(\psi)^{\otimes n} \,\mathrm{d}\psi,
  \]
  where for a unit vector $\ket{\psi}\in\cH\otimes\cH'$,
  $\sigma(\psi) = \Tr_{\cH'}\proj{\psi}$ is the reduced state of $\proj{\psi}$ on $\cH$.
\end{theorem}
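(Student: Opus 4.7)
The plan is to deduce the mixed-state bound from Proposition \ref{prop:ps-pure} via a symmetric purification, and then push down to the original system by partial trace. Concretely, given a symmetric state $\rho$ on $\cH^{\otimes n}$, I invoke the standard fact (recalled just before the pinching lemma) that $\rho$ admits a purification $\ket{\theta}\in\Sym^n(\cH\otimes\cH')$, where $\cH'$ is an auxiliary Hilbert space of the same dimension $d$ as $\cH$. Thus the composite $\cH\otimes\cH'$ has dimension $d^2$, and $\ket{\theta}$ lives in the $n$-symmetric subspace of $(\cH\otimes\cH')^{\otimes n}$.

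Next I apply Proposition \ref{prop:ps-pure} to $\ket{\theta}$, but with $\cH$ in that statement replaced by $\cH\otimes\cH'$. This yields
\[
\proj{\theta} \leq {n+d^2-1 \choose n}^3 \int_{\ket{\psi}\in S_{\cH\otimes\cH'}} \left|\braket{\theta}{\psi^{\otimes n}}\right|^2 \proj{\psi}^{\otimes n} \,\mathrm{d}\psi.
\]
Taking the partial trace $\Tr_{\cH'^{\otimes n}}$ on both sides (which preserves the matrix order, as it is a completely positive map) produces on the left the state $\rho$, and on the right replaces each $\proj{\psi}^{\otimes n}$ by $\sigma(\psi)^{\otimes n}$, since $\Tr_{\cH'^{\otimes n}}\proj{\psi}^{\otimes n}=\left(\Tr_{\cH'}\proj{\psi}\right)^{\otimes n}=\sigma(\psi)^{\otimes n}$.

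It only remains to identify the scalar weight $|\braket{\theta}{\psi^{\otimes n}}|^2$ as an upper bound on $F(\rho,\sigma(\psi)^{\otimes n})^2$; this is where the fidelity comes from. By construction, $\ket{\theta}$ is a purification of $\rho$ on $(\cH\otimes\cH')^{\otimes n}$, and $\ket{\psi}^{\otimes n}$ is a purification of $\sigma(\psi)^{\otimes n}$ on the same space. Uhlmann's theorem states that the fidelity between two mixed states equals the maximum overlap between any pair of their purifications on a common extension space, so
\[
\left|\braket{\theta}{\psi^{\otimes n}}\right|^2 \leq F\!\left(\rho,\sigma(\psi)^{\otimes n}\right)^2.
\]
Substituting this pointwise bound into the integrand preserves the operator inequality and yields exactly the statement of Theorem \ref{th:ps-mixed}.

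No step is really an obstacle: all three ingredients (symmetric purification, complete positivity of partial trace, and Uhlmann's theorem) are standard, and the combinatorial prefactor is simply Proposition \ref{prop:ps-pure}'s prefactor applied in dimension $d^2$. The only point deserving a brief remark is that we must ensure the purification can indeed be chosen inside $\Sym^n(\cH\otimes\cH')$ rather than merely inside $\cH^{\otimes n}\otimes\cH'^{\otimes n}$; this is guaranteed by the equivalent characterization of permutation-invariance recorded earlier in the paper.
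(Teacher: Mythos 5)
Your proposal is correct and follows essentially the same route as the paper: purify $\rho$ to a vector in $\Sym^n(\cH\otimes\cH')$, apply Proposition \ref{prop:ps-pure} in dimension $d^2$, take the partial trace over $\cH'^{\otimes n}$, and bound the overlap $\left|\braket{\theta}{\psi^{\otimes n}}\right|$ by the fidelity $F\left(\rho,\sigma(\psi)^{\otimes n}\right)$. The only cosmetic difference is that you justify this last inequality via Uhlmann's theorem, whereas the paper invokes monotonicity of the fidelity under the CPTP map $\Tr_{\cH'^{\otimes n}}$; these are interchangeable here.
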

\begin{proof}
As noted before, there exists a unit vector $\ket{\theta}\in\Sym^n\left(\cH\otimes\cH'\right)$
such that $\rho=\Tr_{\cH'^{\otimes n}}\proj{\theta}$.
By Proposition \ref{prop:ps-pure}, we have
\[
  \proj{\theta} \leq {n+d^2-1 \choose n}^3 \int_{\ket{\psi}\in S_{\cH\otimes\cH'}}
       \left|\langle\theta|\psi^{\otimes n}\rangle\right|^2
       \proj{\psi}^{\otimes n} \,\mathrm{d}\psi.
\]
Thus, after partial tracing over $\cH'^{\otimes n}$, we obtain
\[
  \rho\leq {n+d^2-1 \choose n}^3 \int_{\ket{\psi}\in S_{\cH\otimes\cH'}}
       \left|\langle\theta|\psi^{\otimes n}\rangle\right|^2
       \sigma(\psi)^{\otimes n} \,\mathrm{d}\psi.
\]
To get the announced result, we then just have to notice that,
by monotonicity of the fidelity under the CPTP map $\Tr_{\cH'^{\otimes n}}$,
we have for each $\ket{\psi}\in\cH\otimes\cH'$,
\[
  \left|\langle\theta|\psi^{\otimes n}\rangle\right|
       =    F\left(\proj{\theta}, \proj{\psi}^{\otimes n} \right)
       \leq F\left(\rho,\sigma(\psi)^{\otimes n}\right). \qedhere
\]
\end{proof}

\subsection{Linear constraints}
\label{subsec:linear-constraints}
Let $\rho$ be a symmetric state on $\cH^{\otimes n}$.
What Theorem \ref{th:ps-mixed} tells us is that there exists a
probability measure $\mu$ over the set of states on $\cH$ such that
\begin{equation}
  \label{eq:ps-fidelity}
   \rho \leq (n+1)^{3d^2} \int_{\sigma\in\in\mathcal{D}(\cH)}
                            F\left(\rho,\sigma^{\otimes n}\right)^2
                            \sigma^{\otimes n} \,\mathrm{d}\mu(\sigma).
\end{equation}
It may be pointed out that $\mu$ is in fact the uniform probability measure over the set of mixed states on $\mathcal{H}$ (with respect to the Hilbert--Schmidt distance), since the latter is equivalently characterized as the partial trace over an environment $\mathcal{H}'$ having same dimension as $\mathrm{H}$ of uniformly distributed pure states on $\mathcal{H}\otimes\mathcal{H}'$ (see \cite{ZS}).

Observe that, contrary to the original de Finetti reduction, where the upper bound is the same for every symmetric state, we here have a highly
state-dependent upper bound, where only states which have a high
fidelity with the state of interest $\rho$ are given an important weight.
This is especially useful when one knows that $\rho$ satisfies some
additional property. Indeed, one would then expect that, amongst states
of the form $\sigma^{\otimes n}$, only those approximately satisfying
this same property should have a non-negligible fidelity weight. There
are at least two archetypical cases where this intuition can easily be seen to be true.

\begin{corollary}[Cf.~{\cite[Lemma 18]{DSW}}]
  \label{cor:product-image}
  Let $\mathcal{N}:\mathcal{L}(\cH)\rightarrow\mathcal{L}(\mathcal{K})$ be a quantum channel, with $d=|\cH|<+\infty$.
  Assume that $\rho$ is a symmetric state on $\cH^{\otimes n}$, which
  is additionally satisfying $\mathcal{N}^{\otimes n}(\rho)=\tau_0^{\otimes n}$,
  for some given state $\tau_0$ on $\mathcal{K}$. Then,
  \[
     \rho \leq (n+1)^{3d^2} \int_{\sigma\in\mathcal{D}(\cH)}
             F\left(\tau_0,\mathcal{N}(\sigma)\right)^{2n} \sigma^{\otimes n}\,\mathrm{d}\mu(\sigma).
  \]
\end{corollary}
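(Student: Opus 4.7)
The plan is to take equation~(\ref{eq:ps-fidelity}) from Theorem~\ref{th:ps-mixed} as the starting point, and then upgrade the fidelity factor $F(\rho,\sigma^{\otimes n})^2$ into the stated $F(\tau_0,\mathcal{N}(\sigma))^{2n}$ by exploiting the extra information about $\rho$, namely the constraint $\mathcal{N}^{\otimes n}(\rho)=\tau_0^{\otimes n}$. The upper bounding operators $\sigma^{\otimes n}$ in the integrand remain unchanged, so one only has to control the scalar weights in front.

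Concretely, I would invoke two standard properties of the fidelity. First, monotonicity under the CPTP map $\mathcal{N}^{\otimes n}$ gives
\[
   F\bigl(\rho,\sigma^{\otimes n}\bigr)
     \leq F\bigl(\mathcal{N}^{\otimes n}(\rho),\mathcal{N}^{\otimes n}(\sigma^{\otimes n})\bigr)
      =   F\bigl(\tau_0^{\otimes n},\mathcal{N}(\sigma)^{\otimes n}\bigr).
\]
Second, multiplicativity of the fidelity under tensor products (which is immediate from Uhlmann's theorem, or already visible on pure states) yields
\[
   F\bigl(\tau_0^{\otimes n},\mathcal{N}(\sigma)^{\otimes n}\bigr)
     = F\bigl(\tau_0,\mathcal{N}(\sigma)\bigr)^{n}.
\]
Squaring and substituting into~(\ref{eq:ps-fidelity}) produces exactly the announced bound, using that $\sigma^{\otimes n}\geq 0$ so the pointwise inequality between the nonnegative scalar weights lifts to a matrix inequality after integration.

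There is essentially no obstacle here beyond recognising that the two elementary properties of $F$ above suffice; the combinatorial prefactor is inherited verbatim from Theorem~\ref{th:ps-mixed} (via the identification of $\mu$ with the Hilbert--Schmidt uniform measure, and the bound $\binom{n+d^2-1}{n}\leq (n+1)^{d^2}$). The only point worth double-checking is the convention for the fidelity, so that the exponents match: with $F(\alpha,\beta)=\|\sqrt{\alpha}\sqrt{\beta}\|_1$, one has $F(\proj{\theta},\proj{\psi}^{\otimes n})=|\braket{\theta}{\psi^{\otimes n}}|$, which is consistent with the square in~(\ref{eq:ps-fidelity}) and with the power $2n$ appearing in the corollary.
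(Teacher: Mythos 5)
Your proposal is correct and coincides with the paper's own argument: both deduce the bound from inequality~(\ref{eq:ps-fidelity}) via monotonicity of the fidelity under the CPTP map $\mathcal{N}^{\otimes n}$ together with its multiplicativity on tensor products. The extra care you take with the fidelity convention and with lifting the scalar inequality to an operator inequality is sound but does not change the route.
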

\begin{proof}
This follows directly from inequality \eqref{eq:ps-fidelity}
by monotonicity of the fidelity under the CPTP map $\mathcal{N}$,
and by multiplicativity of the fidelity on tensor products.
\end{proof}

This especially implies that, under the hypotheses of
Corollary \ref{cor:product-image}, we have: for any $0<\delta<1$,
setting
$\mathcal{K}_{\delta}=\left\{ \sigma\in\mathcal{D}(\cH) \st F\left(\tau_0,\mathcal{N}(\sigma)\right)\geq 1-\delta\right\}$,
\[
   \rho \leq (n+1)^{3d^2} \left( \int_{\sigma\in\mathcal{K}_{\delta}} \sigma^{\otimes n}\mathrm{d}\mu(\sigma)
        + (1-\delta)^{2n}\int_{\sigma\notin\mathcal{K}_{\delta}} \sigma^{\otimes n}\mathrm{d}\mu(\sigma) \right).
\]
Such flexible de Finetti reduction, for states which satisfy the constraint of
being sent to a certain tensor power state by a certain tensor power
CPTP map, has already been fruitfully applied, for instance in the context of
zero-error communication via quantum channel~\cite{DSW}.

\medskip
Another linear constraint is that of a fixed point equation:
\begin{corollary}
  \label{cor:fixed-point}
  Let $\mathcal{N}:\mathcal{L}(\cH)\rightarrow\mathcal{L}(\cH)$ be a quantum channel, with $d=|\cH|<+\infty$.
  Assume that $\rho$ is a symmetric state on $\cH^{\otimes n}$, which
  is additionally satisfying $\mathcal{N}^{\otimes n}(\rho)=\rho$. Then,
  \[
     \rho \leq (n+1)^{3d^2} \int_{\sigma\in\mathcal{D}(\cH)}
         F\left(\rho,\mathcal{N}(\sigma)^{\otimes n}\right)^2 \mathcal{N}(\sigma)^{\otimes n}\mathrm{d}\mu(\sigma).
  \]
\end{corollary}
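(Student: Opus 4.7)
The plan is to derive the corollary by combining the unconstrained flexible de Finetti reduction \eqref{eq:ps-fidelity} with the fixed-point identity $\mathcal{N}^{\otimes n}(\rho) = \rho$, using the latter twice: once on the two sides of the operator inequality, and once inside the fidelity weight of the integrand.

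First, I would apply the CPTP map $\mathcal{N}^{\otimes n}$ to both sides of \eqref{eq:ps-fidelity}. Since completely positive maps preserve the L\"{o}wner order and, being linear, commute with the integral over $\sigma$, the inequality is preserved. On the left-hand side, the fixed-point hypothesis collapses $\mathcal{N}^{\otimes n}(\rho)$ back to $\rho$; on the right-hand side, $\mathcal{N}^{\otimes n}(\sigma^{\otimes n})$ simplifies to $\mathcal{N}(\sigma)^{\otimes n}$. This already produces the operator structure of the target bound,
\[
\rho \leq (n+1)^{3d^2}\int_{\sigma\in\mathcal{D}(\cH)} F\left(\rho,\sigma^{\otimes n}\right)^2\,\mathcal{N}(\sigma)^{\otimes n}\,\mathrm{d}\mu(\sigma),
\]
except that the scalar weight is still attached to the ``wrong'' tensor power state $\sigma^{\otimes n}$.

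To upgrade the weight, I would invoke monotonicity of the fidelity under the CPTP map $\mathcal{N}^{\otimes n}$ together with the fixed-point identity:
\[
F\left(\rho,\sigma^{\otimes n}\right) \leq F\left(\mathcal{N}^{\otimes n}(\rho),\mathcal{N}^{\otimes n}(\sigma^{\otimes n})\right) = F\left(\rho,\mathcal{N}(\sigma)^{\otimes n}\right).
\]
Both sides are nonnegative and $\mathcal{N}(\sigma)^{\otimes n}$ is positive semidefinite, so replacing $F(\rho,\sigma^{\otimes n})^2$ by the larger quantity $F(\rho,\mathcal{N}(\sigma)^{\otimes n})^2$ inside the integral preserves the matrix inequality and yields exactly the announced bound.

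There is no serious analytic obstacle here; the whole argument is a two-step manipulation on top of Theorem \ref{th:ps-mixed}. The one point to watch is that fidelity monotonicity must be applied with $\mathcal{N}^{\otimes n}$ acting on both arguments simultaneously, monotonicity in a single slot being false in general. The fixed-point hypothesis is precisely what allows us to insert $\mathcal{N}^{\otimes n}$ in front of $\rho$ for free, so that the two-slot monotonicity step transforms $\sigma^{\otimes n}$ into $\mathcal{N}(\sigma)^{\otimes n}$ inside the fidelity while leaving $\rho$ untouched.
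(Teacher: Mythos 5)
Your proposal is correct and follows exactly the paper's route: apply $\mathcal{N}^{\otimes n}$ to both sides of \eqref{eq:ps-fidelity}, use the fixed-point hypothesis to collapse the left-hand side, and invoke monotonicity of the fidelity under CPTP maps (with the fixed point again supplying $\mathcal{N}^{\otimes n}(\rho)=\rho$ in the first slot) to upgrade the weight to $F\bigl(\rho,\mathcal{N}(\sigma)^{\otimes n}\bigr)^2$. The paper states this in one line; your write-up is just a more explicit version of the same two-step argument.
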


\begin{proof}
Apply $\mathcal{N}^{\otimes n}$ on both sides of inequality \eqref{eq:ps-fidelity},
and use once more the monotonicity of the fidelity under the CPTP map $\mathcal{N}$.
\end{proof}

This means that, under the assumptions of Corollary \ref{cor:fixed-point}, there actually exists a probability measure $\widetilde{\mu}$ over the set of states on $\cH$ which belong to the range of $\mathcal{N}$ such that
\begin{equation}
  \label{eq:fixed-point}
  \rho \leq (n+1)^{3d^2} \int_{\sigma\in\mathrm{Range}(\mathcal{N})}
      F\left(\rho,\sigma^{\otimes n}\right)^2 \sigma^{\otimes n}\mathrm{d}\widetilde{\mu}(\sigma).
\end{equation}

A case of particular interest for equation \eqref{eq:fixed-point} is the following. Let $G$ be a subgroup of the unitary group on $\cH$, equipped with its Haar measure $\mu_G$ (unique normalised left and right invariant measure over $G$). Its associated twirl is the quantum channel $\mathcal{T}_G:\mathcal{L}(\cH)\rightarrow\mathcal{L}(\cH)$ defined by
\[ \mathcal{T}_G:\sigma\mapsto\int_{U\in G}U\sigma U^{\dagger}\mathrm{d}\mu_G(U). \]
The range of $\mathcal{T}_G$ is then precisely the set of states on $\cH$ in the commutant of $G$, i.e.~\[ \mathcal{K}_G=\left\{\sigma\in\mathcal{D}(\cH) \st \forall\ U\in G,\ [\sigma,U]=0 \right\}. \]
Hence, there exists a probability measure $\widetilde{\mu}$ over $\mathcal{K}_G$ such that, if $\rho$ is a symmetric state on $\cH^{\otimes n}$ satisfying $\mathcal{T}_G^{\otimes n}(\rho)=\rho$, then
\[
  \rho \leq (n+1)^{3d^2} \int_{\sigma\in\mathcal{K}_G}
         F\left(\rho,\sigma^{\otimes n}\right)^2 \sigma^{\otimes n}\mathrm{d}\widetilde{\mu}(\sigma).
\]

Another situation where equation \eqref{eq:fixed-point} might be especially useful is when $\mathcal{N}$ is a quantum-classical channel, so that its range can be identified with the set of classical probability distributions. We get in that case the corollary below.
\begin{corollary}
  \label{cor:probability}
  Let $\mathcal{X}$ be a finite alphabet and let $P_{\mathcal{X}^n}$ be a
  symmetric probability distribution on $\mathcal{X}^n$. There exists
  a universal probability measure $\mathrm{d}Q_{\mathcal{X}}$ over the set of
  probability distributions on $\mathcal{X}$ such that
  \[
    P_{\mathcal{X}^n} \leq (n+1)^{3|\mathcal{X}|^2} \int_{Q_{\mathcal{X}}}
              F\left(P_{\mathcal{X}^n},Q_{\mathcal{X}}^{\otimes n}\right)^2
              Q_{\mathcal{X}}^{\otimes n}\mathrm{d}Q_{\mathcal{X}},
  \]
  where the inequality sign signifies point-wise inequality between probability
  distributions on $\mathcal{X}^n$.
\end{corollary}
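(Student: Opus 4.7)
The plan is to reduce the classical statement to the quantum fixed-point corollary (Corollary \ref{cor:fixed-point}) by means of the standard dephasing channel, so that ``classical'' becomes synonymous with ``diagonal in a fixed basis''. Concretely, I would set $\cH=\C^{|\cX|}$ with a distinguished orthonormal basis $\{\ket{x}:x\in\cX\}$, and associate to the symmetric distribution $P_{\cX^n}$ the diagonal (hence symmetric, in the permutation sense) state
\[
  \rho = \sum_{x^n\in\cX^n} P_{\cX^n}(x^n)\proj{x^n} \quad\text{on } \cH^{\otimes n}.
\]
I would then introduce the completely dephasing channel $\mathcal{N}:\sigma\mapsto\sum_{x\in\cX}\bra{x}\sigma\ket{x}\proj{x}$, which is the prototypical quantum-classical channel. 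The key point is that $\rho$ is already diagonal, so it satisfies the fixed-point equation $\mathcal{N}^{\otimes n}(\rho)=\rho$ trivially.

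Next, I would apply Corollary \ref{cor:fixed-point} (or equivalently inequality \eqref{eq:fixed-point}), with $d=|\cX|$, to obtain a probability measure $\widetilde\mu$ supported on $\mathrm{Range}(\mathcal{N})$ such that
\[
  \rho \leq (n+1)^{3|\cX|^2} \int_{\sigma\in\mathrm{Range}(\mathcal{N})} F\!\left(\rho,\sigma^{\otimes n}\right)^2 \sigma^{\otimes n}\,\mathrm{d}\widetilde\mu(\sigma).
\]
Since $\mathrm{Range}(\mathcal{N})$ consists exactly of the states diagonal in the basis $\{\ket{x}\}$, there is a natural bijection between $\sigma\in\mathrm{Range}(\mathcal{N})$ and probability distributions $Q_{\cX}$ on $\cX$, given by $\sigma=\sum_x Q_{\cX}(x)\proj{x}$. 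Pushing $\widetilde\mu$ forward under this bijection yields the desired universal measure $\mathrm{d}Q_{\cX}$.

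Finally, I would translate the operator inequality into a pointwise one. Because both sides are diagonal in the product basis $\{\ket{x^n}\}_{x^n\in\cX^n}$, taking the $\bra{x^n}\cdot\ket{x^n}$ matrix element of the inequality preserves it and turns it into pointwise comparison of probability distributions. At the same time, the quantum fidelity between two diagonal states coincides with the classical Bhattacharyya coefficient, so
\[
  F\!\left(\rho,\sigma^{\otimes n}\right) = \sum_{x^n}\sqrt{P_{\cX^n}(x^n)\,Q_{\cX}^{\otimes n}(x^n)} = F\!\left(P_{\cX^n},Q_{\cX}^{\otimes n}\right),
\]
so no information is lost in the translation. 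Putting these observations together gives exactly the announced inequality. The proof is essentially a dictionary translation and contains no hard step; the only mild subtlety is checking that the range of $\mathcal{N}$ is naturally parameterised by probability distributions on $\cX$ and that fidelity restricts correctly under the diagonal embedding, both of which are routine.
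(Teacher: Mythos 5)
Your proposal is correct and follows essentially the same route as the paper: embed $P_{\mathcal{X}^n}$ as a diagonal state, observe it is a fixed point of the tensor power of the completely dephasing channel, apply Corollary \ref{cor:fixed-point}, and read off the pointwise inequality from the diagonal matrix elements. Your explicit checks (that the range of $\mathcal{N}$ is parameterised by distributions on $\mathcal{X}$ and that quantum fidelity of diagonal states reduces to the classical one) are exactly the "by the way $\rho$ and $\mathcal{N}$ have been designed" step the paper leaves implicit.
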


\begin{proof}
This is a special case of Corollary \ref{cor:fixed-point}.
Indeed, we know that we can make the identification $\mathcal{X}\equiv\{1,\ldots,d\}$,
where $d=|\mathcal{X}|$.  So let $\cH$ be a $d$-dimensional Hilbert space, and denote by $\{\ket{1},\ldots,\ket{d}\}$ an orthonormal basis of $\cH$. We can then define the ``classical'' state $\rho$ on $\cH^{\otimes n}$ by
\[
  \rho= \sum_{1\leq x_1,\ldots,x_n\leq d} P(x_1,\ldots,x_n)\proj{x_1\otimes\cdots\otimes x_n},
\]
and the quantum-classical channel $\mathcal{N}:\mathcal{L}(\cH)\rightarrow\mathcal{L}(\cH)$ by
\[
  \mathcal{N}:\sigma \mapsto \sum_{1\leq x\leq d} Q_{\sigma}(x)\proj{x}
                             = \sum_{1\leq x\leq d} \proj{x}\sigma\proj{x}.
\]
By assumption on $P$, $\rho$ is a symmetric state on $\cH^{\otimes n}$, which is additionally, by construction, a fixed point of $\mathcal{N}^{\otimes n}$. Hence, by Corollary \ref{cor:fixed-point},
\[ \rho \leq (n+1)^{3d^2} \int_{\sigma\in\mathcal{D}(\cH)} F\left(\rho,\mathcal{N}(\sigma)^{\otimes n}\right)^2 \mathcal{N}(\sigma)^{\otimes n}\mathrm{d}\mu(\sigma). \]
By the way $\rho$ and $\mathcal{N}$ have been designed, this actually translates into the point-wise inequality
\[ \forall\ 1\leq x_1,\ldots,x_n\leq d,\ P(x_1,\ldots,x_n) \leq (n+1)^{3d^2} \int_{\sigma\in\mathcal{D}(\cH)} F\left(P,Q_{\sigma}^{\otimes n}\right)^2 Q_{\sigma}(x_1)\cdots Q_{\sigma}(x_n)\mathrm{d}\mu\left(Q_{\sigma}\right), \]
which is exactly the announced result.
\end{proof}

This flexible de Finetti reduction for probability distributions turns out
to be especially useful when studying the parallel repetition of multi-player
non-local games, as exemplified in \cite{LaW2}.

\begin{remark}
Note that all these results generalize to non-normalized permutation
invariant positive semidefinite operators on finite-dimensional spaces (or positive
distributions on finite alphabets). One just has to extend the usual
definition of the fidelity by setting $F(M,N)=\|\sqrt{M}\sqrt{N}\|_1$
for any positive semidefinite operators (or positive distributions) $M,N$.
\end{remark}

\subsection{On to convex constraints?}
\label{subsec:convex-constraints}
We just saw that, in the case where the symmetric state $\rho$ under
consideration is additionally known to satisfy certain linear constraints,
it is possible to upper bound it by a de Finetti operator where
either no or exponentially small weight is given to tensor power
states which do not satisfy this same constraint. But what about
the case where the a priori information on $\rho$ is that it
belongs or not to a certain convex subset of states? This is the
question we investigate in the sequel, focussing first in
Sections \ref{sec:sep1} and \ref{sec:sep2} on the paradigmatic example
of the set of separable states, and then describing in
Section \ref{sec:general} the general setting in which similar conclusions hold.

\section{Exponential decay and concentration of $h_{sep}$ via de Finetti reduction approach}
\label{sec:sep1}

As we just mentioned, we will now be interested for a while in the case where the underlying Hilbert space is a tensor product Hilbert space $\cH = \mathrm{A}\otimes \mathrm{B}$, and the kind of symmetric states on $\cH^{\otimes n}$ that we will look at are those which additionally satisfy the (convex but non-linear) constraint of being separable across the bipartite cut $\A^{\otimes n}{:}\B^{\otimes n}$. For such a state $\rho$, one can of course still write down a de Finetti reduction of the form
\[ \rho \leq (n+1)^{3|\A|^2|\B|^2} \int_{\sigma\in\mathcal{D}(\mathrm{A}\otimes\mathrm{B})}
                     F\left(\rho,\sigma^{\otimes n}\right)^2 \sigma^{\otimes n}\,\mathrm{d}\mu(\sigma). \]
And what we would like to understand is whether it is possible to argue that only the states $\sigma^{\otimes n}$ which are such that $\sigma$ is separable across the bipartite cut $\A{:}\B$ are given a non exponentially small weight in this integral representation. As we shall see, this question is especially relevant when analysing the multiplicative behaviour of the support function of the set of biseparable states.

So let us specify a bit what we have in mind. Given a positive operator $M$ on $\mathrm{A}\otimes \mathrm{B}$, its maximum overlap with states which are
separable across the bipartite cut $\mathrm{A}{:}\mathrm{B}$, which we denote by $\mathcal{S}(\mathrm{A}{:}\mathrm{B})$, is defined as
\[
  h_{sep}(M) = \sup_{\sigma\in\mathcal{S}(\mathrm{A}{:}\mathrm{B})} \Tr(M\sigma).
\]
Here, we are interested in understanding how this quantity behaves
under tensoring. Concretely, this means that we want to know,
for any $n\in\N$, how $h_{sep}(M^{\otimes n})$ relates to $h_{sep}(M)$
(where the former quantity is defined as the maximum overlap of $M^{\otimes n}$ with states which are separable across the bipartite cut $\mathrm{A}^{\otimes n}{:}\mathrm{B}^{\otimes n}$). Because $h_{sep}$ is linear homogeneous in its argument, we can always rescale $M$ by a positive constant such that $0\leq M \leq \Id$, meaning that $M$ can be interpreted as a POVM element of the binary test with operators $(M,\Id-M)$. We shall make this assumption throughout from now on.
Then, it is easy to see that, for any $n\in\N$, we have the inequalities
\begin{equation}
  \label{eq:h_SEP-h_SEPn}
  h_{sep}(M)^n \leq h_{sep}(M^{\otimes n}) \leq h_{sep}(M) \leq 1.
\end{equation}
But in the case where $h_{sep}(M)<1$, the gap between the lower and upper bounds in
equation \eqref{eq:h_SEP-h_SEPn} grows exponentially with $n$, making these inequalities very little
informative.

This problem is interesting in itself, but also because it connects to plethora of others, some of them even outside the purely quantum information range of applications. The reader is referred to \cite{HM} for a full list of problems which are exactly or approximately equivalent to estimating $h_{sep}$. Two notable applications of $h_{sep}$ arise in quantum computing and in quantum Shannon theory: The first is to $\mathrm{QMA}(2)$, the class of quantum Merlin-Arthur interactive proof systems with two unentangled provers. The setting is that a verifier requires states $\alpha$ and $\beta$ from separate provers which are assumed to be computationally unlimited, and then performs a binary test with POVM $(M,\Id-M)$ on the separable state $\alpha\otimes\beta$. The maximum probability of passing the test that the provers can achieve, evidently equals precisely $h_{sep}(M)$. For complexity theoretic considerations (in particular the so-called soundness gap amplification) it is important to understand how well many instances of the same test, performed in parallel, can be passed -- either all $n$, leading to $h_{sep}(M^{\otimes n})$, or $t$ out of $n$, where $t > n h_{sep}(M)$. The second application appears in the problem of minimum output entropies of quantum channels, and their asymptotic behaviour. Namely, a quantum channel $\mathcal{N} : \mathcal{L}(\mathrm{A}) \rightarrow\mathcal{L}(\mathrm{B})$ can be represented in Stinespring form $\mathcal{N}(\rho) = \Tr_{\mathrm{E}} (V \rho V^{\dagger})$, with an isometry $V:\mathrm{A}\hookrightarrow \mathrm{B} \otimes \mathrm{E}$. Its minimum output R\'{e}nyi $p$-entropy is given by
\[ \widehat{S}_p(\mathcal{N}) = \min_{\rho\in\mathcal{D}(\mathrm{B})} S_p(\mathcal{N}(\rho)),\ \text{where}\ \forall\ \sigma\in\mathcal{D}(\mathrm{A}),\ S_p(\sigma) = \frac{1}{1-p}\log \Tr\sigma^p. \]
For $p=1$, taking the limit, we recover the von Neumann entropy, while for $p=\infty$, $S_{\infty}(\sigma) = -\log\| \sigma \|_\infty$. From this, it is not hard to see that, with $M = VV^{\dagger}$ the projector onto the range of $V$, i.e.~the subspace $V(\mathrm{A}) \subset \mathrm{B}\otimes \mathrm{E}$, we have $\widehat{S}_{\infty}(\mathcal{N}) = -\log h_{sep}(M)$. In quantum Shannon theory, the asymptotic behaviour of $\widehat{S}_p(\mathcal{N}^{\otimes n})$ is of great interest.

\subsection{Some general facts about ``filtered by measurements'' distance measures}

We need to introduce first a few definitions and properties regarding ``filtered by measurements'' distance measures.

Let $\cH$ be a Hilbert space and let $\mathbf{M}$ be a set of POVMs on $\cH$. For any states $\rho,\sigma$ on $\cH$, we define their measured by $\mathbf{M}$ trace-norm distance as
\[ D_{\mathbf{M}}(\rho,\sigma)=\sup_{\mathcal{M}\in\mathbf{M}}\frac{1}{2}\left\|\mathcal{M}(\rho)-\mathcal{M}(\sigma)\right\|_1, \]
and their measured by $\mathbf{M}$ fidelity distance as
\[ F_{\mathbf{M}}(\rho,\sigma)=\inf_{\mathcal{M}\in\mathbf{M}}F\left(\mathcal{M}(\rho),\mathcal{M}(\sigma)\right). \]
We have the well-known relations between these two distances (see e.g.~\cite{NC}, Chapter 9)
\begin{equation} \label{eq:trace-fidelity}
1-F_{\mathbf{M}}\leq D_{\mathbf{M}} \leq \left(1-F_{\mathbf{M}}^2\right)^{1/2}.
\end{equation}
We further define, for any set of states $\mathcal{K}$ on $\cH$, the measured by $\mathbf{M}$ trace-norm distance of $\rho$ to $\mathcal{K}$ as
\[ D_{\mathbf{M}}\left(\rho,\mathcal{K}\right)=\inf_{\sigma\in\mathcal{K}}D_{\mathbf{M}}(\rho,\sigma), \]
and the measured by $\mathbf{M}$ fidelity distance of $\rho$ to $\mathcal{K}$ as
\[ F_{\mathbf{M}}\left(\rho,\mathcal{K}\right)=\sup_{\sigma\in\mathcal{K}}F_{\mathbf{M}}(\rho,\sigma). \]

In the sequel, we shall consider the case where $\cH=\mathrm{A}\otimes \mathrm{B}$ is a tensor product Hilbert space, with $|\mathrm{A}|,|\mathrm{B}|<+\infty$. In this setting, we denote by $\mathcal{S}$ the set of separable states and by $\mathbf{SEP}$ the set of separable POVMs on $\cH$ (in the bipartite cut $\mathrm{A}{:}\mathrm{B}$).

\begin{lemma} \label{lemma:F_SEP}
Let $\mathrm{A}_1,\mathrm{B}_1,\mathrm{A}_2,\mathrm{B}_2$ be Hilbert spaces, and let $\rho_1$ be a state on $\mathrm{A}_1\otimes \mathrm{B}_1$, $\rho_2$ be a state on $\mathrm{A}_2\otimes \mathrm{B}_2$. Then,
\[ F\big(\rho_1\otimes\rho_2,\mathcal{S}(\mathrm{A}_1\mathrm{A}_2{:}\mathrm{B}_1\mathrm{B}_2)\big) \leq F_{\mathbf{SEP}}\big(\rho_1,\mathcal{S}(\mathrm{A}_1{:}\mathrm{B}_1)\big) F\big(\rho_2,\mathcal{S}(\mathrm{A}_2{:}\mathrm{B}_2)\big). \]
\end{lemma}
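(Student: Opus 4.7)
\smallskip
\noindent\textbf{Proof plan.}
Fix $\tau\in\mathcal{S}(\mathrm{A}_1\mathrm{A}_2{:}\mathrm{B}_1\mathrm{B}_2)$; the strategy is to upper bound $F(\rho_1\otimes\rho_2,\tau)$ in a form that splits into a factor depending only on $\rho_1$ (measured by a separable POVM) and a factor depending only on $\rho_2$, and then to take the supremum in $\tau$. The main tool is the monotonicity of the fidelity together with the following key structural observation: if $\mathcal{M}_1=\{M_k\}$ is a \emph{separable} POVM on $\mathrm{A}_1\otimes\mathrm{B}_1$ and $\tau\in\mathcal{S}(\mathrm{A}_1\mathrm{A}_2{:}\mathrm{B}_1\mathrm{B}_2)$, then for each outcome $k$ the (sub-normalized) conditional state $\widetilde{\sigma}_k:=\Tr_{\mathrm{A}_1\mathrm{B}_1}[(M_k\otimes\Id_{\mathrm{A}_2\mathrm{B}_2})\tau]$ still lies in the cone of separable operators on $\mathrm{A}_2\otimes\mathrm{B}_2$. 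This is a direct computation: writing $\tau=\sum_i p_i\alpha_i\otimes\beta_i$ (with $\alpha_i$ on $\mathrm{A}_1\mathrm{A}_2$, $\beta_i$ on $\mathrm{B}_1\mathrm{B}_2$) and $M_k=\sum_j E_{k,j}\otimes F_{k,j}$ (PSD components), $\widetilde{\sigma}_k$ becomes a sum of tensor products of PSD operators on $\mathrm{A}_2$ and $\mathrm{B}_2$.

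With this in hand, apply the instrument $\mathcal{M}_1\otimes\Id_{\mathrm{A}_2\mathrm{B}_2}$, which is CPTP. By monotonicity of the fidelity,
\[
F(\rho_1\otimes\rho_2,\tau)\ \leq\ F\bigl((\mathcal{M}_1\otimes\Id)(\rho_1\otimes\rho_2),\,(\mathcal{M}_1\otimes\Id)(\tau)\bigr).
\]
The two arguments are classical-quantum states sharing the same classical register, namely $\sum_k \Tr(M_k\rho_1)\proj{k}\otimes\rho_2$ and $\sum_k \proj{k}\otimes\widetilde{\sigma}_k$. For such CQ states the fidelity decomposes as $\sum_k\sqrt{p_k q_k}\,F(\xi_k,\eta_k)$, giving
\[
F(\rho_1\otimes\rho_2,\tau)\ \leq\ \sum_k \sqrt{\Tr(M_k\rho_1)\,\Tr(\widetilde{\sigma}_k)}\,F\!\left(\rho_2,\frac{\widetilde{\sigma}_k}{\Tr(\widetilde{\sigma}_k)}\right).
\]

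By the structural observation, each normalized $\widetilde{\sigma}_k/\Tr(\widetilde{\sigma}_k)$ lies in $\mathcal{S}(\mathrm{A}_2{:}\mathrm{B}_2)$, so each fidelity in the sum is bounded by $F(\rho_2,\mathcal{S}(\mathrm{A}_2{:}\mathrm{B}_2))$. Pulling that factor out and recognizing the remaining sum as the classical Bhattacharyya fidelity $\sum_k\sqrt{\Tr(M_k\rho_1)\Tr(M_k\tau_{\mathrm{A}_1\mathrm{B}_1})}=F(\mathcal{M}_1(\rho_1),\mathcal{M}_1(\tau_{\mathrm{A}_1\mathrm{B}_1}))$, where $\tau_{\mathrm{A}_1\mathrm{B}_1}=\Tr_{\mathrm{A}_2\mathrm{B}_2}\tau\in\mathcal{S}(\mathrm{A}_1{:}\mathrm{B}_1)$ (separability is preserved under partial trace), we get
\[
F(\rho_1\otimes\rho_2,\tau)\ \leq\ F(\mathcal{M}_1(\rho_1),\mathcal{M}_1(\tau_{\mathrm{A}_1\mathrm{B}_1}))\,F(\rho_2,\mathcal{S}(\mathrm{A}_2{:}\mathrm{B}_2))
\]
for every separable POVM $\mathcal{M}_1$. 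Taking the infimum over $\mathcal{M}_1\in\mathbf{SEP}$ gives $F_{\mathbf{SEP}}(\rho_1,\tau_{\mathrm{A}_1\mathrm{B}_1})\leq F_{\mathbf{SEP}}(\rho_1,\mathcal{S}(\mathrm{A}_1{:}\mathrm{B}_1))$, and taking the supremum over $\tau$ in the LHS yields the announced inequality.

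\smallskip
\noindent\textbf{Where the work sits.}
The only non-routine step is the separability-preservation under a separable partial measurement; everything else is bookkeeping with monotonicity of the fidelity and the CQ fidelity formula. It is crucial here that $\mathcal{M}_1$ be taken in $\mathbf{SEP}$ rather than merely in LOCC or PPT: this is exactly what allows one to certify, through a direct decomposition, that the post-measurement state on $\mathrm{A}_2\otimes\mathrm{B}_2$ remains separable, and hence to invoke $F(\rho_2,\mathcal{S}(\mathrm{A}_2{:}\mathrm{B}_2))$ on the second factor without any measurement restriction.
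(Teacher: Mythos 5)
Your proof is correct and follows essentially the same route as the paper's: apply a separable POVM on the first pair of systems, use monotonicity and the classical-quantum decomposition of the fidelity, and exploit the fact that the conditional post-measurement states on $\mathrm{A}_2\otimes\mathrm{B}_2$ remain separable (which is exactly where separability of $\mathcal{M}_1$ enters). The only, welcome, difference is that you fix the competitor $\tau$ before optimizing over $\mathcal{M}_1$, which handles the $\sup/\inf$ ordering implicit in $F_{\mathbf{SEP}}\big(\rho_1,\mathcal{S}(\mathrm{A}_1{:}\mathrm{B}_1)\big)$ a bit more cleanly than the paper's presentation.
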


\begin{proof}
The proof is directly inspired from \cite{Piani}, adapted here to the case of fidelities rather than relative entropies.

Let $\mathcal{M}_1\equiv \big(M_1^{(i)}\big)_{i\in I}\in\mathbf{SEP}(\mathrm{A}_1{:}\mathrm{B}_1)$. Then, by monotonicity of the fidelity under the CPTP map $\mathcal{M}_1\otimes\mathcal{I}_2$, we have
\begin{align*} \sup_{\sigma_{12}\in\mathcal{S}(\mathrm{A}_1\mathrm{A}_2{:}\mathrm{B}_1\mathrm{B}_2)} F\left(\rho_1\otimes\rho_2,\sigma_{12}\right) \leq & \sup_{\sigma_{12}\in\mathcal{S}(\mathrm{A}_1\mathrm{A}_2{:}\mathrm{B}_1\mathrm{B}_2)} F\left(\mathcal{M}_1\otimes\mathcal{I}_2(\rho_1\otimes\rho_2),\mathcal{M}_1\otimes\mathcal{I}_2(\sigma_{12})\right)\\
= & F\left(\mathcal{M}_1\otimes\mathcal{I}_2(\rho_1\otimes\rho_2),\mathcal{M}_1\otimes\mathcal{I}_2(\widetilde{\sigma}_{12})\right), \end{align*}
for some $\widetilde{\sigma}_{12}\in\mathcal{S}(\mathrm{A}_1\mathrm{A}_2{:}\mathrm{B}_1\mathrm{B}_2)$. And,
\[ F\left(\mathcal{M}_1\otimes\mathcal{I}_2(\rho_1\otimes\rho_2), \mathcal{M}_1\otimes\mathcal{I}_2(\widetilde{\sigma}_{12})\right) = \sum_{i\in I}\sqrt{\Tr\left(M_1^{(i)}\rho_1\right)}\sqrt{\Tr\left(M_1^{(i)}\widetilde{\sigma}_1\right)} F\left(\rho_2,\widetilde{\sigma}_2^{(i)}\right), \]
where $\widetilde{\sigma}_1=\Tr_{\mathrm{A}_2\mathrm{B}_2}\left(\widetilde{\sigma}_{12}\right)\in\mathcal{S}(\mathrm{A}_1{:}\mathrm{B}_1)$ and for all $i\in I$, $\widetilde{\sigma}_2^{(i)}= \Tr_{\mathrm{A}_1\mathrm{B}_1}\big(M_1^{(i)}\otimes\Id_2 \widetilde{\sigma}_{12}\big)/\Tr_{\mathrm{A}_1\mathrm{B}_1}\big(M_1^{(i)}\widetilde{\sigma}_1\big)\in\mathcal{S}(\mathrm{A}_2{:}\mathrm{B}_2)$. Hence, for all $i\in I$, $F\big(\rho_2,\widetilde{\sigma}_2^{(i)}\big)\leq \sup_{\sigma_2\in\mathcal{S}(\mathrm{A}_2{:}\mathrm{B}_2)} F\left(\rho_2,\sigma_2\right)$, and subsequently
\begin{align*}
\sum_{i\in I}\sqrt{\Tr\left(M_1^{(i)}\rho_1\right)}\sqrt{\Tr\left(M_1^{(i)}\widetilde{\sigma}_1\right)} F\left(\rho_2,\widetilde{\sigma}_2^{(i)}\right) \leq & \left(\sup_{\sigma_2\in\mathcal{S}_{\mathrm{A}_2:\mathrm{B}_2}} F\left(\rho_2,\sigma_2\right) \right) F\left(\mathcal{M}_1(\rho_1),\mathcal{M}_1(\widetilde{\sigma}_1)\right)\\
\leq & \left(\sup_{\sigma_2\in\mathcal{S}(\mathrm{A}_2{:}\mathrm{B}_2)} F\left(\rho_2,\sigma_2\right)\right) \left(\sup_{\sigma_1\in\mathcal{S}(\mathrm{A}_1{:}\mathrm{B}_1)} F\left(\mathcal{M}_1(\rho_1),\mathcal{M}_1(\sigma_1)\right)\right).
\end{align*}
We thus have shown that, for any $\mathcal{M}_1\in\mathbf{SEP}(\mathrm{A}_1{:}\mathrm{B}_1)$,
\[ F\left(\rho_1\otimes\rho_2,\mathcal{S}(\mathrm{A}_1\mathrm{A}_2{:}\mathrm{B}_1\mathrm{B}_2)\right) \leq \left(\sup_{\sigma_1\in\mathcal{S}(\mathrm{A}_1{:}\mathrm{B}_1)} F\left(\mathcal{M}_1(\rho_1),\mathcal{M}_1(\sigma_1)\right) \right) F\left(\rho_2,\mathcal{S}(\mathrm{A}_2{:}\mathrm{B}_2)\right). \]
Taking the infimum over $\mathcal{M}_1\in\mathbf{SEP}(\mathrm{A}_1{:}\mathrm{B}_1)$, we get precisely the statement in Lemma \ref{lemma:F_SEP}.
\end{proof}

\begin{theorem}
  \label{th:F_SEP}
  Let $\mathrm{A},\mathrm{B}$ be Hilbert spaces, and let $\rho$ be a state on $\mathrm{A}\otimes\mathrm{B}$.
  Then, for any $n\in\N$,
  \[
    F\big(\rho^{\otimes n},\mathcal{S}(\mathrm{A}^n{:}\mathrm{B}^n)\big)
               \leq F_{\mathbf{SEP}}\big(\rho,\mathcal{S}(\mathrm{A}{:}\mathrm{B})\big)^n.
  \]
\end{theorem}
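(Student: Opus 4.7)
My plan is to prove Theorem \ref{th:F_SEP} by induction on $n$, with Lemma \ref{lemma:F_SEP} doing essentially all the work. The asymmetric form of that lemma, in which the first factor on the right is the \emph{measured} fidelity distance $F_{\mathbf{SEP}}$ while the second is the ordinary fidelity distance $F$, is exactly what makes an induction of this type go through.

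For the base case $n=1$, I only need $F(\rho,\mathcal{S}(\mathrm{A}{:}\mathrm{B}))\leq F_{\mathbf{SEP}}(\rho,\mathcal{S}(\mathrm{A}{:}\mathrm{B}))$. This is immediate from monotonicity of the fidelity under any CPTP map: for every $\sigma\in\mathcal{S}(\mathrm{A}{:}\mathrm{B})$ and every $\mathcal{M}\in\mathbf{SEP}(\mathrm{A}{:}\mathrm{B})$ one has $F(\rho,\sigma)\leq F(\mathcal{M}(\rho),\mathcal{M}(\sigma))$, so first taking the infimum over $\mathcal{M}$ and then the supremum over $\sigma$ delivers the inequality.

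For the inductive step, assume the bound holds at $n-1$ and apply Lemma \ref{lemma:F_SEP} to the decomposition $\rho^{\otimes n}=\rho\otimes\rho^{\otimes(n-1)}$, viewed as a state on $(\mathrm{A}\otimes\mathrm{B})\otimes(\mathrm{A}^{n-1}\otimes\mathrm{B}^{n-1})$ with the natural bipartite cut $\mathrm{A}^n{:}\mathrm{B}^n$. This yields
\[
  F\bigl(\rho^{\otimes n},\mathcal{S}(\mathrm{A}^n{:}\mathrm{B}^n)\bigr)
     \leq F_{\mathbf{SEP}}\bigl(\rho,\mathcal{S}(\mathrm{A}{:}\mathrm{B})\bigr)\cdot
          F\bigl(\rho^{\otimes(n-1)},\mathcal{S}(\mathrm{A}^{n-1}{:}\mathrm{B}^{n-1})\bigr),
\]
and inserting the inductive hypothesis on the second factor closes the induction.

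I do not expect a real obstacle here: the conceptual difficulty sits in Lemma \ref{lemma:F_SEP}, where one has to observe that restricting to separable measurements on the first system makes the post-measurement fidelity split into a classical sum of bipartite fidelities, allowing a separable ansatz for the optimizer to be constructed factor by factor. Once that is granted, Theorem \ref{th:F_SEP} is just the iterated version of this one-step splitting, and the $n$-fold product $F_{\mathbf{SEP}}(\rho,\mathcal{S}(\mathrm{A}{:}\mathrm{B}))^n$ is the only thing that survives because each inductive application strips off a single $F_{\mathbf{SEP}}$ factor.
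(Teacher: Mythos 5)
Your proof is correct and follows the same route as the paper, which simply states that the theorem is obtained by iterating Lemma \ref{lemma:F_SEP}; your induction (with the base case handled by monotonicity of the fidelity under measurements) is just that iteration written out carefully.
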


\begin{proof}
Theorem \ref{th:F_SEP} is a direct corollary of Lemma \ref{lemma:F_SEP}, obtained by iterating the latter.
\end{proof}

\subsection{Weak multiplicativity of $h_{sep}$}
With these facts prepared, we can now derive our main theorem.

\begin{theorem}
  \label{th:h_sep-n}
  Let $M$ be an operator on the tensor product Hilbert space $\mathrm{A}\otimes \mathrm{B}$, satisfying $0 \leq M \leq \Id$, and set $r=\|M\|_2$. If $h_{sep}(M)\leq 1-\delta$, for some $0<\delta<1$,
  then for any $n\in\N$,
  \[
    h_{sep}(M^{\otimes n}) \leq \left(1-\frac{\delta^2}{5r^2}\right)^n.
  \]
% Then, $h\left(P^{\otimes n},\mathcal{S}_{(\C^d)^{\otimes n}:(\C^d)^{\otimes n}}\right)\leq\left(1-\delta^2/33d^2\right)^n$.
\end{theorem}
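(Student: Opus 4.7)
I plan to apply the flexible de Finetti reduction to an optimal separable symmetric witness, then use Theorem~\ref{th:F_SEP} to tensorize the fidelity to the separable set, and reduce the claim to a single-copy estimate. Let $\sigma\in\mathcal{S}(\A^n{:}\B^n)$ attain $h_{sep}(M^{\otimes n})=\Tr(M^{\otimes n}\sigma)$. By $S_n$-averaging (which preserves both separability and the value $\Tr(M^{\otimes n}\sigma)$, using the permutation invariance of $M^{\otimes n}$), one may assume $\sigma$ is also permutation symmetric. Applying Theorem~\ref{th:ps-mixed} to $\sigma$ and tracing against $M^{\otimes n}$ gives
\[
  h_{sep}(M^{\otimes n}) \le (n+1)^{3|\A|^{2}|\B|^{2}} \int F(\sigma,\tau^{\otimes n})^{2}\, \Tr(M\tau)^{n}\,\mathrm{d}\mu(\tau),
\]
with $\mathrm{d}\mu$ the measure on $\mathcal{D}(\A\otimes\B)$ from~(\ref{eq:ps-fidelity}). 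Since $\sigma\in\mathcal{S}(\A^n{:}\B^n)$, Theorem~\ref{th:F_SEP} yields $F(\sigma,\tau^{\otimes n})\le F(\tau^{\otimes n},\mathcal{S}(\A^n{:}\B^n))\le F_{\mathbf{SEP}}(\tau,\mathcal{S}(\A{:}\B))^{n}$, collapsing the integral to a single-copy supremum:
\[
  h_{sep}(M^{\otimes n}) \le (n+1)^{3|\A|^{2}|\B|^{2}}\,\sup_{\tau\in\mathcal{D}(\A\otimes\B)} \Bigl( F_{\mathbf{SEP}}(\tau,\mathcal{S})^{2}\, \Tr(M\tau) \Bigr)^{n}.
\]

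The heart of the proof is then a single-copy estimate: for every state $\tau$ on $\A\otimes\B$,
\[
  F_{\mathbf{SEP}}(\tau,\mathcal{S})^{2}\, \Tr(M\tau) \le 1 - \frac{\delta^{2}}{c\, r^{2}}
\]
for a suitable absolute constant $c>0$. To prove it, write $f=F_{\mathbf{SEP}}(\tau,\mathcal{S})$ and let $\sigma^{*}\in\mathcal{S}$ (nearly) attain $f$; split $\Tr(M\tau)=\Tr(M\sigma^{*})+\Tr(M(\tau-\sigma^{*}))$. The first piece is at most $h_{sep}(M)\le 1-\delta$ by definition; the second is controlled by Cauchy--Schwarz in Hilbert--Schmidt norm,
\[
  |\Tr(M(\tau-\sigma^{*}))| \le \|M\|_{2}\,\|\tau-\sigma^{*}\|_{2}=r\,\|\tau-\sigma^{*}\|_{2},
\]
combined with the relations~(\ref{eq:trace-fidelity}) and $\|A\|_{2}^{2}\le\|A\|_{1}\|A\|_{\infty}$ to translate $\|\tau-\sigma^{*}\|_{2}$ into a function of $1-f^{2}$. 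A short one-variable optimization in $f$, balancing the gain from the $f^{2}$ prefactor against the Hilbert--Schmidt error, delivers the rate $\delta^{2}/r^{2}$.

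Finally, the polynomial prefactor $(n+1)^{3|\A|^{2}|\B|^{2}}$ is absorbed into the exponential by slightly relaxing the single-copy constant from $c$ (say $c=4$) to $5$ for all sufficiently large $n$, while the trivial bound $h_{sep}(M^{\otimes n})\le h_{sep}(M)\le 1-\delta$ from~(\ref{eq:h_SEP-h_SEPn}) covers the remaining small-$n$ values. The main obstacle is the single-copy inequality itself: since $(M,\Id-M)$ is not assumed to be a separable POVM, $|\Tr(M(\tau-\sigma^{*}))|$ cannot be bounded by the measured trace-norm $D_{\mathbf{SEP}}(\tau,\sigma^{*})$ naturally dual to the $F_{\mathbf{SEP}}$ we have tensorized, forcing the detour through $\|\cdot\|_{2}$. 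This is exactly why the parameter $r=\|M\|_{2}$ enters quadratically in the denominator of the exponent, and the careful balancing of the loss is the delicate quantitative step.
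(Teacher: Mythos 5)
Your overall architecture coincides with the paper's: symmetrize the optimal separable state, apply Theorem~\ref{th:ps-mixed}, use Theorem~\ref{th:F_SEP} to make the fidelity weight decay exponentially for $\tau$ far from $\mathcal{S}$, and balance this against the gain in $\Tr(M\tau)$ for $\tau$ close to $\mathcal{S}$; your reformulation of the integral as a single-copy supremum of $F_{\mathbf{SEP}}(\tau,\mathcal{S})^2\,\Tr(M\tau)$ is a clean equivalent of the paper's split over $\mathcal{K}_\epsilon$ and its complement. However, there is a genuine gap in your single-copy estimate, at precisely the point you call ``the delicate quantitative step''. Knowing that $f=F_{\mathbf{SEP}}(\tau,\sigma^{*})$ is close to $1$ gives you, via \eqref{eq:trace-fidelity}, only that $D_{\mathbf{SEP}}(\tau,\sigma^{*})\leq(1-f^{2})^{1/2}$, i.e.\ that \emph{separable measurements} cannot distinguish $\tau$ from $\sigma^{*}$. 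By itself this does not control $\|\tau-\sigma^{*}\|_1$ (data hiding: the unmeasured trace distance can be of order $1$ while every separable measurement sees almost nothing), and the inequality $\|A\|_2^2\leq\|A\|_1\|A\|_\infty$ only bounds $\|\tau-\sigma^{*}\|_2$ by the uncontrolled $\|\tau-\sigma^{*}\|_1$, so it cannot close the loop. The missing ingredient is the nontrivial lower bound on the distinguishing power of separable POVMs, $\sup_{\mathcal{M}\in\mathbf{SEP}}\|\mathcal{M}(\rho)-\mathcal{M}(\sigma)\|_1\geq\|\rho-\sigma\|_2$ from \cite{LaW1}; this is exactly what the paper invokes to pass from $D_{\mathbf{SEP}}$ to the Hilbert--Schmidt norm, and it is the sole reason the parameter $r=\|M\|_2$ can legitimately appear in the rate even though $M$ need not be a separable POVM element. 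With that inequality inserted, your one-variable optimization does go through (up to the precise value of the constant).

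A secondary issue is the removal of the polynomial prefactor. The factor $(n+1)^{3|\A|^{2}|\B|^{2}}$ is only absorbed into a relaxation of the exponential rate once $n$ is at least of order $(r^{2}/\delta^{2})\,|\A|^{2}|\B|^{2}\log(\cdot)$, whereas the fallback $h_{sep}(M^{\otimes n})\leq 1-\delta$ from \eqref{eq:h_SEP-h_SEPn} already exceeds $(1-\delta^{2}/5r^{2})^{n}$ once $n$ is of order $5r^{2}/\delta$; a range of intermediate $n$ is left uncovered. The paper closes this with a supermultiplicativity argument: if $h_{sep}(M^{\otimes N})\geq C(1-\delta^{2}/5r^{2})^{N}$ for some $N$ and $C>1$, then $h_{sep}(M^{\otimes Nn})\geq C^{n}(1-\delta^{2}/5r^{2})^{Nn}$ contradicts the prefactored bound for large $n$, forcing $C\leq 1$ for every $N$. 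You should use this (or an equivalent) device in place of the large-$n$/small-$n$ dichotomy.
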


\begin{proof}
Let $\rho\in\mathcal{S}(\mathrm{A}^{n}{:}\mathrm{B}^{n})$. Our goal will be first of all to show that $\Tr\left(M^{\otimes n}\rho\right)\leq 2(n+1)^{3|\mathrm{A}|^2|\mathrm{B}|^2}\left(1-\delta^2/2r^2\right)^n$.
%$\Tr\left(P^{\otimes n}\rho\right)\leq 2(n+1)^{3d^4}\left(1-\delta^2/33d^2\right)^n$.
Now, observe that
\[ \Tr\left(M^{\otimes n}\rho\right) = \Tr\left( \left(\frac{1}{n!}\sum_{\pi\in\mathcal{S}_n}U_{\pi}M^{\otimes n}U_{\pi}^{\dagger}\right) \rho\right) = \Tr\left(M^{\otimes n} \left(\frac{1}{n!}\sum_{\pi\in\mathcal{S}_n}U_{\pi}^{\dagger}\rho U_{\pi}\right) \right), \]
the first equality being by $n$-symmetry of $M^{\otimes n}$ and the second one by cyclicity
of the trace. Hence, for our purposes, we may actually assume without loss of generality
that $\rho\in\mathcal{S}(\mathrm{A}^{n}{:}\mathrm{B}^{n})$ is $n$-symmetric.

Yet, if $\rho$ is an $n$-symmetric state on $\left(\mathrm{A}\otimes\mathrm{B}\right)^{\otimes n}$,
we know by Theorem \ref{th:ps-mixed} that there exists a probability measure
$\mu$ on the set of states on $\mathrm{A}\otimes\mathrm{B}$ such that
\[
  \rho \leq (n+1)^{3|\mathrm{A}|^2|\mathrm{B}|^2} \int_{\sigma\in\mathcal{D}(\mathrm{A}\otimes\mathrm{B})}
                     F\left(\rho,\sigma^{\otimes n}\right)^2 \sigma^{\otimes n}\,\mathrm{d}\mu(\sigma).
\]
So, by multiplicativity of the trace on tensor products, we get in that case
\[
  \Tr\left(M^{\otimes n}\rho\right)
      \leq (n+1)^{3|\mathrm{A}|^2|\mathrm{B}|^2} \int_{\sigma\in\mathcal{D}(\mathrm{A}\otimes\mathrm{B})}
                     F\left(\rho,\sigma^{\otimes n}\right)^2 \Tr\left(M\sigma\right)^n\,\mathrm{d}\mu(\sigma).
\]
Consequently, for any $0<\epsilon<1$, setting $\mathcal{K}_{\epsilon}=\left\{\sigma\in\mathcal{D}(\mathrm{A}\otimes\mathrm{B}) \st \left\|\sigma-\mathcal{S}(\mathrm{A}{:}\mathrm{B})\right\|_2\leq \epsilon/r\right\}$, we have, upper bounding either $F\left(\rho,\sigma^{\otimes n}\right)$ or $\Tr\left(M\sigma\right)$ by $1$,
\[ \Tr\left(M^{\otimes n}\rho\right) \leq\, (n+1)^{3|\mathrm{A}|^2|\mathrm{B}|^2} \left( \int_{\sigma\in\mathcal{K}_{\epsilon}} \Tr\left(M\sigma\right)^n \mathrm{d}\mu(\sigma) + \int_{\sigma\notin\mathcal{K}_{\epsilon}} F\left(\rho,\sigma^{\otimes n}\right)^2 \mathrm{d}\mu(\sigma)\right). \]
Now, if $\sigma\in\mathcal{K}_{\epsilon}$, this means that there exists $\tau\in\mathcal{S}(\mathrm{A}{:}\mathrm{B})$ such that $\|\sigma-\tau\|_2\leq\epsilon/r$, so that
\[
  \Tr(M\sigma) =    \Tr(M\tau)+\Tr(M(\sigma-\tau))
               \leq \Tr(M\tau)+ \|M\|_2\|\sigma-\tau\|_2
               \leq 1-\delta+\epsilon.
\]
The next to last inequality is simply by the Cauchy--Schwarz inequality,
while the last one is by assumption on $M$, $\tau$, $\sigma$. And if
$\sigma\notin\mathcal{K}_{\epsilon}$, then
\[ F\left(\rho,\sigma^{\otimes n}\right)  \leq F\left(\sigma^{\otimes n}, \mathcal{S}(\mathrm{A}^{n}{:}\mathrm{B}^{n})\right) \leq F_{\mathbf{SEP}}\left(\sigma, \mathcal{S}(\mathrm{A}{:}\mathrm{B})\right)^n \leq \left(1-\frac{\epsilon^2}{4r^2}\right)^{n/2}. \]
The first inequality is because $\rho\in\mathcal{S}(\mathrm{A}^{n}{:}\mathrm{B}^{n})$, the second one is by Theorem \ref{th:F_SEP}, and the third one is obtained by combining equation \eqref{eq:trace-fidelity} with the known lower bound $\left\|\sigma-\mathcal{S}(\mathrm{A}{:}\mathrm{B})\right\|_{\mathbf{SEP}}\geq \left\|\sigma-\mathcal{S}(\mathrm{A}{:}\mathrm{B})\right\|_2$ (see e.g.~\cite{LaW1}).

Putting everything together, we obtain in the end that for any $0<\epsilon<1$,
\begin{equation} \label{eq:epsilon} \Tr\left(M^{\otimes n}\rho\right) \leq\, (n+1)^{3|\mathrm{A}|^2|\mathrm{B}|^2} \left( (1-\delta+\epsilon)^n + \left(1-\frac{\epsilon^2}{4r^2}\right)^n \right). \end{equation}
In particular, choosing $\epsilon=2r^2\left((1+\delta/r^2)^{1/2}-1\right)$ in equation \eqref{eq:epsilon}, so that $\epsilon^2/4r^2=\delta-\epsilon\geq \delta^2/5r^2$, we get $\Tr\left(M^{\otimes n}\rho\right) \leq 2(n+1)^{3|\mathrm{A}|^2|\mathrm{B}|^2} \left(1-\delta^2/5r^2\right)^n$. And consequently
\begin{equation}
  \label{eq:h_SEPn}
  h_{sep}(M^{\otimes n}) \leq 2(n+1)^{3|\mathrm{A}|^2|\mathrm{B}|^2}\left(1-\frac{\delta^2}{5r^2}\right)^n.
\end{equation}

In order to conclude, we just need to remove the polynomial pre-factor in equation \eqref{eq:h_SEPn}.
Assume that there exists a constant $C>0$ such that
$h_{sep}(M^{\otimes N})\geq C\left(1-\delta^2/5r^2\right)^N$ for some $N\in\N$.
Then, we would have for any $n\in\N$,
\[
  h_{sep}\left(M^{\otimes Nn}\right) \geq h_{sep}(M^{\otimes N})^n
                                     \geq C^n\left(1-\frac{\delta^2}{5r^2}\right)^{Nn}.
\]
On the other hand, equation \eqref{eq:h_SEPn} says that we also have
\[
  h_{sep}\left(M^{\otimes Nn}\right) \leq 2(Nn+1)^{3|\mathrm{A}|^2|\mathrm{B}|^2} \left(1-\frac{\delta^2}{5r^2}\right)^{Nn}.
\]
Letting $n$ grow, we see that the only option to make these two inequalities
compatible is to have $C\leq 1$, which is precisely what we wanted to show.
\end{proof}

The conclusion of Theorem \ref{th:h_sep-n} had already been obtained via completely
different techniques than the one presented here (and even with slightly better constants).
However, the good thing about the de Finetti reduction approach is that it
gives, almost for free, not only this exponential decay result for the behaviour
of $h_{sep}$ under tensoring, but also some kind of concentration statement.
To be precise, assume that $M$ is an operator on $\A\otimes\B$, satisfying $0 \leq M \leq \Id$
and $h_{sep}(M)\leq 1-\delta$ for some $0<\delta<1$.
Then, $M$ can be identified with a binary test that a separable state is
guaranteed to pass only with probability $h_{sep}(M)\leq 1-\delta$,
while there exists some (entangled) state that would pass it with probability $h_{all}(M)=\|M\|_{\infty}$,
which may be $1$.
Hence, a natural question would be: performing this test $n$ times in parallel,
what is the probability that a separable state passes a certain fraction $t/n$ of them?
Such maximum probability is nothing else than $h_{sep}\left(M^{(t/n)}\right)$, where the operator $M^{(t/n)}$ on $(\A\otimes\B)^{\otimes n}$ is defined as
\[ M^{(t/n)} = \sum_{I\subset[n],\,|I|\geq t}M^{\otimes I}\otimes\Id^{\otimes I^c} . \]
Obviously, if $t < (1-\delta)n$ then the answer is asymptotically $1$,
whereas for $t=n$ the answer is $h_{sep}(M^{\otimes n})$, which decays
exponentially fast with $n$ as established in Theorem \ref{th:h_sep-n}.
But is such exponential amplification of the failing probability already
true for $t$ just slightly above $(1-\delta)n$? Theorem \ref{th:h_sep-t/n}
answers this question affirmatively.

\begin{theorem}
  \label{th:h_sep-t/n}
  Let $M$ be an operator on the tensor product Hilbert space $\mathrm{A}\otimes \mathrm{B}$,
  satisfying $0 \leq M \leq \Id$, and set $r=\|M\|_2$.
  If $h_{sep}(M)\leq 1-\delta$ for some $0<\delta<1$, then for any
  $n,t\in\N$ with $t\geq(1-\delta+\alpha)n$ for some $0<\alpha\leq\delta$, we have
  \[
    h_{sep}\left(M^{(t/n)}\right)\leq\exp\left(-n\frac{\alpha^2}{5r^2}\right).
  \]
% Then, $h\left(P^{\otimes n},\mathcal{S}_{(\C^d)^{\otimes n}:(\C^d)^{\otimes n}}\right)\leq\left(1-\delta^2/33d^2\right)^n$.
\end{theorem}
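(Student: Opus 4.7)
The approach adapts the proof of Theorem~\ref{th:h_sep-n}, replacing the deterministic tensor-power quantity $\Tr(M\sigma)^n$ by the binomial tail $\Pr_{X\sim\mathrm{Bin}(n,\Tr(M\sigma))}[X\geq t]$ that naturally arises as the expectation of $M^{(t/n)}$ on a product state, and then stripping the polynomial prefactor via a multiplicativity argument tailored to the family $M^{(t/n)}$. Since $M^{(t/n)}$ is manifestly invariant under permutations of the $n$ tensor factors, cyclicity of the trace lets me assume WLOG that the optimal separable $\rho\in\mathcal{S}(\mathrm{A}^n{:}\mathrm{B}^n)$ is itself symmetric. Applying the flexible de Finetti reduction of Theorem~\ref{th:ps-mixed} and taking the trace against $M^{(t/n)}$ yields
\[
  \Tr\bigl(M^{(t/n)}\rho\bigr)\leq (n+1)^{3|\mathrm{A}|^2|\mathrm{B}|^2}\int F(\rho,\sigma^{\otimes n})^2\,\Tr\bigl(M^{(t/n)}\sigma^{\otimes n}\bigr)\,\mathrm{d}\mu(\sigma),
\]
and on a product state $\sigma^{\otimes n}$, $\Tr(M^{(t/n)}\sigma^{\otimes n})$ is interpretable as the probability of at least $t$ successes in $n$ i.i.d.\ Bernoulli$(\Tr(M\sigma))$ trials, opening the door to a Hoeffding-type bound.

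I then split the integration domain into $K_\epsilon=\{\sigma:\|\sigma-\mathcal{S}(\mathrm{A}{:}\mathrm{B})\|_2\leq\epsilon/r\}$ and its complement, exactly as in the proof of Theorem~\ref{th:h_sep-n}. On $K_\epsilon$, the Bernoulli parameter $p=\Tr(M\sigma)\leq 1-\delta+\epsilon$ lies at least $\alpha-\epsilon$ below $t/n\geq 1-\delta+\alpha$, so Hoeffding's inequality delivers $\Tr(M^{(t/n)}\sigma^{\otimes n})\leq\exp(-2n(\alpha-\epsilon)^2)$ whenever $\epsilon<\alpha$, while I bound the fidelity factor by $1$. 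On the complement I recycle the estimate $F(\rho,\sigma^{\otimes n})^2\leq(1-\epsilon^2/(4r^2))^n$ from Theorem~\ref{th:h_sep-n}, bounding the binomial tail by $1$. Choosing $\epsilon$ close to $2\alpha/\sqrt{5}$ so that both exponents dominate $\alpha^2/(5r^2)$ then gives
\[
  h_{sep}\bigl(M^{(t/n)}\bigr)\leq 2(n+1)^{3|\mathrm{A}|^2|\mathrm{B}|^2}\exp\!\bigl(-n\alpha^2/(5r^2)\bigr).
\]

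To remove the polynomial prefactor I use the same endgame as in Theorem~\ref{th:h_sep-n}, which needs a multiplicativity-type lower bound on $h_{sep}(M^{(t/n)})$. The key new ingredient is the operator inequality $M^{(kt/(kn))}\geq\bigl(M^{(t/n)}\bigr)^{\otimes k}$, valid for every integer $k\geq 1$ because the event ``at least $kt$ out of $kn$ tests succeed'' contains the event ``in each of $k$ blocks of $n$ tests, at least $t$ succeed''. Combined with the superadditivity of $h_{sep}$ under tensor products this yields $h_{sep}\bigl(M^{(kt/(kn))}\bigr)\geq h_{sep}\bigl(M^{(t/n)}\bigr)^k$; applying the above prefactor bound at parameters $(kn,kt)$ and letting $k\to\infty$ absorbs the polynomial and delivers the clean exponential bound.

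\textbf{Main obstacle.} The most delicate step is the balancing of $\epsilon$: the Hoeffding exponent $2(\alpha-\epsilon)^2$ carries no $r$-dependence whereas the fidelity exponent $\epsilon^2/(4r^2)$ does, so forcing both to dominate $\alpha^2/(5r^2)$ pins $\epsilon$ near $2\alpha/\sqrt{5}$ and is cleanest once $r$ exceeds a modest threshold; some mild tweaking of the constant $1/5$, or a separate treatment of the small-$r$ regime (where the bound is anyway close to trivial), may be required.
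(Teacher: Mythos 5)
Your proposal follows essentially the same route as the paper's proof: symmetrize, apply the constrained de Finetti reduction, split the integral at $\mathcal{K}_\epsilon$, use Hoeffding's inequality on the near-separable part and Theorem~\ref{th:F_SEP} on the far part, then balance $\epsilon$ so that both exponents reach $\alpha^2/(5r^2)$. Your explicit supermultiplicativity step $M^{(kt/(kn))}\geq \bigl(M^{(t/n)}\bigr)^{\otimes k}$ correctly fills in what the paper only calls ``the same trick'' for removing the polynomial prefactor, and the small-$r$ caveat you flag in the balancing of $\epsilon$ is equally present (though unacknowledged) in the paper's own choice of $\epsilon$.
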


\begin{proof}
Following the exact same lines as in the proof of Theorem \ref{th:h_sep-n}, we now have in place of equation \eqref{eq:epsilon}
\begin{equation} \label{eq:epsilon'} h_{sep}\left(M^{(t/n)}\right) \leq\, (n+1)^{3|\mathrm{A}|^2|\mathrm{B}|^2} \left( \exp\left[-2n(\alpha-\epsilon)^2\right] + \exp\left[-n\frac{\epsilon^2}{4r^2}\right] \right). \end{equation}
This is indeed a consequence of Hoeffding's inequality (and of the fact that $e^{-x}\geq 1-x$ for any $x>0$). So in particular, choosing $\epsilon=\alpha\left(1-(\sqrt{2}-1)/(8r^2-1)\right)$ in equation \eqref{eq:epsilon'}, so that $\epsilon^2/4r^2=2(\alpha-\epsilon)^2\geq\alpha^2/5r^2$, and removing the polynomial pre-factor by the same trick as in the proof of Theorem \ref{th:h_sep-n}, we get as announced
\[ h_{sep}\left(M^{(t/n)}\right)\leq\exp\left(-n\frac{\alpha^2}{5r^2}\right). \qedhere \]
\end{proof}

\section{Exponential decay and concentration of $h_{sep}$ via entanglement measure approach}
\label{sec:sep2}

\subsection{Quantifying the disturbance induced by measurements}

Let us state first a few technical lemmas that we will need later on to establish our main result. In what follows, we will use a few standard definitions from quantum Shannon theory, which we recall here: The entropy of a state $\rho$ is defined as $S(\rho)=-\Tr(\rho\log\rho)$. From there, one can define the mutual information of a bipartite state $\rho_{\A\B}$ and the conditional mutual information of a tripartite state $\rho_{\A\B\mathrm{C}}$ as, respectively,
\begin{align*}
& I(\A:\B)_{\rho}=S(\rho_{\A})+S(\rho_{\B})-S(\rho_{\A\B}),\\
& I(\A:\B|\mathrm{C})_{\rho}= S(\rho_{\A\mathrm{C}})+S(\rho_{\B\mathrm{C}})-S(\rho_{\mathrm{C}})-S(\rho_{\A\B\mathrm{C}}).
\end{align*}
%Both of them are non-negative due to (strong) sub-additivity of the entropy.
Finally, the relative entropy between states $\rho$ and $\sigma$ is defined as $D(\rho\|\sigma)=\Tr(\rho(\log\rho-\log\sigma))$.

\begin{lemma}
\label{lemma:post-POVM}
Let $\rho$ be a state on $\mathrm{U}\otimes\mathrm{V}$ and let $T$ be an operator on $\mathrm{U}$, satisfying $0\leq T\leq\Id$. Next, define $p=\Tr_{\mathrm{U}\mathrm{V}}\left[\left(T_{\mathrm{U}}\otimes\Id_{\mathrm{V}}\right)\rho_{\mathrm{U}\mathrm{V}}\right]$ as the probability of obtaining the first outcome when the two-outcome POVM $\left(T_{\mathrm{U}}\otimes\Id_{\mathrm{V}}, (\Id_{\mathrm{U}}-T_{\mathrm{U}})\otimes\Id_{\mathrm{V}}\right)$ is performed on $\rho_{\mathrm{U}\mathrm{V}}$, and $\tau_{\mathrm{V}}=\Tr_{\mathrm{U}}\left[\left(T_{\mathrm{U}}\otimes\Id_{\mathrm{V}}\right)\rho_{\mathrm{U}\mathrm{V}}\right]/p$ as the corresponding post-measurement state on $\mathrm{V}$. Also, denote by $\rho_{\mathrm{V}}=\Tr_{\mathrm{U}}\left[\rho_{\mathrm{U}\mathrm{V}}\right]$ the reduced state of $\rho_{\mathrm{U}\mathrm{V}}$ on $\mathrm{V}$. Then,
\[ D\left(\tau_{\mathrm{V}}\big\|\rho_{\mathrm{V}}\right) \leq - \log p.\]
\end{lemma}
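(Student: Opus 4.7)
The plan is to reduce the relative-entropy bound to the operator inequality $p\,\tau_{\mathrm{V}}\leq\rho_{\mathrm{V}}$ and then invoke operator monotonicity of the logarithm. This is a standard ``gentle measurement''-flavoured statement, and the one-line slogan is: if the outcome probability is $p$, then the post-measurement state on the untouched side is at most a factor $1/p$ away from the original marginal in operator order, hence at most $-\log p$ away in relative entropy.

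First I would establish the operator inequality $p\,\tau_{\mathrm{V}}\leq\rho_{\mathrm{V}}$. Writing $T=\sum_i t_i\proj{e_i}$ with $0\leq t_i\leq 1$ in an eigenbasis, one has
\[
  p\,\tau_{\mathrm{V}} \;=\; \Tr_{\mathrm{U}}\!\left[(T_{\mathrm{U}}\otimes\Id_{\mathrm{V}})\rho_{\mathrm{U}\mathrm{V}}\right]
        \;=\; \sum_i t_i\,\bra{e_i}\rho_{\mathrm{U}\mathrm{V}}\ket{e_i}_{\mathrm{U}}
        \;\leq\; \sum_i \bra{e_i}\rho_{\mathrm{U}\mathrm{V}}\ket{e_i}_{\mathrm{U}}
        \;=\; \rho_{\mathrm{V}},
\]
where each partial matrix element $\bra{e_i}\rho_{\mathrm{U}\mathrm{V}}\ket{e_i}_{\mathrm{U}}$ is a positive operator on $\mathrm{V}$, and $0\leq t_i\leq 1$ preserves the inequality term by term.

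Next I would apply operator monotonicity of $\log$ to $\tau_{\mathrm{V}}\leq p^{-1}\rho_{\mathrm{V}}$, yielding $\log\tau_{\mathrm{V}}\leq \log\rho_{\mathrm{V}}-(\log p)\Id$, i.e.\ $\log\tau_{\mathrm{V}}-\log\rho_{\mathrm{V}}\leq -(\log p)\Id$. Taking the trace against $\tau_{\mathrm{V}}\geq 0$ (which preserves the ordering) gives
\[
  D(\tau_{\mathrm{V}}\|\rho_{\mathrm{V}}) \;=\; \Tr\!\left[\tau_{\mathrm{V}}(\log\tau_{\mathrm{V}}-\log\rho_{\mathrm{V}})\right]
        \;\leq\; -(\log p)\,\Tr\tau_{\mathrm{V}} \;=\; -\log p,
\]
as desired.

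The only delicate point is the use of operator monotonicity of $\log$ when $\rho_{\mathrm{V}}$ (and hence $\tau_{\mathrm{V}}$, which is supported inside the support of $\rho_{\mathrm{V}}$ by the inequality) fails to be full rank. I would handle this by restricting everything to the support of $\rho_{\mathrm{V}}$, or equivalently by a standard limiting argument replacing $\rho_{\mathrm{V}}$ by $\rho_{\mathrm{V}}+\epsilon\Id$ and letting $\epsilon\to 0^+$; the relative entropy is lower semicontinuous in this limit, and if $\supp\tau_{\mathrm{V}}\not\subseteq\supp\rho_{\mathrm{V}}$ were to occur, the operator inequality $p\,\tau_{\mathrm{V}}\leq\rho_{\mathrm{V}}$ would already be violated, so no loss of generality is incurred.
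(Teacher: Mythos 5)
Your proof is correct and follows essentially the same route as the paper's: both establish the operator inequality $p\,\tau_{\mathrm{V}}\leq\rho_{\mathrm{V}}$ (the paper via the decomposition $\rho_{\mathrm{V}}=p\,\tau_{\mathrm{V}}+(1-p)\sigma_{\mathrm{V}}$ with $\sigma_{\mathrm{V}}$ the complementary post-measurement state, you via an eigenbasis of $T$ --- the same observation) and then conclude by operator monotonicity of the logarithm. Your extra care about supports addresses a point the paper glosses over but does not change the argument.
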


\begin{proof}
Note that $\rho_{\mathrm{V}}= p\tau_{\mathrm{V}} +(1-p)\sigma_{\mathrm{V}}$, where $\sigma_{\mathrm{V}}=\Tr_{\mathrm{U}}\left[\left((\Id_{\mathrm{U}}-T_{\mathrm{U}})\otimes\Id_{\mathrm{V}}\right) \rho_{\mathrm{U}\mathrm{V}}\right]/(1-p)$. We therefore have the operator inequality $p\tau_{\mathrm{V}}\leq\rho_{\mathrm{V}}$. And hence,
\[ D\left(\tau_{\mathrm{V}}\big\|\rho_{\mathrm{V}}\right) = \Tr\left[\tau_{\mathrm{V}}\left(\log\tau_{\mathrm{V}}-\log\rho_{\mathrm{V}}\right)\right] \leq \Tr\left[\tau_{\mathrm{V}}\left(\log\tau_{\mathrm{V}}-\log (p\tau_{\mathrm{V}})\right)\right] = -\log p, \]
the next to last inequality being because $\log$ is an operator monotone function.
\end{proof}

Let us recall the definition of the squashed entanglement $E_{sq}$, introduced in \cite{CW}:
\[ E_{sq}\left(\rho_{\mathrm{A}\mathrm{B}}\right) = \inf\left\{ \frac{1}{2} I(\mathrm{A}:\mathrm{B}|\mathrm{E})_{\rho} \st \Tr_{\mathrm{E}}\left(\rho_{\mathrm{A}\mathrm{B}\mathrm{E}}\right)=\rho_{\mathrm{A}\mathrm{B}} \right\}. \]

\begin{lemma} \label{lemma:E_sq}
Let $M_{\mathrm{A}\mathrm{B}}$ be an operator on the tensor product Hilbert space $\mathrm{A}\otimes\mathrm{B}$, satisfying $0\leq M_{\mathrm{A}\mathrm{B}}\leq \Id$, and let $\alpha_{\mathrm{A}^n},\beta_{\mathrm{B}^n}$ be states on $\mathrm{A}^{\otimes n},\mathrm{B}^{\otimes n}$ respectively. Next, fix $1\leq k\leq n-1$, and define
\[ p_k=\Tr_{\mathrm{A}^n\mathrm{B}^n}\left[\left(M_{\mathrm{A}\mathrm{B}}^{\otimes k}\otimes\Id_{\mathrm{A}\mathrm{B}}^{\otimes n-k}\right)\alpha_{\mathrm{A}^n}\otimes\beta_{\mathrm{B}^n}\right]\ \text{and}\ \tau(k)_{\mathrm{A}^{n-k}\mathrm{B}^{n-k}}=\frac{1}{p_k}\Tr_{\mathrm{A}^k\mathrm{B}^k}\left[\left(M_{\mathrm{A}\mathrm{B}}^{\otimes k}\otimes\Id_{\mathrm{A}\mathrm{B}}^{\otimes n-k}\right)\alpha_{\mathrm{A}^n}\otimes\beta_{\mathrm{B}^n}\right].\]
Then,
\[ \sum_{j=k+1}^n E_{sq}\left(\tau(k)_{\mathrm{A}_j\mathrm{B}_j}\right) \leq \frac{1}{2}\log \frac{1}{p_k} .\]
\end{lemma}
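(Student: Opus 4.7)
The plan is to reduce the sum of squashed entanglements to a single mutual information on the post-measurement state $\tau(k)$, and then control that mutual information by the relative-entropy estimate furnished by Lemma \ref{lemma:post-POVM}.

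First, I would apply Lemma \ref{lemma:post-POVM} with $\mathrm{U} = \mathrm{A}^k\mathrm{B}^k$, $\mathrm{V} = \mathrm{A}^{n-k}\mathrm{B}^{n-k}$, $T = M_{\mathrm{A}\mathrm{B}}^{\otimes k}$, on the input state $\alpha_{\mathrm{A}^n} \otimes \beta_{\mathrm{B}^n}$. Because the input factorizes across the $n$ copies on each side, its reduction to $\mathrm{V}$ is simply $\alpha_{\mathrm{A}^{n-k}} \otimes \beta_{\mathrm{B}^{n-k}}$, so the lemma yields
\[
D\bigl(\tau(k)_{\mathrm{A}^{n-k}\mathrm{B}^{n-k}} \,\big\|\, \alpha_{\mathrm{A}^{n-k}} \otimes \beta_{\mathrm{B}^{n-k}}\bigr) \leq \log\frac{1}{p_k}.
\]
Next, I would invoke the standard identity $D(\rho_{\mathrm{A}\mathrm{B}}\|\sigma_{\mathrm{A}}\otimes\sigma_{\mathrm{B}}) = I(\mathrm{A}{:}\mathrm{B})_{\rho} + D(\rho_{\mathrm{A}}\|\sigma_{\mathrm{A}}) + D(\rho_{\mathrm{B}}\|\sigma_{\mathrm{B}})$, which forces $I(\mathrm{A}^{n-k}{:}\mathrm{B}^{n-k})_{\tau(k)} \leq \log(1/p_k)$ since the two extra relative entropies are non-negative.

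For the squashed entanglement side, for each $j \in \{k+1,\ldots,n\}$ I would use $\tau(k)$ itself, restricted to the "past" systems $E_j = \mathrm{A}_{k+1}\mathrm{B}_{k+1}\cdots \mathrm{A}_{j-1}\mathrm{B}_{j-1}$, as a valid extension in the variational definition of $E_{sq}$, giving
\[
E_{sq}\bigl(\tau(k)_{\mathrm{A}_j\mathrm{B}_j}\bigr) \leq \tfrac{1}{2}\, I(\mathrm{A}_j{:}\mathrm{B}_j \,|\, E_j)_{\tau(k)}.
\]
Summing over $j$ and decomposing each conditional mutual information via $S(\mathrm{A}_j|E_j) + S(\mathrm{B}_j|E_j) - S(\mathrm{A}_j\mathrm{B}_j|E_j)$, the joint terms telescope through the chain rule for entropy to $S(\mathrm{A}^{n-k}\mathrm{B}^{n-k})_{\tau(k)}$, while strong subadditivity (dropping the $\mathrm{B}_{<j}$ or $\mathrm{A}_{<j}$ conditioning) combined with the chain rule upper-bounds the single-side contributions by $S(\mathrm{A}^{n-k})_{\tau(k)}$ and $S(\mathrm{B}^{n-k})_{\tau(k)}$ respectively. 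Hence
\[
\sum_{j=k+1}^n I(\mathrm{A}_j{:}\mathrm{B}_j \,|\, E_j)_{\tau(k)} \leq I(\mathrm{A}^{n-k}{:}\mathrm{B}^{n-k})_{\tau(k)}.
\]
Concatenating this with the relative-entropy bound from the first step finishes the proof.

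The main conceptual point — and the only step that is not routine — is to realize that $\tau(k)$ itself serves simultaneously as an extension of each of its two-body marginals $\tau(k)_{\mathrm{A}_j\mathrm{B}_j}$, so that the free infimum in the definition of $E_{sq}$ can be arranged to line up with a chain-rule telescoping over the $n-k$ unmeasured copies. The tensor-product structure of the input $\alpha_{\mathrm{A}^n}\otimes\beta_{\mathrm{B}^n}$ enters only in the very first step, to make the relative entropy against a product state (and thus controllable by a mutual information) in Lemma \ref{lemma:post-POVM}.
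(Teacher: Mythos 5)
Your proposal is correct and follows essentially the same route as the paper: apply Lemma \ref{lemma:post-POVM} to the product input, pass from the relative entropy against $\alpha\otimes\beta$ to the mutual information $I(\mathrm{A}^{n-k}{:}\mathrm{B}^{n-k})_{\tau(k)}$, and then lower-bound that by $\sum_j 2E_{sq}(\tau(k)_{\mathrm{A}_j\mathrm{B}_j})$ via a chain-rule telescoping with $\tau(k)$ itself supplying the extensions. The only (immaterial) difference is that you condition on $\mathrm{A}_{<j}\mathrm{B}_{<j}$ and bookkeep with entropies and strong subadditivity, whereas the paper conditions on $\mathrm{A}_{<j}$ alone and uses the chain rule for mutual information plus monotonicity; both yield the same bound.
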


\begin{proof}
By Lemma \ref{lemma:post-POVM}, with $\mathrm{U}=\mathrm{A}^{\otimes k}\otimes\mathrm{B}^{\otimes k}$, $\mathrm{V}=\mathrm{A}^{\otimes n-k}\otimes\mathrm{B}^{\otimes n-k}$, $T_{\mathrm{U}}=M_{\mathrm{A}\mathrm{B}}^{\otimes k}$ and $\rho_{\mathrm{U}\mathrm{V}}=\alpha_{\mathrm{A}^n}\otimes\beta_{\mathrm{B}^n}$, we have
\[ D\left(\tau(k)_{\mathrm{A}^{n-k}\mathrm{B}^{n-k}}\big\|\alpha_{\mathrm{A}^{n-k}}\otimes\beta_{\mathrm{B}^{n-k}}\right) \leq \log \frac{1}{p_k}. \]
Now, observe that
\begin{align*}
D\left(\tau(k)_{\mathrm{A}^{n-k}\mathrm{B}^{n-k}}\big\|\alpha_{\mathrm{A}^{n-k}}\otimes\beta_{\mathrm{B}^{n-k}}\right) \geq & \, D\left(\tau(k)_{\mathrm{A}^{n-k}\mathrm{B}^{n-k}}\big\|\tau(k)_{\mathrm{A}^{n-k}}\otimes\tau(k)_{\mathrm{B}^{n-k}}\right) \\
= & \, I\left(\mathrm{A}_{k+1}\ldots \mathrm{A}_n:\mathrm{B}_{k+1}\ldots \mathrm{B}_n\right)_{\tau(k)}\\
= & \sum_{j=k+1}^n I\left(\mathrm{A}_j:\mathrm{B}_{k+1}\ldots \mathrm{B}_n|\mathrm{A}_{k+1}\ldots \mathrm{A}_{j-1}\right)_{\tau(k)}\\
\geq & \sum_{j=k+1}^n I\left(\mathrm{A}_j:\mathrm{B}_j|\mathrm{A}_{k+1}\ldots \mathrm{A}_{j-1}\right)_{\tau(k)} \\
\geq & \sum_{j=k+1}^n 2\,E_{sq}\left(\tau(k)_{\mathrm{A}_j\mathrm{B}_j}\right).
\end{align*}
The first inequality is due to the fact that, given a bipartite state $\tau_{\mathrm{U}\mathrm{V}}$ on $\mathrm{U}\otimes\mathrm{V}$, for any states $\rho_{\mathrm{U}}$, $\rho_{\mathrm{V}}$ on $\mathrm{U}$, $\mathrm{V}$ respectively, $D(\tau_{\mathrm{U}\mathrm{V}}\|\rho_{\mathrm{U}}\otimes\rho_{\mathrm{V}})\geq D(\tau_{\mathrm{U}\mathrm{V}}\|\tau_{\mathrm{U}}\otimes\tau_{\mathrm{V}})$.
%The second equality is by definition of the relative entropy and the quantum mutual information.
The third equality and the fourth inequality follow from the chain rule and the monotonicity under discarding of subsystems, respectively, for the quantum mutual information.
And the last inequality is by definition of the squashed entanglement.
\end{proof}

\begin{remark}
\label{remark:E_I}
Observe that under the assumptions of Lemma \ref{lemma:E_sq},
we actually have the stronger conclusion
\[
  \sum_{j=k+1}^n E_{I}\left(\tau(k)_{\mathrm{A}_j\mathrm{B}_j}\right) \leq \frac{1}{2}\log \frac{1}{p_k},
\]
where $E_I$ is the \emph{conditional entanglement of mutual information (CEMI)}
introduced in \cite{HWY}:
\[
  E_I\left(\rho_{\mathrm{A}\mathrm{B}}\right) = \inf \left\{ \frac{1}{2}\big[ I(\mathrm{A}\mathrm{A}':\mathrm{B}\mathrm{B}')_{\rho}
                                    - I(\mathrm{A}':\mathrm{B}')_{\rho}\big]
                              \st \Tr_{\mathrm{A}'\mathrm{B}'}\left(\rho_{\mathrm{A}\mathrm{A}'\mathrm{B}\mathrm{B}'}\right) =\rho_{\mathrm{A}\mathrm{B}} \right\}.
\]
CEMI is always at least as large as squashed entanglement: for any state $\rho_{\mathrm{A}\mathrm{B}}$, $E_I(\rho_{\mathrm{A}\mathrm{B}}) \geq E_{sq}(\rho_{\mathrm{A}\mathrm{B}})$. But the precise relation between these two entanglement measures is unknown.
In~\cite{HWY} it was furthermore shown that, like squashed entanglement, CEMI is additive,
and more generally super-additive in the sense that
\[
  E_I(\rho_{\mathrm{A}_1\mathrm{A}_2{:}\mathrm{B}_1\mathrm{B}_2}) \geq E_I(\rho_{\mathrm{A}_1{:}\mathrm{B}_1}) + E_I(\rho_{\mathrm{A}_2{:}\mathrm{B}_2}).
\]
However, unlike squashed entanglement, there is no simple proof of
monogamy of CEMI, and it may well not hold in general.
\end{remark}

\begin{lemma}[Cf.~\cite{Holenstein}, Lemma 8.6]
\label{lemma:recursivity}
Let $0<\nu<1$ and $c>0$. Let also $n\in\N$ and assume that $(p_k)_{1\leq k\leq n}$ is a sequence of numbers satisfying $1> p_1\geq\cdots\geq p_n >0$ and
\[ \forall\ 1\leq k\leq n-1,\ p_{k+1} \leq p_k \left( \sqrt{\frac{c}{n-k}\log\frac{1}{p_k}} + \nu \right). \]
Then, for any $0<\gamma<1-\nu$ such that $p_1\leq \nu+\gamma$, we have
\[ \forall\ 1\leq k\leq n,\ p_k\leq (\nu+\gamma)^{\min(k,k_0)},\ \text{where}\ k_0=\frac{\gamma^2}{c\log[1/(\nu+\gamma)]+\gamma^2}(n+1). \]
\end{lemma}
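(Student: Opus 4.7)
The plan is to prove the bound by induction on $k$, splitting the inductive step into two cases according to the size of $\log(1/p_k)$. Set $\eta=c\log[1/(\nu+\gamma)]$, so that $(\nu+\gamma)^k = \exp(-k\eta/c)$ and $k_0 = \gamma^2(n+1)/(\gamma^2+\eta)$. The useful observation is that the multiplier $\sqrt{(c/(n-k))\log(1/p_k)} + \nu$ appearing in the recursion is at most $\nu+\gamma$ precisely when $\log(1/p_k)\leq \gamma^2(n-k)/c$, i.e.\ precisely when $p_k\geq \exp(-\gamma^2(n-k)/c)$.

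The heart of the argument is to establish $p_k\leq (\nu+\gamma)^k$ for every integer $k\leq k_0$. The base case $k=1$ is the hypothesis $p_1\leq \nu+\gamma$. For the inductive step, assuming $p_k\leq (\nu+\gamma)^k$ together with $k+1\leq k_0$, I would distinguish two cases. In the first, $\log(1/p_k)\leq \gamma^2(n-k)/c$, and the recursive inequality directly gives $p_{k+1}\leq p_k(\nu+\gamma)\leq (\nu+\gamma)^{k+1}$. In the second, $\log(1/p_k)>\gamma^2(n-k)/c$, so $p_k<\exp(-\gamma^2(n-k)/c)$, and because $(p_k)$ is non-increasing one also has $p_{k+1}\leq p_k<\exp(-\gamma^2(n-k)/c)$; it then remains only to verify $\exp(-\gamma^2(n-k)/c)\leq (\nu+\gamma)^{k+1}$, which rearranges to $k\leq (\gamma^2 n-\eta)/(\gamma^2+\eta)=k_0-1$, exactly the hypothesis. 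With this done, the range $k>k_0$ is handled by monotonicity: $p_k\leq p_{\lfloor k_0\rfloor}\leq (\nu+\gamma)^{\lfloor k_0\rfloor}$, which is the stated inequality (up to the harmless weakening of replacing $k_0$ by $\lfloor k_0\rfloor$ in the exponent).

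The main obstacle is really just the algebraic verification in the second case: the factor $(n+1)$ in the definition of $k_0$ is tuned exactly so that the identity $k_0-1=(\gamma^2 n-\eta)/(\gamma^2+\eta)$ holds, which is what lets the induction close right at the boundary step $k+1=\lfloor k_0\rfloor$. Apart from this matching of constants, the whole argument is a short two-case induction once the correct dichotomy on the size of $p_k$ relative to $\exp(-\gamma^2(n-k)/c)$ is identified.
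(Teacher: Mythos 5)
Your proof is correct and follows essentially the same route as the paper: an induction establishing $p_k\leq(\nu+\gamma)^k$ for $k\leq k_0$, a two-case split in the inductive step, and monotonicity of $(p_k)$ to handle $k>k_0$, with the induction closing on the same algebraic identity $k_0-1=(\gamma^2 n-\eta)/(\gamma^2+\eta)$. The only (cosmetic) difference is the dichotomy — the paper splits on whether $p_k\leq(\nu+\gamma)^{k+1}$ while you split on whether $\log(1/p_k)\leq\gamma^2(n-k)/c$ — and you are in fact slightly more careful than the paper about the non-integrality of $k_0$ in the final monotonicity step.
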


\begin{proof}
To prove Lemma \ref{lemma:recursivity}, we only have to show that
\begin{equation} \label{eq:rec} \forall\ 1\leq k\leq k_0,\ p_k\leq (\nu+\gamma)^k. \end{equation}
Indeed, the case $k>k_0$ then directly follows from the assumption that the sequence $(p_k)_{1\leq k\leq n}$ is non-increasing, so that $p_k\leq p_{k_0} \leq (\nu+\gamma)^{k_0}$.

Let us establish \eqref{eq:rec} by recursivity. The statement obviously holds for $k=1$ since $p_1\leq \nu+\gamma$ by hypothesis. So assume next that it holds for some $k\leq k_0-1$. If $p_k\leq(\nu+\gamma)^{k+1}$, then clearly $p_{k+1}\leq p_k\leq(\nu+\gamma)^{k+1}$. Otherwise, by the way $p_{k+1}$ is related to $p_k$, we then have $p_{k+1}\leq (\nu+\gamma)^k\big(\sqrt{c(k+1)\log[1/(\nu+\gamma)]/(n-k)}+\nu\big)$, and the latter quantity is smaller than $(\nu+\gamma)^{k+1}$ if $(k+1)/(n-k)\leq \gamma^2/\left(c\log[1/(\nu+\gamma)]\right)$, which can be checked to be equivalent to $k+1\leq k_0$. Hence in both cases, the statement holds for $k+1$.
\end{proof}

\begin{corollary}[Cf. \cite{Holenstein}, Lemma 8.6]
\label{cor:recursivity}
Let $(p_k)_{1\leq k\leq n}$ be a sequence of numbers satisfying the assumptions of Lemma \ref{lemma:recursivity}, and the additional condition $p_1\leq 1-(1-\nu)/2$. Then,
\[ p_n\leq \left(1-\frac{(1-\nu)^2}{8 c}\right)^n. \]
\end{corollary}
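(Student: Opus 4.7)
The plan is to specialize Lemma~\ref{lemma:recursivity} to the right value of $\gamma$ and then convert the resulting $k_0$-th power bound into the desired $n$-th power form by elementary estimates. Concretely, I would set $\gamma = (1-\nu)/2$, so that $\nu+\gamma = 1-(1-\nu)/2$. With this choice, the assumption $p_1 \leq 1-(1-\nu)/2$ of the corollary is exactly $p_1 \leq \nu+\gamma$, i.e.\ the condition required by Lemma~\ref{lemma:recursivity}. Applying the lemma yields
\[
p_n \leq (\nu+\gamma)^{\min(n,k_0)} = (1-\gamma)^{\min(n,k_0)}, \qquad k_0 = \frac{\gamma^2(n+1)}{c\log[1/(\nu+\gamma)]+\gamma^2}.
\]

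The remaining task is to show that $(1-\gamma)^{\min(n,k_0)} \leq (1-y)^n$, with $y=(1-\nu)^2/(8c)=\gamma^2/(2c)$. Taking $n$-th roots, it is enough to prove $(1-\gamma)^{\min(n,k_0)/n} \leq 1-y$. The key tool here will be the elementary inequality $(1-\gamma)^a \leq 1-a\gamma$ for $a\in [0,1]$, which follows from the convexity of $a\mapsto(1-\gamma)^a$ on $[0,1]$ and its endpoint values $1$ and $1-\gamma$. Applied with $a=\min(n,k_0)/n$, it reduces the problem to the purely numerical bound
\[
\frac{\min(n,k_0)}{n}\,\gamma \,\geq\, y \,=\, \frac{\gamma^2}{2c}, \quad \textrm{i.e.\ }\ \frac{\min(n,k_0)}{n}\,\geq\,\frac{\gamma}{2c}.
\]

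If $k_0\geq n$, this becomes $1\geq\gamma/(2c)$, i.e.\ $c\geq\gamma/2=(1-\nu)/4$, which is clearly implied by the regime in which the stated bound is non-trivial. If $k_0<n$, one has $k_0/n \geq \gamma^2/(c\log[1/(\nu+\gamma)]+\gamma^2)$, and the required estimate becomes $c\bigl(2\gamma-\log[1/(1-\gamma)]\bigr)\geq\gamma^2$. A sufficiently sharp elementary bound such as $-\log(1-\gamma)\leq\gamma+\gamma^2$ (valid for $\gamma\leq 1/2$) reduces this further to a mild condition on $c$, matching the form of the corollary's conclusion. The main obstacle, such as it is, lies in calibrating the logarithm estimate tightly enough to land on the specific constant $8$ in the denominator: a cruder bound would give $(1-\nu)^2/(Kc)$ with $K>8$, but paired with the convexity trick above and using $c$ in its natural range of applicability (e.g.\ $c\geq 1$, as in the uses made in Theorems~\ref{th:h_sep-n} and~\ref{th:h_sep-t/n}), the factor $8$ emerges cleanly.
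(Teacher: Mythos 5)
Your proposal is essentially the paper's own proof: the paper also takes $\gamma=(1-\nu)/2$ in Lemma~\ref{lemma:recursivity}, also reduces the problem to the lower bound $\min(n,k_0)/n\geq (1-\nu)/(4c)=\gamma/(2c)$, and also converts $(\nu+\gamma)^{\min(n,k_0)}$ into an $n$-th power via the same convexity inequality $\big(1-\tfrac{1-\nu}{2}\big)^{a}\leq 1-a\tfrac{1-\nu}{2}$. The only place the two arguments differ is the purely numerical step of bounding $\log[1/(\nu+\gamma)]=\log[2/(1+\nu)]$ in the denominator of $k_0$: the paper uses $2/(1+\nu)\leq 2^{1-\nu}$, you use $-\log(1-\gamma)\leq\gamma+\gamma^2$. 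Two caveats there. First, your elementary bound is only valid for the natural logarithm; for $\log=\log_2$ (which is what the paper uses, as the $\ln 2$ factors in Theorem~\ref{th:h_sep-h_qext} indicate) it fails already at $\gamma=0.1$, so you would need to recalibrate (e.g.\ $-\log_2(1-\gamma)\leq \gamma/\ln 2+\gamma^2$ or similar). Second, the residual condition you isolate, $c\gtrsim\gamma/(1-\gamma)$, is a genuine requirement: the corollary as stated (with only $c>0$) is not provable without some implicit lower bound on $c$, and the paper's own chain of inequalities $k_0\geq\frac{(1-\nu)^2}{4c\log[2^{1-\nu}]+1}(n+1)\geq\frac{1-\nu}{4c}n$ quietly relies on the same kind of assumption (and is in fact looser than yours at this step). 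Since in every application of the corollary one has $c=\ln 2\,q^2$ with $q\geq 2$, hence $c>1$, your explicit hypothesis is harmless; being upfront about it is, if anything, an improvement on the paper's presentation.
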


\begin{proof}
Corollary \ref{cor:recursivity} follows from applying Lemma \ref{lemma:recursivity} in the particular case $\gamma=(1-\nu)/2$. Indeed, we then have $\nu+\gamma=1-(1-\nu)/2=(1+\nu)/2$, so that
\[ k_0= \frac{(1-\nu)^2}{4c\log[2/(1+\nu)]+(1-\nu)^2}(n+1) \geq \frac{(1-\nu)^2}{4c\log[2^{1-\nu}]+1}(n+1) \geq \frac{1-\nu}{4c}n, \]
And consequently,
\[ p_n \leq \left(1-\frac{1-\nu}{2}\right)^{n(1-\nu)/(4c)} \leq \left(1-\frac{(1-\nu)^2}{8 c}\right)^n, \]
which is exactly the announced upper bound for $p_n$.
\end{proof}

\subsection{Weak multiplicativity of $h_{sep}$}

Our approach in this section, to prove the multiplicative behaviour of $h_{sep}$, is directly inspired from the seminal angle of attack to the parallel repetition problem for classical non-local games: our Theorem \ref{th:h_sep-h_qext} is an analogue of the exponential decay results by Raz \cite{Raz} and Holenstein \cite{Holenstein}, while our Theorem \ref{th:w_sep-h_qext} is an analogue of the concentration bound result by Rao \cite{Rao}. Indeed, here in the same spirit as theirs, we want to make precise the following intuition: if the initial state of a system on $(\A\otimes\B)^{\otimes n}$ is product across the cut $\A^n{:}\B^n$, then performing a measurement $(M,\Id-M)$ on a few subsystems $\A\otimes\B$ only should not create too much correlations in the post-measurement state on the remaining subsystems.

Before we prove the main result of this section,
we need to recall one last definition: For any $q\in\N$,
a state $\rho_{\mathrm{A}\mathrm{B}}$ on a bipartite Hilbert space $\mathrm{A}\otimes \mathrm{B}$
is said to be $q$-extendible with respect to $\mathrm{B}$
if there exists a state $\rho_{\mathrm{A}\mathrm{B}^q}$ on $\mathrm{A}\otimes \mathrm{B}^{\otimes q}$
that is invariant under any permutation of the $\mathrm{B}$-subsystems and such
that $\rho_{\mathrm{A}\mathrm{B}}=\Tr_{\mathrm{B}^{q-1}}\rho_{\mathrm{A}\mathrm{B}^q}$.
We shall denote by $\mathcal{E}_q(\mathrm{A}{:}\mathrm{B})$ the set of
$q$-extendible states with respect to $\mathrm{B}$ on $\mathrm{A}\otimes \mathrm{B}$, and by $h_{q-ext}$ its associated support function.

\begin{theorem}
\label{th:h_sep-h_qext}
Let $M$ be an operator on the tensor product Hilbert space $\mathrm{A} \otimes \mathrm{B}$, satisfying $0 \leq M \leq \Id$. Then, for any $q\in\N$,
\begin{equation}
  \label{eq:qext-bound}
  h_{sep}\left(M^{\otimes n}\right) \leq \left( 1-\frac{\left(1-h_{q-ext}(M)\right)^2}{8\ln 2\,q^2}\right)^n . \end{equation}
And consequently, if $h_{sep}(M)\leq 1-\delta$ for some $0<\delta<1$, then
\begin{equation}
  \label{eq:sep-bound}
  h_{sep}\left(M^{\otimes n}\right) \leq \left(1 -\frac{\delta^4}{512\ln 2\,d^4}\right)^n,
\end{equation}
assuming $|\A|=|\B|=d$.
\end{theorem}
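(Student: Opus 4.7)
The plan is to establish (\ref{eq:qext-bound}) by a recursive, information-theoretic argument in the spirit of Holenstein's parallel-repetition theorem, and then derive (\ref{eq:sep-bound}) from (\ref{eq:qext-bound}) by combining it with a finite quantum de Finetti theorem.

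For (\ref{eq:qext-bound}), by linearity and convexity the supremum defining $h_{sep}(M^{\otimes n})$ is attained at a product state $\rho = \alpha_{\A^n}\otimes\beta_{\B^n}$. Since $M^{\otimes n}$ is permutation-invariant, $\Tr[M^{\otimes n}\rho]$ does not depend on the order in which I sequence the $n$ measurements, so I may \emph{adaptively} construct a non-increasing sequence
\[
  1 = p_0 \geq p_1 \geq \cdots \geq p_n = \Tr\big[M^{\otimes n}\rho\big]
\]
by picking, at step $k \to k+1$, the next site $j^\star\in\{k+1,\ldots,n\}$ so as to minimize $E_{sq}\big(\tau(k)_{\A_{j^\star}\B_{j^\star}}\big)$. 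Lemma \ref{lemma:E_sq} together with the pigeonhole principle then yields
\[
  E_{sq}\big(\tau(k)_{\A_{j^\star}\B_{j^\star}}\big) \leq \frac{1}{2(n-k)}\log\frac{1}{p_k}.
\]

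The key technical step --- and what I expect to be the main obstacle --- is a quantitative ``small $E_{sq}$ $\Rightarrow$ approximately $q$-extendible'' implication, of the form $\|\sigma_{\A\B} - \mathcal{E}_q(\A{:}\B)\|_1 \leq C\,q\sqrt{E_{sq}(\sigma_{\A\B})}$ for some absolute constant $C$, which combined with $0\leq M\leq\Id$ gives $\Tr(M\sigma)\leq h_{q-ext}(M) + C\,q\sqrt{E_{sq}(\sigma)}$. Morally, such a bound should follow from a Fawzi-Renner style recoverability estimate: a near-optimal tripartite extension $\sigma_{\A\B E}$ witnessing $E_{sq}$ places $\sigma$ near a quantum Markov chain with pivot $E$, and iterating the associated recovery map along $E$ yields $q-1$ additional approximately-exchangeable copies of $\B$, each at trace-distance cost $O(\sqrt{E_{sq}})$; symmetrizing over the $\B$-labels then produces an exact $q$-extension within $O(q\sqrt{E_{sq}})$ of $\sigma$. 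Extracting the sharp $\ln 2$ constants that appear in (\ref{eq:qext-bound}) is the delicate part. Granting this, the ratios $p_{k+1}/p_k = \Tr\big(M\,\tau(k)_{\A_{k+1}\B_{k+1}}\big)$ satisfy the hypothesis of Lemma \ref{lemma:recursivity} with $\nu = h_{q-ext}(M)$ and $c$ of order $\ln 2\cdot q^2$; the side-conditions are trivial (on the first step $p_1 \leq h_{q-ext}(M)$ because the marginal $\alpha_1\otimes\beta_1$ is product, hence $q$-extendible), and Corollary \ref{cor:recursivity} delivers exactly (\ref{eq:qext-bound}).

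For (\ref{eq:sep-bound}), I invoke a standard finite quantum de Finetti theorem: for every $q$-extendible state $\sigma$ on $\A\otimes\B$ with $|\A|=|\B|=d$, there is a $\sigma'\in\mathcal{S}(\A{:}\B)$ with $\|\sigma-\sigma'\|_1 = O(d^2/q)$, so that $0\leq M\leq\Id$ implies $h_{q-ext}(M)\leq h_{sep}(M) + O(d^2/q)$. Under $h_{sep}(M)\leq 1-\delta$, choosing $q$ of order $d^2/\delta$ (specifically $q = 4d^2/\delta$) ensures $h_{q-ext}(M)\leq 1-\delta/2$; substituting into (\ref{eq:qext-bound}) and tracking constants yields the advertised exponent $\delta^4/(512\ln 2\,d^4)$.
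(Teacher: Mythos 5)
Your proposal is correct and follows essentially the same route as the paper: reduction to product states, the adaptive site-selection via Lemma \ref{lemma:E_sq} and pigeonhole, the recursion closed by Corollary \ref{cor:recursivity} with $\nu=h_{q-ext}(M)$ and $c=\ln 2\,q^2$, and the CKMR de Finetti bound $h_{q-ext}\leq h_{sep}+2d^2/q$ with $q=4d^2/\delta$. The step you flag as the main obstacle --- small $E_{sq}$ implies trace-distance $O(q\sqrt{E_{sq}})$ to a $q$-extendible state, via Fawzi--Renner recoverability --- is exactly what the paper imports as a citation (to Li--Winter, with constant $\sqrt{2\ln 2}\,(q-1)$) rather than proving, so your treatment is at the same level of rigour as the paper's.
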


\begin{proof}
To establish the first statement \eqref{eq:qext-bound}, we have to show that,
\[
  \forall\ \rho_{\mathrm{A}^n\mathrm{B}^n}\in\mathcal{S}(\mathrm{A}^{n}{:}\mathrm{B}^{n}),\ \Tr\left(M_{\mathrm{A}\mathrm{B}}^{\otimes n}\rho_{\mathrm{A}^n\mathrm{B}^n} \right) \leq \left( 1-\frac{\left(1-h_{q-ext}(M_{\mathrm{A}\mathrm{B}})\right)^2}{8\ln 2\,q^2}\right)^n .
\]
Note that, with this aim in view, we can without loss of generality focuss only on states which are extremal in $\mathcal{S}(\mathrm{A}^{n}{:}\mathrm{B}^{n})$, namely on states which are product across the cut $\mathrm{A}^{\otimes n}{:}\mathrm{B}^{\otimes n}$. So let $\alpha_{\mathrm{A}^n}\otimes\beta_{\mathrm{B}^n}$ be such a state, and set $p_0=1$, $\tau(0)_{\mathrm{A}^n\mathrm{B}^n}=\alpha_{\mathrm{A}^n}\otimes\beta_{\mathrm{B}^n}$. In the sequel, we will use the following notation: given $I_k\subset[n]$ with $|I_k|=k$, define $M^{(I_k)}_{\mathrm{A}^n\mathrm{B}^n}$ as
\[ M^{(I_k)}_{\mathrm{A}^n\mathrm{B}^n} = M_{\mathrm{A}\mathrm{B}}^{\otimes I_k}\otimes\Id_{\mathrm{A}\mathrm{B}}^{\otimes I_k^c}. \]
Then, for each $1\leq k\leq n$, construct recursively
\begin{align*}
& p_{k}= \Tr_{\mathrm{A}^n\mathrm{B}^n} \left[M^{(I_k)}_{\mathrm{A}^n\mathrm{B}^n}\alpha_{\mathrm{A}^n}\otimes\beta_{\mathrm{B}^n}\right], \\
& \tau(k)_{\mathrm{A}_{I_k^c}\mathrm{B}_{I_k^c}}= \frac{1}{p_k} \Tr_{\mathrm{A}_{I_k}\mathrm{B}_{I_k}} \left[M^{(I_k)}_{\mathrm{A}^n\mathrm{B}^n}\alpha_{A^n}\otimes\beta_{B^n}\right],
\end{align*}
with $i_{k}$ chosen in $I_{k-1}^c$ such that
\[ E_{sq}\left(\tau(k-1)_{\mathrm{A}_{i_k}\mathrm{B}_{i_k}}\right)\leq \frac{1}{n-k+1}\,\frac{1}{2}\log \frac{1}{p_{k-1}}. \]
We know that this is possible. Indeed, assuming that $p_{k-1}$, $\tau(k-1)_{\mathrm{A}_{I_{k-1}^c}\mathrm{B}_{I_{k-1}^c}}$ have been constructed, Lemma \ref{lemma:E_sq} guarantees that
\[ \frac{1}{n-k+1}\sum_{j=1}^{n-k+1}E_{sq}\left(\tau(k-1)_{\mathrm{A}_{I_{k-1}^c}\mathrm{B}_{I_{k-1}^c}}\right)\leq \frac{1}{n-k+1}\,\frac{1}{2}\log \frac{1}{p_{k-1}}, \]
so that there necessarily exists an index $i\in I_{k-1}^c$ such that $E_{sq}\big(\tau(k-1)_{\mathrm{A}_{i}\mathrm{B}_{i}}\big)$ is smaller than the quantity on the right-hand-side of the average upper bound above.

Now, notice that the $p_k$, $0\leq k\leq n$, are related by the recursion formula
\[ \forall\ 0\leq k\leq n-1,\ p_{k+1} = p_k\Tr_{\mathrm{A}_{i_{k+1}}\mathrm{B}_{i_{k+1}}}\left(M_{\mathrm{A}_{i_{k+1}}\mathrm{B}_{i_{k+1}}} \tau(k)_{\mathrm{A}_{i_{k+1}}\mathrm{B}_{i_{k+1}}}\right), \]
where, by the way the $\tau(k)$, $0\leq k\leq n$, are built
\[ E_{sq}\left(\tau(k)_{\mathrm{A}_{i_{k+1}}\mathrm{B}_{i_{k+1}}}\right) \leq \frac{1}{n-k}\frac{1}{2}\log \frac{1}{p_k}. \]
Yet, we know from \cite{LiW} that this implies that
\begin{equation} \label{eq:E_sq-q-ext} \exists\ \sigma_{\mathrm{A}\mathrm{B}}\in\mathcal{E}_q(\mathrm{A}{:}\mathrm{B}):\ \left\| \tau(k)_{\mathrm{A}\mathrm{B}} - \sigma_{\mathrm{A}\mathrm{B}} \right\|_1 \leq \sqrt{2\ln 2}(q-1)\sqrt{\frac{1}{n-k}\frac{1}{2}\log \frac{1}{p_k}} \leq \sqrt{\frac{\ln 2\, q^2}{n-k}\log \frac{1}{p_k}}. \end{equation}
And therefore,
\[ p_{k+1} \leq p_k \left(\left\|M_{\mathrm{A}\mathrm{B}}\right\|_{\infty}\left\| \tau(k)_{\mathrm{A}\mathrm{B}} - \sigma_{\mathrm{A}\mathrm{B}} \right\|_1 + \Tr\left(M_{\mathrm{A}\mathrm{B}}\sigma_{\mathrm{A}\mathrm{B}}\right) \right) \leq p_k \left(\sqrt{\frac{\ln 2\, q^2}{n-k}\log \frac{1}{p_k}} + h_{q-ext}\left(M_{\mathrm{A}\mathrm{B}}\right) \right). \]
With this upper bound, and because we also clearly have $1> p_1\geq\cdots\geq p_n> 0$ as well as the requirement $p_1\leq h_{sep}(M)\leq h_{q-ext}(M)\leq 1-(1-h_{q-ext})/2$, it follows from Corollary \ref{cor:recursivity} that
\[ \Tr\left(M_{\mathrm{A}\mathrm{B}}^{\otimes n}\alpha_{\A^n}\otimes\beta_{\B^n}\right)=p_n \leq \left( 1-\frac{\left(1-h_{q-ext}(M_{\mathrm{A}\mathrm{B}})\right)^2}{8\ln 2\,q^2}\right)^n, \]
which is precisely what we wanted to prove.

From there, the second statement \eqref{eq:sep-bound} easily follows.
Indeed, in the case where $|\mathrm{A}|=|\mathrm{B}|=d$, we know from \cite{CKMR} that,
for any $q\in\N$, $\rho_{\mathrm{A}\mathrm{B}}\in\mathcal{E}_q(\mathrm{A}{:}\mathrm{B})$ implies that
there exists $\sigma_{\mathrm{A}\mathrm{B}}\in\mathcal{S}(\mathrm{A}{:}\mathrm{B}):\ \|\rho_{\mathrm{A}\mathrm{B}}-\sigma_{\mathrm{A}\mathrm{B}}\|_1\leq 2d^2/q$,
so that $h_{q-ext}\leq h_{sep}+2d^2/q$. Hence, if $h_{sep}(M_{\mathrm{A}\mathrm{B}})\leq 1-\delta$,
making the choice $q=4d^2/\delta$, in order to have $h_{q-ext}(M_{\mathrm{A}\mathrm{B}})\leq 1-\delta/2$,
yields, after a straightforward computation, exactly the announced exponential decay result.
\end{proof}

The scaling as $(\delta/d)^4$ in the upper bound provided by equation \eqref{eq:sep-bound} of Theorem \ref{th:h_sep-h_qext} is much worse than the scaling as $(\delta/d)^2$ in the upper bound provided by Theorem \ref{th:h_sep-n}. However, equation \eqref{eq:qext-bound} of Theorem \ref{th:h_sep-h_qext}, which relates $h_{sep}(M^{\otimes n})$ to $h_{q-ext}(M)$, may be of interest in some specific cases, namely when $M$ has a maximum overlap with $q$-extendible states which is already of the same order as its maximum overlap with separable states for $q \ll d^2$.

\begin{remark}
By Remark \ref{remark:E_I}, we see that we could also have done the recursive construction described in the proof of Theorem \ref{th:h_sep-h_qext} by imposing instead that, for each $0\leq k\leq n-1$,
\[ p_{k+1} = p_k\Tr_{\mathrm{A}_{i_{k+1}}\mathrm{B}_{i_{k+1}}}\left(M_{\mathrm{A}_{i_{k+1}}\mathrm{B}_{i_{k+1}}} \tau(k)_{\mathrm{A}_{i_{k+1}}\mathrm{B}_{i_{k+1}}}\right),\ \text{with}\ E_I\left(\tau(k)_{\mathrm{A}_{i_{k+1}}\mathrm{B}_{i_{k+1}}}\right) \leq \frac{1}{n-k}\frac{1}{2}\log \frac{1}{p_k}. \]
Now, it is an open question to determine whether there exists a dimension independent constant $C>0$ such that
\begin{equation} \label{eq:E_I-conjecture} E_I\left(\rho_{\mathrm{A}\mathrm{B}}\right)\leq \epsilon\ \Rightarrow\ \exists\ \sigma_{\mathrm{A}\mathrm{B}}\in\mathcal{S}(\mathrm{A}{:}\mathrm{B}) \st \left\| \rho_{\mathrm{A}\mathrm{B}} - \sigma_{\mathrm{A}\mathrm{B}} \right\|_1 \leq C\sqrt{\epsilon}. \end{equation}
If Conjecture \eqref{eq:E_I-conjecture} indeed held, this would imply that the $(p_k)_{1\leq k\leq n}$ satisfy
\[
  \forall\ 0\leq k\leq n-1,\
  p_{k+1} \leq p_k \left(\sqrt{\frac{C^2/2}{n-k}\log \frac{1}{p_k}} + h_{sep}\left(M_{\mathrm{A}\mathrm{B}}\right) \right).
\]
And hence eventually, the following dimension-free exponential decay result for $h_{sep}$:
\[
  h_{sep}(M)\leq 1-\delta\ \Rightarrow\ h_{sep}\left(M^{\otimes n}\right) \leq \left( 1-\frac{\delta^2}{4C^2}\right)^n . \]
And in fact, if a more general variant of Conjecture \eqref{eq:E_I-conjecture} held, with $C\sqrt{\epsilon}$ replaced by $\varphi(\epsilon)$ for $\varphi$ a (universal) non-decreasing function such that $\varphi(0)=0$, then one could prove analogously that
\[ h_{sep}(M)\leq 1-\delta\ \Rightarrow\ h_{sep}\left(M^{\otimes n}\right) \leq \left(1-\frac{\varphi^{-1}(\delta)}{4}\right)^n. \]
The way property \eqref{eq:E_sq-q-ext} of strong faithfulness of squashed entanglement with respect to $q$-extendible states, is proved in \cite{LiW} is relying on the breakthrough result by Fawzi and Renner \cite{FR} that small conditional mutual information does imply approximate recoverability. Now, in an even stronger manner than $E_{sq}(\rho)$ being small means that the conditional mutual information of any extension of $\rho$ is small, $E_I(\rho)$ being small is a condition that is expressible as a bunch of conditional mutual information of extensions of $\rho$ being simultaneously small. So it could be that recoverability results (in particular the best one up-to-date \cite{JRSWW}, which carries the advantage over the original one \cite{FR} of being universal and explicit) would help in an attempt to prove a strong faithfulness property of CEMI with respect to separable states such as \eqref{eq:E_I-conjecture}.
\end{remark}

\begin{theorem}
\label{th:w_sep-h_qext}
Let $M$ be an operator on the tensor product Hilbert space $\mathrm{A}\otimes \mathrm{B}$, satisfying $0 \leq M \leq \Id$. If $h_{sep}(M)\leq 1-\delta$ for some $0<\delta<1$, then for any $n,t\in\N$ with $t\geq(1-\delta+\alpha)n$ for some $0<\alpha\leq\delta$, we have
\[ h_{sep}\left(M^{(t/n)}\right)\leq\left(1-\frac{\alpha^5}{2048\ln 2\, d^4\,(2\delta-\alpha)}\right)^n, \]
assuming $|A|=|B|=d$.
\end{theorem}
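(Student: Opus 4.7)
The strategy is to adapt the iterative-measurement proof of Theorem \ref{th:h_sep-h_qext} to the ``$t$ out of $n$'' concentration regime, replacing the multiplicative recursion by a Hoeffding-style concentration bound, in the same spirit in which the proof of Theorem \ref{th:h_sep-t/n} extends that of Theorem \ref{th:h_sep-n}. By extremality of the set of separable states on $\A^{\otimes n}\otimes\B^{\otimes n}$, it suffices to prove the bound for a product state $\alpha_{\A^n}\otimes\beta_{\B^n}$, and for such a state $\Tr\bigl[M^{(t/n)}(\alpha\otimes\beta)\bigr]$ upper bounds (up to combinatorial weights coming from the definition of $M^{(t/n)}$) the probability that at least $t$ of the $n$ commuting binary tests $(M,\Id-M)$ yield the outcome ``pass''.

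Following the construction in the proof of Theorem \ref{th:h_sep-h_qext}, I would sequentially measure the tests one subsystem at a time. The crucial observation is that Lemmas \ref{lemma:post-POVM} and \ref{lemma:E_sq} only exploit the product structure of the initial state, and hence remain valid after conditioning on any outcome sequence, not just on ``all pass''. Concretely, if $(Y_1,\ldots,Y_k)$ is an outcome sequence of probability $P_k$ and $\tau(k)$ is the resulting post-measurement state on the remaining $n-k$ subsystems, then $\sum_{j>k} E_{sq}\bigl(\tau(k)_{\A_j\B_j}\bigr)\leq\tfrac{1}{2}\log(1/P_k)$, so one can greedily pick the next subsystem $i_{k+1}$ with $E_{sq}$-marginal at most $\log(1/P_k)/\bigl(2(n-k)\bigr)$. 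Combining the faithfulness result of \cite{LiW} with the $\mathcal{E}_q\subseteq\mathcal{S}+O(d^2/q)$ estimate of \cite{CKMR}, and choosing $q$ of order $d^2/\alpha$ so that $h_{q-ext}(M)\leq 1-\delta+\alpha/2$, the conditional pass probability at step $k+1$ is then bounded by $1-\delta+\alpha/2+\sqrt{\ln 2\, q^2\log(1/P_k)/(n-k)}$.

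Rather than multiplying these bounds to control the ``all-pass'' event as in Theorem \ref{th:h_sep-h_qext}, I would apply a Hoeffding/Chernoff-type bound to the sequence $(Y_1,\ldots,Y_n)$ to control $\Pr\bigl[\sum_k Y_k\geq t\bigr]$ for $t=(1-\delta+\alpha)n$. Paralleling the $\mathcal{K}_\epsilon$ dichotomy in the proof of Theorem \ref{th:h_sep-n}, the natural split is between ``good'' paths, along which $\log(1/P_k)$ stays small enough that the per-step slack is bounded by some $\epsilon$, and ``bad'' paths, for which $P_k$ drops below a chosen threshold. The good paths are handled by Hoeffding's inequality applied to a Bernoulli process with per-step pass probability at most $1-\delta+\alpha/2+\epsilon$, contributing an $\exp\bigl(-2n(\alpha/2-\epsilon)^2\bigr)$ term; the bad paths contribute at most their total probability mass, which is controlled by the recursive estimates of Lemma \ref{lemma:recursivity} and Corollary \ref{cor:recursivity} applied branch by branch, as in the proof of Theorem \ref{th:h_sep-h_qext}.

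The main obstacle will be managing the interplay between this path-dependent $q$-extendibility error and the concentration bound. In Theorem \ref{th:h_sep-t/n} the de Finetti reduction produces a genuine i.i.d.~Bernoulli distribution to which Hoeffding applies directly, whereas here the sequential measurement is intrinsically correlated, and the Hoeffding slack at step $k$ depends on the realised outcomes through $P_k$. Carefully balancing the threshold separating good from bad paths, the Hoeffding slack $\epsilon$, and the $q$-extendibility scale $q\sim d^2/\alpha$---in the same spirit as the $\epsilon$ optimization performed in the proof of Theorem \ref{th:h_sep-t/n}---is what should produce the exponent $\alpha^5/\bigl(2048\ln 2\,d^4\,(2\delta-\alpha)\bigr)$; the extra $\alpha/(2\delta-\alpha)$ factor compared with Theorem \ref{th:h_sep-h_qext} reflects the necessity of taking $q$ of order $d^2/\alpha$ rather than $d^2/\delta$, which shrinks the effective gap by a factor proportional to $\alpha$.
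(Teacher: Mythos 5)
Your setup is right as far as it goes: the reduction to product states $\alpha_{\A^n}\otimes\beta_{\B^n}$, the reuse of Lemmas \ref{lemma:post-POVM} and \ref{lemma:E_sq} for the sequentially conditioned states, the faithfulness bound of \cite{LiW} combined with \cite{CKMR}, and the choice $q\sim d^2/\alpha$ (the paper takes $q=8d^2/\alpha$, giving $h_{q\text{-}ext}(M)\leq 1-\delta+\alpha/4$) are all exactly the ingredients used. But the core of your argument --- controlling $\Pr[\sum_k Y_k\geq t]$ by a Hoeffding/Chernoff bound on the sequential outcome process, with a ``good path / bad path'' dichotomy according to the size of $\log(1/P_k)$ --- is not what the paper does, and as sketched it has a genuine gap. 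The outcome sequence $(Y_1,\ldots,Y_n)$ is correlated, so Hoeffding does not apply directly; worse, along a \emph{typical} outcome path the probability $P_k$ of the realised prefix decays exponentially in $k$ (there are $2^k$ prefixes), so $\log(1/P_k)$ grows linearly and the per-step slack $\sqrt{\ln 2\,q^2\log(1/P_k)/(n-k)}$ becomes $\Theta(1)$ after only $O(n/q^2)$ steps. Your ``good'' paths therefore carry almost no probability mass after a short initial segment, and your proposal to control the ``bad'' paths ``branch by branch'' via Lemma \ref{lemma:recursivity} does not work: that lemma bounds the all-pass probability along a single branch, not the aggregate mass of the exponentially many low-probability branches. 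You flag this interplay as ``the main obstacle'' but do not resolve it, and it is precisely the step that produces the claimed exponent.

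The paper avoids the obstacle entirely with Rao's counting argument, which is the one idea missing from your proposal. One only ever conditions on the \emph{all-pass} event on a small subset: choose $i_1,\ldots,i_{k_0}$ \emph{uniformly at random} (not greedily), let $p_{I_{k_0}}$ be the probability that all tests in $I_{k_0}$ pass, and observe that if a state passes at least $t$ of the $n$ tests then every $k_0$-subset of the passed set is an all-pass subset, whence
\[
P_t(\alpha\otimes\beta)\ \leq\ \frac{\binom{n}{k_0}}{\binom{t}{k_0}}\,\overline{p}_{I_{k_0}}\ \leq\ \left(\frac{n-k_0+1}{(1-\delta+\alpha)n-k_0+1}\right)^{k_0}\overline{p}_{I_{k_0}}.
\]
The averaged all-pass probability $\overline{p}_{I_{k_0}}$ is then bounded by $(1-\delta+\alpha/2)^{k_0}$ exactly as in Theorem \ref{th:h_sep-h_qext} (Lemma \ref{lemma:recursivity} with $\gamma=\alpha/4$, giving $k_0\gtrsim \alpha^4 n/(1024\ln 2\,d^4(2\delta-\alpha))$), and the binomial ratio eats the gap between $1-\delta+\alpha/2$ and $1-\delta+\alpha$, leaving $(1-\alpha/2)^{k_0}$. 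No concentration inequality for the correlated process is ever needed. If you want to salvage your route, you would have to replace Hoeffding by an Azuma-type martingale bound together with a stopping-time truncation of the bad branches, which is essentially Holenstein's original, much more delicate, argument; the counting trick is the reason the concentration statement here comes ``almost for free'' from the all-pass analysis.
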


\begin{proof}
The proof of this theorem follows a very similar route to that of
Theorem \ref{th:h_sep-h_qext}: For any given state $\alpha_{\mathrm{A}^n}\otimes\beta_{\mathrm{B}^n}$ which is product across the cut $\mathrm{A}^{\otimes n}{:}\mathrm{B}^{\otimes n}$, we want to show that the probability that it passes at least $t$ amongst $n$ tests defined by $M_{\mathrm{A}\mathrm{B}}$ is upper bounded as
\[
  P_t(\alpha_{\mathrm{A}^n}\otimes\beta_{\mathrm{B}^n}) \leq \left(1-\frac{\alpha^5}{2048\ln 2\, d^4\,(2\delta-\alpha)}\right)^n.
\]

In that aim, we start by defining the following deterministic set, number and state: $I_0=\emptyset$, $p_{I_0}=1$ and $\tau(I_0)_{\mathrm{A}^n\mathrm{B}^n}=\alpha_{\mathrm{A}^n}\otimes\beta_{\mathrm{B}^n}$. Then, for each $1\leq k\leq n$, we construct recursively the following random set, number and state: pick $i_k$ uniformly at random in $I_{k-1}^c$, and define
\begin{align*}
& I_k=I_{k-1}\cup\{i_k\}, \\
& p_{I_k}= \Tr_{\mathrm{A}^n\mathrm{B}^n} \left[M^{(I_k)}_{\mathrm{A}^n\mathrm{B}^n} \alpha_{\mathrm{A}^n}\otimes\beta_{\mathrm{B}^n}\right],\\
& \tau(I_k)_{\mathrm{A}_{I_k^c}\mathrm{B}_{I_k^c}}= \frac{1}{p_{I_k}}\Tr_{\mathrm{A}_{I_k}\mathrm{B}_{I_k}} \left[M^{(I_k)}_{\mathrm{A}^n\mathrm{B}^n} \alpha_{\mathrm{A}^n}\otimes\beta_{\mathrm{B}^n}\right].
\end{align*}
Lemma \ref{lemma:E_sq} guarantees that, on average, for each $0\leq k\leq n-1$,
\[
  E_{sq}\left(\overline{\tau}(I_k)_{\mathrm{A}_{i_{k+1}}\mathrm{B}_{i_{k+1}}}\right)
        \leq \frac{1}{n-k}\frac{1}{2}\log \frac{1}{\overline{p}_{I_k}},
\]
so that, on average, for any $q\in\N$,
\[ \overline{p}_{I_{k+1}} \leq \overline{p}_{I_k} \left(\sqrt{\frac{\ln 2\, q^2}{n-k}\log \frac{1}{\overline{p}_{I_k}}} + h_{q-ext}\left(M_{\mathrm{A}\mathrm{B}}\right) \right). \]
In particular, we can make the choice $q=8d^2/\alpha$, in order to have $h_{q-ext}(M_{\mathrm{A}\mathrm{B}})\leq 1-\delta+\alpha/4$. And we thus get from Lemma \ref{lemma:recursivity}, after computation, that on average,
\[ \overline{p}_{I_{k_0}} \leq \left(1-\delta+\frac{\alpha}{2}\right)^{k_0},\ \text{where}\ k_0= \frac{\alpha^4}{1024\ln 2\,d^4\,\log[1/(1-\delta+\alpha/2)]+\alpha^4}(n+1)\geq \frac{\alpha^4}{1024\ln 2\,d^4\,(2\delta-\alpha)}\,n. \]

To finish off the proof, we just have to observe (Cf. \cite{Rao}, Section 8) that
\[ P_t(\alpha_{\mathrm{A}^n}\otimes\beta_{\mathrm{B}^n}) \leq \sum_{I_{k_0}\subset[n],\,|I_{k_0}|=k_0}\frac{1}{{(1-\delta+\alpha)n \choose k_0}}p_{I_{k_0}} \leq \frac{{n \choose k_0}}{{(1-\delta+\alpha)n \choose k_0}} \overline{p}_{I_{k_0}}\leq \left(\frac{n-k_0+1}{(1-\delta+\alpha)n-k_0+1}\right)^{k_0}\left(1-\delta+\frac{\alpha}{2}\right)^{k_0}, \]
where the last inequality follows from the fact that $\prod_{i=0}^{l-1}(a+i)/(b+i)\leq (a/b)^l$, combined with the upper bound on $\overline{p}_{I_{k_0}}$. In the end, we can therefore conclude that
\[ P_t(\alpha_{\mathrm{A}^n}\otimes\beta_{\mathrm{B}^n}) \leq \left(1-\frac{\alpha}{2}\right)^{k_0}\leq \left(1-\frac{\alpha^5}{2048\ln 2\, d^4\,(2\delta-\alpha)}\right)^n, \]
where the first inequality follows from the fact that $(1-\delta+\alpha/2)/(1-\delta+\alpha)\leq 1-\alpha/2$, while the second inequality is a consequence of the lower bound on $k_0$.
\end{proof}

\begin{remark}
Here again, we see by Remark \ref{remark:E_I} that, if
Conjecture \eqref{eq:E_I-conjecture} held, then we could have
obtained in the proof of Theorem \ref{th:w_sep-h_qext} that, on average
\[
  \overline{p}_{I_{k_0}} \leq \left(1-\delta+\frac{\alpha}{2}\right)^{k_0},\
  \text{where}\
  k_0= \frac{\alpha^2}{8\log[1/(1-\delta+\alpha/2)]+\alpha^2}(n+1)\geq \frac{\alpha^2}{8(2\delta-\alpha)}\,n.
\]
And hence eventually, the following dimension-free concentration result for $h_{sep}$:
\[
  h_{sep}(M)\leq 1-\delta\ \Rightarrow\ \forall\ 0<\alpha<\delta,\ \forall\ t\geq(1-\delta+\alpha)n,\
  h_{sep}\left(M^{(t/n)}\right) \leq \left( 1-\frac{\alpha^3}{16C^2(2\delta-\alpha)}\right)^n .
\]
\end{remark}

\section{Equivalence between weak multiplicativity of support functions and of maximum fidelities}
\label{sec:general}

In the previous Sections \ref{sec:sep1} and \ref{sec:sep2}, we studied in great depth one particular example of convex constraint on quantum states, namely the separability one. We showed in this specific case that there is a strong connection between the (weakly) multiplicative behaviour under tensoring of either the support function $h_{sep}$ or the maximum fidelity $F(\cdot,\mathcal{S})$. We would now like to describe, more generally, which kind of convex sets of states exhibit a similar feature.

So let us fix $d\in\N$, $\cH$ a $d$-dimensional Hilbert space, and assume that we have a sequence of convex sets
of states $\cK^{(n)}$ on $\cH^{\otimes n}$, $n\in\N$, with
the following stability properties (under permutation and partial trace):
%\begin{equation} \label{eq:tens-stability}
%\forall\ n\in\N,\ \sigma \in \cK^{(1)}\ \Rightarrow\ \sigma^{\otimes n} \in \cK^{(n)}
%\end{equation}
\begin{equation}
  \label{eq:sym-stability}
  \rho\in\cK^{(n)}\ \Rightarrow\ \forall\ \pi\in\mathcal{S}_n,\ U_{\pi}\rho U_{\pi}^{\dagger}\in\cK^{(n)}\ \text{and}\
  \Tr_{\cH}\rho\in\cK^{(n-1)}.
\end{equation}
Note that requirement \eqref{eq:sym-stability} implies in particular that,
if $\rho^{\otimes n}\in\cK^{(n)}$, then $\rho\in\cK^{(1)}$. In view of our subsequent discussion, it would be meaningless not to impose that the opposite holds as well, i.e.~that, if $\rho\in\cK^{(1)}$, then $\rho^{\otimes n}\in\cK^{(n)}$. This means in other words that, for each $n\in\N$, $\cK^{(n)}$ is assumed to contain the so-called $n^{\text{th}}$ projective tensor power of $\cK^{(1)}$, which is defined as
\[ \left(\cK^{(1)}\right)^{\hat{\otimes} n} := \mathrm{conv}\left\{\rho_1\otimes\cdots\otimes\rho_n,\ \rho_1,\ldots,\rho_n\in\cK^{(1)} \right\}. \]

\subsection{Exponential decay and concentration of $h_{\mathcal{K}}$ from multiplicativity of $F(\cdot,\mathcal{K})$}

Given an operator $M$ on $\cH$, satisfying $0\leq M\leq\Id$, define the support function of $\cK^{(n)}$ at $M^{\otimes n}$ as
\[ h_{\cK^{(n)}}\left(M^{\otimes n}\right) = \underset{\sigma\in\cK^{(n)}}{\sup}\Tr\left(M^{\otimes n}\sigma\right). \]
Define also more generally, for any $0\leq t\leq n$, $h_{\cK^{(n)}}\left(M^{(t/n)}\right)$ as the maximum probability for a state in $\cK^{(n)}$ to pass a fraction $t/n$ of $n$ binary tests $(M,\Id-M)$ performed in parallel. The question we are next interested in is to understand how $h_{\cK^{(n)}}(M^{\otimes n})$ and $h_{\cK^{(n)}}\left(M^{(t/n)}\right)$ relate to $h_{\cK^{(1)}}(M)$.

Hence, assume also that these sets $\cK^{(n)}$ satisfy the following condition: there exists a non-decreasing function $f:\epsilon\in]0,1[\mapsto f(\epsilon)\in]0,1[$ such that, for any state $\rho$ on $\C^d$ and any $0<\epsilon<1$,
\begin{equation} \label{eq:exp-decay-F} \left\|\rho-\cK^{(1)}\right\|_2\geq\epsilon\ \Rightarrow\ F\left(\rho^{\otimes n},\cK^{(n)} \right)^2\leq\left(1-f(\epsilon)\right)^n. \end{equation}

Then, under assumption \eqref{eq:exp-decay-F} for the sets $\cK^{(n)}$, the following holds: for any operator $M$ on $\cH$, satisfying $0\leq M\leq\Id$, and with $\|M\|_2=r$ for some $0\leq r\leq \sqrt{d}$, and any $0<\alpha\leq\delta<1$,
\[ h_{\cK^{(1)}}(M)\leq 1-\delta\ \Rightarrow\ h_{\cK^{(n)}}(M^{\otimes n})\leq \left(1-g(\delta,r)\right)^n\ \text{and}\ \forall\ t\geq (1-\delta+\alpha)n,\ h_{\cK^{(n)}}\left(M^{(t/n)}\right)\leq e^{-ng'(\alpha,r)}, \]
where $g(\delta,r)=f(\epsilon(\delta,r))$ for $0<\epsilon(\delta,r)<1$ the solution of the equation $f(\epsilon)=\delta-r\epsilon$ and $g'(\alpha,r)=f(\epsilon(\alpha,r))$ for $0<\epsilon(\alpha,r)<1$ the solution of the equation $f(\epsilon)=2(\alpha-r\epsilon)^2$.

To come to these statements, the strategy is entirely analogous to the one adopted in the proofs of Theorems \ref{th:h_sep-n} and \ref{th:h_sep-t/n}. It is therefore only sketched below. First of all, when looking for a state $\rho\in\cK^{(n)}$ maximizing $\Tr\left(M^{\otimes n}\rho\right)$, one can in fact assume without loss of generality that $\rho$ is $n$-symmetric. And for such state $\rho$, reasoning as in the proof of Theorem \ref{th:h_sep-n}, we know that we have
\[ \Tr\left(M^{\otimes n}\rho\right) \leq (n+1)^{3d^2} \int_{\sigma\in\mathcal{D}(\mathcal{H})} F\left(\rho,\sigma^{\otimes n}\right)^2 \Tr\left(M\sigma\right)^n \mathrm{d}\mu(\sigma). \]
Hence, we get as a consequence of hypothesis \eqref{eq:exp-decay-F} that, for any $0<\epsilon<1$,
\[ \Tr\left(M^{\otimes n}\rho\right) \leq (n+1)^{3d^2} \left( \left(1-\delta+r\epsilon\right)^n + \left(1-f(\epsilon)\right)^n\right). \]
So choosing $\epsilon$ such that $f(\epsilon)=\delta-r\epsilon$ and $\rho$ such that $\Tr\left(M^{\otimes n}\rho\right)= h_{\cK^{(n)}}(M^{\otimes n})$ yields in particular
\[ h_{\cK^{(n)}}(M^{\otimes n}) \leq 2(n+1)^{3d^2}\left(1-g(\delta,r)\right)^n. \]
Similarly, it follows from hypothesis \eqref{eq:exp-decay-F} as well that, for any $0<\epsilon<1$,
\[ h_{\cK^{(n)}}\left(M^{(t/n)}\right)\leq (n+1)^{3d^2}\left( \exp\left[-2n(\alpha-r\epsilon)^2\right] + \exp\left[-nf(\epsilon)\right] \right), \]
so that choosing $\epsilon$ such that $f(\epsilon)=2(\alpha-r\epsilon)^2$ gives
\[ h_{\cK^{(n)}}\left(M^{(t/n)}\right)\leq 2(n+1)^{3d^2} e^{-ng'(\alpha,r)}. \]
In both cases the polynomial pre-factor $2(n+1)^{3d^2}$ can then be removed by the exact same argument as in the proof of Theorem \ref{th:h_sep-n}.

\subsection{Weak multiplicativity of $F(\cdot,\mathcal{K})$ from exponential decay and concentration of $h_{\mathcal{K}}$}

We would now like to go in the other direction. Namely, let us assume this time that these sets $\cK^{(n)}$ satisfy the following condition: there exists a function $f:(\alpha,d)\in]0,1[\times\N\mapsto f(\alpha,d)\in]0,1[$, non-decreasing in $\alpha$ and non-increasing in $d$, such that, for any operator $M$ on $\cH$, satisfying $0\leq M\leq\Id$, and any $0<\alpha\leq\delta<1$,
\begin{equation} \label{eq:exp-decay-h} h_{\cK^{(1)}}(M)\leq 1-\delta\ \Rightarrow\ \forall\ t\geq (1-\delta+\alpha)n,\ h_{\cK^{(n)}}\left(M^{(t/n)}\right)\leq e^{-nf(\alpha,d)}. \end{equation}

Then, under assumption \eqref{eq:exp-decay-h} for the sets $\cK^{(n)}$, the following holds: for any state $\rho$ on $\cH$ and any $0<\epsilon<1$,
\[ \frac{1}{2}\left\|\rho-\cK^{(1)}\right\|_1 \geq \epsilon\ \Rightarrow\ F\left(\rho^{\otimes n},\cK^{(n)}\right) \leq 2e^{-ng(\epsilon,d)}, \]
where $g(\epsilon,d)=f(\alpha(\epsilon,d),d)/2$ for $0<\alpha(\epsilon,d)<1$ the solution of the equation $f(\alpha,d)/2=(\alpha-\epsilon)^2$.

Here is the strategy to derive such result: Imagine you are given a state on $\cH^{\otimes n}$, which you know is either $\rho^{\otimes n}$ or in $\cK^{(n)}$, and you want to decide between these two hypotheses. For that, you can design a binary test $(T_+,T_-)$ such that outcome $+$ is obtained with a high probability $p$ if the state was $\rho^{\otimes n}$ and outcome $-$ is obtained with a high probability $q$ if the state was in $\cK^{(n)}$. Then, clearly
\[ F\left(\rho^{\otimes n},\cK^{(n)}\right) \leq F\left((p,1-p),(q,1-q)\right) \leq \sqrt{1-p}+\sqrt{1-q}. \]
Therefore, if both error probabilities $1-p$ and $1-q$ are exponentially small, the conclusion follows.

In the present case, the fact that $\left\|\rho-\cK^{(1)}\right\|_1=2\epsilon$, implies that there exist $0\leq M\leq \Id$ and $\epsilon<\eta<1$ such that $\Tr(M\rho)=1-\eta+\epsilon$ whereas $h_{\cK^{(1)}}(M)= 1-\eta$. So consider the binary POVM $(M_0,M_1)=(M,\Id-M)$ performed $n$ times in parallel, and the corresponding binary test $(T_+,T_-)$ with $+$ being the event ``outcome $0$ is obtained more than $(1-\eta+\alpha)n$ times'' and $-$ being the event ``outcome $0$ is obtained less than $(1-\eta+\alpha)n$ times'', for some $0<\alpha<\epsilon$ to be chosen later. Define next, for each $1\leq i\leq n$, the random variable $X_i$, respectively $Y_i$, as the outcome of measurement number $i$ given that the state was $\rho^{\otimes n}$, respectively in $\cK^{(n)}$. Then,
\[ 1-p = \P\left(-\big|\rho^{\otimes n}\right) = \P\left(\sum_{i=1}^n X_i < (1-\eta+\alpha)n\right)\ \text{and}\ 1-q = \P\left(+\big|\cK^{(n)}\right) = \P\left(\sum_{i=1}^n Y_i > (1-\eta+\alpha)n\right). \]
Yet on the one hand, $X_1,\ldots,X_n$ are independent Bernoulli random variables with expectation $1-\eta+\epsilon$, so by Hoeffding's inequality
\[ \P\left(\sum_{i=1}^n X_i < (1-\eta+\alpha)n\right) \leq e^{-2n(\epsilon-\alpha)^2}. \]
While on the other hand, for any $0\leq t\leq n$, $\P\left(\sum_{i=1}^n Y_i > t\right) = h_{\cK^{(n)}}\left(M^{(t/n)}\right)$, so assumption \eqref{eq:exp-decay-h} guarantees that
\[ \P\left(\sum_{i=1}^n Y_i > (1-\eta+\alpha)n\right) \leq e^{-nf(\alpha,d)}. \]
Hence, putting everything together, we eventually obtain that, for any $0<\alpha<\epsilon$,
\[ F\left(\rho^{\otimes n},\cK^{(n)}\right) \leq e^{-n(\epsilon-\alpha)^2} + e^{-nf(\alpha,d)/2}, \]
which yields the wanted result after choosing $\alpha$ such that $f(\alpha,d)/2=(\epsilon-\alpha)^2$.

\begin{remark}
Note that requirement \eqref{eq:sym-stability} is clearly fulfilled by the sets $\mathcal{S}_{\mathrm{A}^n:\mathrm{B}^n}$ of biseparable states on $\left(\mathrm{A}\otimes\mathrm{B}\right)^{\otimes n}$. Furthermore, they satisfy requirements \eqref{eq:exp-decay-F} and \eqref{eq:exp-decay-h} as well, with $f(\epsilon)=\epsilon^2/4$ and $f(\alpha,d^2)=\alpha^2/5d^2$.

It may also be worth emphasizing that conditions \eqref{eq:exp-decay-F} and \eqref{eq:exp-decay-h} are just strengthened and quantitative versions of the following stability property for the sets $\cK^{(n)}$: $\rho\notin\cK^{(1)}\ \Rightarrow\ \rho^{\otimes n}\notin\cK^{(n)}$, i.e.~ equivalently $\rho^{\otimes n}\in\cK^{(n)}\ \Rightarrow\ \rho\in\cK^{(1)}$.
\end{remark}

\subsection{One simple example}

Let us look at what the previous discussion becomes in the case of the simplest possible sequence $\{\mathcal{K}^{(n)},\ n\in\N\}$ satisfying requirement \eqref{eq:sym-stability}, namely when there exists a set of states $\mathcal{K}$ on $\cH$ such that, for each $n\in\N$, $\mathcal{K}^{(n)}$ is exactly the $n^{\text{th}}$ projective tensor power of $\mathcal{K}$, i.e.~\[ \mathcal{K}^{(n)}= \mathcal{K}^{\hat{\otimes} n} := \mathrm{conv}\left\{\rho_1\otimes\cdots\otimes\rho_n,\ \rho_1,\ldots,\rho_n\in\mathcal{K} \right\}. \]
Then, assumption \eqref{eq:exp-decay-h} is clearly satisfied, in the following way: for any operator $M$ on $\cH$, satisfying $0\leq M\leq \Id$, and any $0<\alpha\leq\delta<1$,
\[ h_{\cK^{(1)}}(M)\leq 1-\delta\ \Rightarrow\ \forall\ t\geq (1-\delta+\alpha)n,\ h_{\cK^{(n)}}\left(M^{(t/n)}\right)\leq e^{-n2\alpha^2}. \]
This is a consequence of Hoeffding's inequality, following an argument similar to the one detailed in the previous subsection. And by the result established in the latter, this implies that: for any state $\rho$ on $\cH$ and any $0<\epsilon<1$,
\[ F\left(\rho,\cK^{(1)}\right)\leq e^{-\epsilon}\ \Rightarrow\ F\left(\rho^{\otimes n},\cK^{(n)}\right) \leq 2e^{-n\epsilon^2/8}. \]
This is because $F(\rho,\cK^{(1)})\leq e^{-\epsilon}\ \Rightarrow\ \|\rho-\cK^{(1)}\|_1/2 \geq 1-e^{-\epsilon}\geq \epsilon/2$.

In connection with the discussion developed in Sections \ref{sec:sep1} and \ref{sec:sep2}, we see that we are actually facing the following interesting open question: how differently do $\mathcal{S}(\mathrm{A}^n{:}\mathrm{B}^n)$ and $\mathcal{S}(\mathrm{A}{:}\mathrm{B})^{\hat{\otimes} n}$ behave, from the (more or less equivalent) points of view of support functions and maximum fidelity functions?

\section{De Finetti reductions for infinite-dimensional symmetric quantum systems}
\label{sec:infinite}

All quantum de Finetti theorems and reductions require a bound on the dimension of the involved
Hilbert spaces. So what can be said about symmetric states on $\cH^{\otimes n}$ when $\cH$ is an infinite-dimensional Hilbert space? What extra assumptions do we need on them in order to be able to reduce their study to that of states in some de Finetti form? The original de Finetti reduction of \cite{CKR} was especially designed to prove the security of QKD protocols against general attacks. Yet, showing security of continuous variable QKD is also a major issue. This was the motivation behind the infinite-dimensional de Finetti type theorem of \cite{CR}. Our ultimate goal here is the same, which we rather try to achieve via a de Finetti reduction under constraints.

\subsection{Infinite-dimensional post-selection lemma}

Let $\cH$ be an infinite-dimensional Hilbert space, and let
$\bar{\cH}\subset\cH$ be a (finite) $d$-dimensional subspace of $\cH$.
Denote by $\{\ket{j}\}_{j\in\N}$ an orthonormal basis of $\cH$, chosen such
that $\{\ket{j}\}_{1\leq j\leq d}$ is an orthonormal basis of $\bar{\cH}$.
Then, for any $n,k\in\N$, the $(n+k)$-symmetric subspace of
$\bar{\cH}^{\otimes n}\otimes\cH^{\otimes k}$ is defined as
\[ \Sym^{n+k}\left(\bar{\cH},\cH\right) := \Span\left\{ \sum_{\pi\in\mathcal{S}_{n+k}} \ket{j_{\pi(1)}}\otimes\cdots\otimes\ket{j_{\pi(n+k)}} \st j_1\leq\cdots\leq j_{n+k}\in\N,\ \forall\ 1\leq q\leq n,\ j_q\leq d\right\}. \]
Note that, denoting by $\bar{\cH}_{\perp}\subset\cH$ the orthogonal complement of $\bar{\cH}$, i.e.~$\cH=\bar{\cH}\oplus\bar{\cH}_{\perp}$, we have
\[ \Sym^{n+k}\left(\bar{\cH},\cH\right) \subset \mathrm{V}^{n+k}\left(\bar{\cH},\cH\right) := \bigoplus_{\underset{|I|\geq n}{I\subset[n+k]}} \bar{\cH}_{\perp}^{\otimes I^c}\otimes\Sym\left(\bar{\cH}^{\otimes I}\right). \]

\begin{lemma}
\label{lemma:ps-infinite}
Let $\cH$ be an infinite-dimensional Hilbert space, and let $\bar{\cH}\subset\cH$ be a (finite) $d$-dimensional subspace of $\cH$. Let also $n,k\in\N$. Then, any unit vector $\ket{\theta}\in\Sym^{n+k}\left(\bar{\cH},\cH\right)$ satisfies
\[ \proj{\theta} \leq \left[\sum_{q=0}^k{n+k \choose q} {n+d-1 \choose n}^3 \right] \sum_{\underset{|I|\geq n}{I\subset[n+k]}} \int_{\ket{x}\in S_{\bar{\cH}}} \epsilon(\theta_x)_{\bar{\cH}_{\perp}^{I^c}}\otimes\proj{x}_{\bar{\cH}}^{\otimes I} \mathrm{d}x , \]
where for all $0\leq q\leq k$ and all unit vector $\ket{x}\in \bar{\cH}$, $\epsilon(\theta_x)_{\bar{\cH}_{\perp}^{k-q}}=\Tr_{\bar{\cH}^{ n+q}}\big[\big(\Id_{\bar{\cH}_{\perp}}^{\otimes k-q}\otimes\proj{x}_{\bar{\cH}}^{\otimes n+q}\big)\proj{\theta}\big]$ is a sub-normalized state on $\bar{\cH}_{\perp}^{\otimes k-q}$. %(we mention for the sake of completeness that $\epsilon(\theta_x)_{\bar{\cH}_{\perp}^{k-q}}$ has trace $F\left(\tau(\theta)_{\cH^{ n+q}},\proj{x}_{\bar{\cH}}^{n+q}\right)^2$, with $\tau(\theta)_{\cH^{n+q}}=\Tr_{\cH^{k-q}}\proj{\theta}$ the reduced state of $\proj{\theta}$ on $\cH^{\otimes n+q}$).
\end{lemma}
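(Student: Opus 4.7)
My plan is to adapt the proof of Proposition~\ref{prop:ps-pure} to the partially-finite-dimensional setting by inserting a preliminary ``pinching step'' that sorts $\ket{\theta}$ according to which of its $n+k$ tensor factors lie in the finite-dimensional subspace $\bar{\cH}$.

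First I would use the orthogonal decomposition $V^{n+k}(\bar{\cH},\cH)=\bigoplus_{I\subset[n+k],\,|I|\geq n}\bar{\cH}_\perp^{\otimes I^c}\otimes\Sym(\bar{\cH}^{\otimes I})$ recalled just above the lemma. Writing $\ket{\theta}=\sum_{I:|I|\geq n}\ket{\theta_I}$ with $\ket{\theta_I}:=\bigl(P_{\bar{\cH}_\perp}^{\otimes I^c}\otimes P_{\bar{\cH}}^{\otimes I}\bigr)\ket{\theta}$, the key observation is that the full $\mathcal{S}_{n+k}$-invariance of $\ket{\theta}$ restricts to $\mathcal{S}_I$-invariance on each summand, so each $\ket{\theta_I}$ actually lies in $\bar{\cH}_\perp^{\otimes I^c}\otimes\Sym(\bar{\cH}^{\otimes I})$. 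Applying Lemma~\ref{lemma:pinching} to the mutually orthogonal family $\{M_I=P_{\bar{\cH}_\perp}^{\otimes I^c}\otimes P_{\bar{\cH}}^{\otimes I}\}_{|I|\geq n}$ then yields
\[
\proj{\theta}=\sum_{I,J}\ket{\theta_I}\bra{\theta_J}\leq N\sum_{I:|I|\geq n}\proj{\theta_I},\qquad N=\sum_{q=0}^k\binom{n+k}{q}.
\]

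Next, for each fixed $I$ of size $n+q$, I would run the argument of Proposition~\ref{prop:ps-pure} essentially verbatim on the $\bar{\cH}^{\otimes I}$-slot of $\ket{\theta_I}$, while leaving the $\bar{\cH}_\perp^{\otimes I^c}$-slot alone. Concretely: sandwich $\proj{\theta_I}$ between two copies of $\Id_{I^c}\otimes P_{\Sym(\bar{\cH}^{\otimes I})}$, expand each as $\binom{n+q+d-1}{n+q}\int\proj{x}^{\otimes(n+q)}\mathrm{d}x$, discretize via Caratheodory, apply Lemma~\ref{lemma:pinching} together with the estimate $p_i\leq 1/\binom{n+q+d-1}{n+q}$, and then return to integral form by convex combination. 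This produces
\[
\proj{\theta_I}\leq\binom{n+q+d-1}{n+q}^3\int_{\ket{x}\in S_{\bar{\cH}}}\bigl(\Id_{I^c}\otimes\proj{x}^{\otimes I}\bigr)\proj{\theta_I}\bigl(\Id_{I^c}\otimes\proj{x}^{\otimes I}\bigr)\mathrm{d}x,
\]
and the integrand is precisely $\epsilon(\theta_x)_{I^c}\otimes\proj{x}^{\otimes I}$ once one notes that projecting the $I^c$-slot of $\proj{\theta}$ onto $\bar{\cH}_\perp^{\otimes I^c}$ is exactly what singles out the $\ket{\theta_I}$-component and is already built into the definition of $\epsilon(\theta_x)$. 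Summing over $I$ and plugging into the pinching bound yields the announced inequality; the $q$-dependent binomial $\binom{n+q+d-1}{n+q}^3$ can either be left inside the $q$-sum or uniformly dominated to fit the stated prefactor.

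I expect the only real obstacle to be organizational: checking cleanly that the $\mathcal{S}_{n+k}$-symmetry of $\ket{\theta}$ does restrict to $\mathcal{S}_I$-symmetry on each summand (so that the inner Proposition~\ref{prop:ps-pure}-type argument can be run on $\Sym(\bar{\cH}^{\otimes I})$), and carefully identifying $(\Id_{I^c}\otimes\proj{x}^{\otimes I})\proj{\theta_I}(\Id_{I^c}\otimes\proj{x}^{\otimes I})$ with $\epsilon(\theta_x)_{I^c}\otimes\proj{x}^{\otimes I}$ via the various partial projections and traces on $\ket{\theta}$ versus $\ket{\theta_I}$.
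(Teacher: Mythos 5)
Your argument is correct and essentially reproduces the paper's proof: both rest on the containment $\Sym^{n+k}(\bar{\cH},\cH)\subset\mathrm{V}^{n+k}(\bar{\cH},\cH)$, the coherent-state resolution of $P_{\Sym(\bar{\cH}^{\otimes I})}$, Caratheodory discretization and Lemma~\ref{lemma:pinching}, the only organizational difference being that you pinch in two stages (first over the mutually orthogonal blocks indexed by $I$, then a Proposition~\ref{prop:ps-pure}-type step inside each block) where the paper applies a single pinching to the whole discretized family $\bigl\{\Id_{\bar{\cH}_\perp}^{\otimes I^c}\otimes\proj{x_i}^{\otimes I}\bigr\}_{I,i}$. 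One caveat on your closing remark: for $|I|=n+q$ with $q>0$ your inner step produces $\binom{n+q+d-1}{n+q}^3$, which is \emph{larger} than the stated $\binom{n+d-1}{n}^3$, so ``uniformly dominating'' goes the wrong way and you should simply keep the $q$-dependent binomial inside the sum --- but the paper's own derivation has the same bookkeeping slip (it writes $P_{\Sym(\bar{\cH}^{\otimes I})}$ as $\binom{n+d-1}{n}\int\proj{x}^{\otimes I}\,\mathrm{d}x$ even when $|I|>n$), so this affects only the, in any case polynomial, prefactor.
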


\begin{proof}
Since $\Sym^{n+k}\left(\bar{\cH},\cH\right)\subset \mathrm{V}^{n+k}\left(\bar{\cH},\cH\right)$, any unit vector $\ket{\theta}\in\Sym^{n+k}\left(\bar{\cH},\cH\right)$ satisfies
\begin{align*}
\proj{\theta} = & P_{\mathrm{V}^{n+k}\left(\bar{\cH},\cH\right)} \proj{\theta} P_{\mathrm{V}^{n+k}\left(\bar{\cH},\cH\right)}^{\dagger}\\
= & {n+d-1 \choose n}^2 \sum_{\underset{|I|,|J|\geq n}{I,J\subset[n+k]}} \int_{\ket{x},\ket{y}\in S_{\bar{\cH}}} \left(\Id_{\bar{\cH}_{\perp}}^{\otimes I^c}\otimes\proj{x}_{\bar{\cH}}^{\otimes I}\right) \proj{\theta} \left(\Id_{\bar{\cH}_{\perp}}^{\otimes J^c}\otimes\proj{y}_{\bar{\cH}}^{\otimes J}\right)^{\dagger} \mathrm{d}x\mathrm{d}y.
\end{align*}
Now, by Lemma \ref{lemma:pinching} (and using the same Caratheodory argument as in the proof of Proposition \ref{prop:ps-pure}), we have
\begin{align*}
& \sum_{\underset{|I|,|J|\geq n}{I,J\subset[n+k]}} \int_{\ket{x},\ket{y}\in S_{\bar{\cH}}} \left(\Id_{\bar{\cH}_{\perp}}^{\otimes I^c}\otimes\proj{x}_{\bar{\cH}}^{\otimes I}\right) \proj{\theta} \left(\Id_{\bar{\cH}_{\perp}}^{\otimes J^c}\otimes\proj{y}_{\bar{\cH}}^{\otimes J}\right)^{\dagger} \mathrm{d}x\mathrm{d}y\\
& \ \ \ \leq \left[\sum_{q=0}^k{n+k \choose q} {n+d-1 \choose n}\right] \sum_{\underset{|I|\geq n}{I\subset[n+k]}} \int_{\ket{x}\in S_{\bar{\cH}}} \left(\Id_{\bar{\cH}_{\perp}}^{\otimes I^c}\otimes\proj{x}_{\bar{\cH}}^{\otimes I}\right) \proj{\theta} \left(\Id_{\bar{\cH}_{\perp}}^{\otimes I^c}\otimes\proj{x}_{\bar{\cH}}^{\otimes I}\right)^{\dagger} \mathrm{d}x.\\
\end{align*}
We then just have to notice that, for any $0\leq q\leq k$ and any unit vector $\ket{x}\in \bar{\cH}$,
\[ \left(\Id_{\bar{\cH}_{\perp}}^{\otimes k-q}\otimes\proj{x}_{\bar{\cH}}^{\otimes n+q}\right) \proj{\theta} \left(\Id_{\bar{\cH}_{\perp}}^{\otimes k-q}\otimes\proj{x}_{\bar{\cH}}^{\otimes n+q}\right)^{\dagger} = \Tr_{\bar{\cH}^{ n+q}}\left[\left(\Id_{\bar{\cH}_{\perp}}^{\otimes k-q}\otimes\proj{x}_{\bar{\cH}}^{\otimes n+q}\right)\proj{\theta}\right]\otimes\proj{x}_{\bar{\cH}}^{\otimes n+q}, \]
in order to actually get the advertised result.
\end{proof}

%\begin{remark} \label{remark:ps}
%In the case where $\ket{\theta}\in\Sym\left(\Sym\left(\cH^{\otimes k}\right)\otimes\Sym\left(\bar{\cH}^{\otimes n}\right)\right)$, Lemma \ref{lemma:ps} can in fact be slightly strengthened into
%\[ \proj{\theta} \leq \begin{pmatrix} n+k \\ k \end{pmatrix} \frac{\begin{pmatrix} n+d-1 \\ n \end{pmatrix}^3}{k!^3} \sum_{\underset{|I|=k}{I\subset[n+k]}}\sum_{\pi\in\mathfrak{S}_k} \int_{\ket{x}\in \bar{\cH}} \epsilon(\theta_{x,\pi})_{\cH^{\otimes I}}\otimes\proj{x}_{\bar{\cH}}^{\otimes I^c} \mathrm{d}x, \]
%where for all $\ket{x}\in \bar{\cH}$ and $\pi\in\mathfrak{S}_k$, $\epsilon(\theta_{x,\pi})_{\cH^{\otimes k}}=\Tr_{\bar{\cH}^{\otimes n}}\left[\left(U(\pi)_{\cH^{\otimes k}}\otimes\proj{x}_{\bar{\cH}}^{\otimes n}\right)\proj{\theta}\left(U(\pi)_{\cH^{\otimes k}}\otimes\proj{x}_{\bar{\cH}}^{\otimes n}\right)^{\dagger}\right]$.

%Indeed, denoting by $P_{s'(k,n)}$ the projector onto $\Sym\left(\Sym\left(\cH^{\otimes k}\right)\otimes\Sym\left(\bar{\cH}^{\otimes n}\right)\right)$, we have
%\[ P_{s'(k,n)} =  \sum_{\underset{|I|=k}{I\subset[n+k]}} P_{\Sym(\cH^{\otimes I})}\otimes P_{\Sym(\bar{\cH}^{\otimes I^c})} = \sum_{\underset{|I|=k}{I\subset[n+k]}}\left(\frac{1}{k!} \sum_{\pi\in\mathfrak{S}_k}U(\pi)_{\cH^{\otimes I}}\right) \otimes \left(\begin{pmatrix} n+d-1 \\ n \end{pmatrix}\int_{\ket{x}\in \bar{\cH}}\proj{x}_{\bar{\cH}}^{\otimes I^c} \mathrm{d}x\right), \]
%and we may then argue in the exact same way as in the proof of Lemma \ref{lemma:ps}.
%\end{remark}

\subsection{An application}

One application of Lemma \ref{lemma:ps-infinite} is to the security analysis of quantum cryptographic schemes, when there is no a priori bound on the dimension of the information carriers. This problem was originally investigated in \cite{CR} via a de Finetti theorem specifically designed for it. It was shown there that, under experimentally verifiable conditions, it is possible to ensure the security of quantum key distribution (QKD) protocols with continuous variables against general attacks. We show here that similar conclusions can be reached using the de Finetti reduction of Lemma \ref{lemma:ps-infinite}.

We look at things from the exact same point of view as the one adopted in \cite{CR}. Let $\cH$ be an infinite-dimensional Hilbert space and let $X,Y$ be two canonical operators on $\cH$. Then, denote by $\Lambda=X^2+Y^2$ the corresponding Hamiltonian, fix $\lambda_0>0$, and define
%\[ \cH_X=\left\{ \ket{\theta}\in\cH \st X^2\ket{\theta}=\lambda\ket{\theta},\ \lambda<\frac{\lambda_0}{2}\right\}\ \text{and}\  \cH_Y=\left\{ \ket{\theta}\in\cH \st Y^2\ket{\theta}=\lambda\ket{\theta},\ \lambda<\frac{\lambda_0}{2}\right\},\]
\[ \bar{\cH}:=\left\{ \ket{\theta}\in\cH \st \Lambda\ket{\theta}=\lambda\ket{\theta},\ \lambda\leq\lambda_0\right\}, \]
finite-dimensional subspace of $\cH$ spanned by the eigenvectors of $\Lambda$ with associated eigenvalue at most $\lambda_0$.

Let $n,k\in\N$, with $n\geq 2k$, and let $\rho^{(n+2k)}$ be a $(n+2k)$-symmetric state on $\cH^{\otimes n+2k}$. Next, define the two events $\mathcal{A}$ and $\mathcal{B}$ as
\begin{align*}
%\mathcal{A}\ = & \ \text{``}\ \forall\ 1\leq q\leq k,\ \left(\frac{1}{2}P_{H_X}+\frac{1}{2}P_{H_Y}\right)\rho_q^{(1)} \left(\frac{1}{2}P_{H_X}+\frac{1}{2}P_{H_Y}\right)^{\dagger}=\rho_q^{(1)}\ \text{''}\\
\mathcal{A}\ = & \ \text{``}\ \forall\ 1\leq q\leq k,\ \Tr\left(\Lambda\rho_q^{(1)}\right) \leq \lambda_0\ \text{''}\\
\mathcal{B}\ = & \ \text{``}\ \exists\ \ket{\theta^{(n+k)}}\in\Sym^{n+k}\left(\bar{\cH}\otimes\bar{\cH}',\cH\otimes\cH'\right):\ \rho^{(n+k)}=\Tr_{\cH'^{n+k}} \proj{\theta^{(n+k)}}\ \text{''},
\end{align*}
where for all $1\leq q\leq k$, $\rho_q^{(1)}=\Tr_{\cH^{n+2k}\setminus\cH_q}\rho^{(n+2k)}$, and $\rho^{(n+k)}=\Tr_{\cH_{k+1}\cdots\cH_{2k}}\rho^{(n+2k)}$. We know from \cite[Lemma III.3]{CR} that there exist universal constants $C_0,c>0$ such that, whenever $\lambda_0\geq C_0\log (n/k)$, we have
\[ \P\left(\mathcal{A}\wedge\neg\mathcal{B}\right) \leq e^{-ck^3/n^2} . \]
In words, this means the following. Fix a threshold $\lambda_0\geq C_0\log (n/k)$, and assume that when measuring the energy $\Lambda$ on the $k$ first subsystems of $\rho^{(n+2k)}$, only values below $\lambda_0$ are obtained. Then, with probability greater than $1-e^{-ck^3/n^2}$, the remaining $n+k$ subsystems of $\rho^{(n+2k)}$ have a purification which is the symmetrization of a state with more than $n$ subsystems supported in $\bar{\cH}\otimes\bar{\cH}'$.

Now, let $\widetilde{\rho}$ be a state on $\cH^{\otimes n+k}$ such that $\widetilde{\rho}=\Tr_{\cH'^{n+k}} \ket{\widetilde{\theta}}\bra{\widetilde{\theta}}$ for some unit vector $\ket{\widetilde{\theta}} \in \Sym^{n+k}\left(\bar{\cH}\otimes\bar{\cH}',\cH\otimes\cH'\right)$. %(see \cite{CR}, Lemma III.5).
Denoting by $d$ the dimension of $\bar{\cH}$, we have by Lemma \ref{lemma:ps-infinite} that $\ket{\widetilde{\theta}}$ satisfies
\[ \proj{\widetilde{\theta}} \leq \left[\sum_{q=0}^k {n+k \choose q}{n+d^2-1 \choose n}^3\right] \sum_{\underset{|I|\geq n}{I\subset[n+k]}} \int_{\ket{x}\in S_{\bar{\cH}\otimes\bar{\cH}'}} \epsilon(\widetilde{\theta}_x)_{(\cH\cH')^{ I^c}}\otimes\proj{x}_{\bar{\cH}\bar{\cH}'}^{\otimes I} \mathrm{d}x. \]
%\[ \ket{\widetilde{\theta}}\bra{\widetilde{\theta}} \leq \begin{pmatrix} n+k \\ k \end{pmatrix} \begin{pmatrix} n+d^2-1 \\ n \end{pmatrix}^3 \sum_{\underset{|I|=k}{I\subset[n+k]}} \int_{\ket{x}\in S_2(\bar{\cH}\otimes\bar{\cH})} F\left(\Id_{\cH\otimes\cH}^{\otimes I}\otimes\proj{x}^{\otimes I^c}, \ket{\widetilde{\theta}}\bra{\widetilde{\theta}}\right)^2 \Id_{\cH\otimes\cH}^{\otimes I}\otimes\proj{x}^{\otimes I^c} \mathrm{d}x, \]
And hence, after partial tracing over $\cH'^{\otimes n+k}$, we finally get
\[ \widetilde{\rho} \leq \left[\sum_{q=0}^k {n+k \choose q}{n+d^2-1 \choose n}^3\right] \sum_{\underset{|I|\geq n}{I\subset[n+k]}} \int_{\ket{x}\in S_{\bar{\cH}\otimes\bar{\cH}'}} \varepsilon(\widetilde{\theta}_x)_{\cH^{I^c}}\otimes\sigma(x)_{\bar{\cH}}^{\otimes I} \mathrm{d}x, \]
where for all $0\leq q\leq k$ and all unit vector $\ket{x}\in \bar{\cH}\otimes\bar{\cH}'$, $\sigma(x)_{\bar{\cH}}=\Tr_{\bar{\cH}'}\proj{x}_{\bar{\cH}\bar{\cH}'}$ is the reduced state of $\proj{x}$ on $\bar{\cH}$, and
$\varepsilon(\widetilde{\theta}_x)_{\cH^{k-q}}=\Tr_{\cH'^{k-q}}\epsilon(\widetilde{\theta}_x)_{(\cH\cH')^{k-q}}$ is the reduced sub-normalized state of $\epsilon(\widetilde{\theta}_x)$ on $\cH^{\otimes k-q}$.

%\begin{remark}
%By Remark \ref{remark:ps}, we actually have the slight refinement
%\[ \widetilde{\rho} \leq \frac{\begin{pmatrix} n+k \\ k \end{pmatrix}\begin{pmatrix} n+d^2-1 \\ n \end{pmatrix}^3}{k!^3} \sum_{\underset{|I|=k}{I\subset[n+k]}}\sum_{\pi\in\mathfrak{S}_k} \int_{\ket{x}\in S_2(\bar{\cH}\otimes\bar{\cH})} \varepsilon(\widetilde{\theta}_{x,\pi})_{\cH^{\otimes I}}\otimes\sigma(x)_{\bar{\cH}}^{\otimes I^c}  \mathrm{d}x. \]
%\end{remark}

Putting everything together, we get the following: Let $n\in\N$ and $k=\lfloor n^{\alpha} \rfloor$ for a given $\alpha$ fulfilling $2/3<\alpha<1$. Let also $\lambda_0$ be a threshold such that on the one hand $\lambda_0\geq C_0\log n$, where $C_0>0$ is a universal constant, and on the other hand $d\leq n^{\beta}$ for a given $\beta$ fulfilling $0<\beta<1/2$. Suppose next that $\rho^{(n+2k)}$ is a $(n+2k)$-symmetric state on $\cH^{\otimes n+2k}$ such that event $\mathcal{A}$ holds. Then, with probability greater than $1-e^{-cn^{3\alpha-2}}$, where $c>0$ is a universal constant, the reduced state $\rho^{(n+k)}$ of $\rho^{(n+2k)}$ on $\cH^{\otimes n+k}$ satisfies
\[ \rho^{(n+k)} \leq (Cn)^{n^{\alpha}+n^{2\beta}} \sum_{\underset{|I|\geq n}{I\subset[n+k]}} \int_{\sigma_{\bar{\cH}}\in\mathcal{D}(\bar{\cH})} \varepsilon(\rho,\sigma)_{\cH^{I^c}}\otimes\sigma_{\bar{\cH}}^{\otimes I} \mathrm{d}\mu(\sigma_{\bar{\cH}}), \]
where $C>0$ is a universal constant, $\mu$ is a probability measure on the set of states on $\bar{\cH}$, and for each $0\leq q\leq k$ and each state $\sigma$ on $\bar{\cH}$, $\varepsilon(\rho,\sigma)_{\cH^{k-q}}$ is a sub-normalized state on $\cH^{\otimes k-q}$.

Now, let $\mathcal{N}:\mathcal{L}(\cH)\rightarrow\mathcal{L}(\mathcal{K})$ be a quantum channel, and assume that there exists some $0<\delta<1$ such that
\[ \sup\left\{ \left\|\mathcal{N}(\sigma)\right\|_1 \st \sigma\in\mathcal{D}(\bar{\cH})\right\} \leq \delta. \]
This implies in particular that, for any $0\leq q\leq k$, and any states $\varepsilon$ on $\cH^{\otimes k-q}$, $\sigma$ on $\bar{\cH}$, we have
\[ \left\|\mathcal{N}^{\otimes n+k}\left(\varepsilon\otimes\sigma^{\otimes n+q}\right)\right\|_1 = \left\|\mathcal{N}^{\otimes k-q}\left(\varepsilon\right)\right\|_1\left\| \mathcal{N}(\sigma)\right\|_1^{n+q} \leq \delta^{n+q}. \]
And subsequently, by what precedes, we obtain the following: For any state $\rho^{(n+2k)}$ on $\cH^{\otimes n+2k}$ such that event $\mathcal{A}$ holds, denoting by $\rho^{(n+k)}$ its reduced state on $\cH^{\otimes n+k}$, we have with probability greater than $1-e^{-cn^{3\alpha-2}}$,
\begin{align*} \left\|\mathcal{N}^{\otimes n+k}\left(\rho^{(n+k)}\right)\right\|_1 \leq & \, (Cn)^{n^{\alpha}+n^{2\beta}} \sum_{q=0}^{k}{n+k \choose q} \sup\left\{ \left\|\mathcal{N}^{\otimes n+k}\left(\varepsilon\otimes\sigma^{\otimes n+ q}\right)\right\|_1 \st \varepsilon\in\mathcal{D}(\cH)^{\otimes k-q},\ \sigma\in\mathcal{D}(\bar{\cH}) \right\}\\
\leq & \, (Cn)^{n^{\alpha}+n^{2\beta}}\sum_{q=0}^{k}{n+k \choose q}\delta^{n+q}\\
\leq & \, (C'n)^{n^{\alpha}+n^{2\beta}}\, \delta^n, \end{align*}
where $C'>0$ is a universal constant. By the way $\alpha,\beta$ have been chosen, this means that, for any $\tilde{\delta}>\delta$ and $n\geq n_{\tilde{\delta}}$, we have with high probability
\[
  \sup\left\{ \left\|\mathcal{N}^{\otimes n+k}\left(\rho^{(n+k)}\right)\right\|_1 \st \rho^{(n+k)}=\Tr_{\cH^{\otimes k}}\rho^{(n+2k)}\ \text{with}\ \rho^{(n+2k)}\in\mathcal{D}(\cH)^{\otimes n+2k}\ \text{such that}\ \mathcal{A}\ \text{holds} \right\}
  \leq \tilde{\delta}^n.
\]

\section{Conclusion and outlook}

We have reviewed (and given a new proof) of the constrained de Finetti
reduction of \cite{DSW}. We have demonstrated its adaptability to various situations
where one would like to impart a known constraint satisfied by a
permutation-invariant state onto the i.i.d.~states occurring in the
operator with which to compare it. We have seen that our technique works especially well in the case of linear constraints (see \cite{DSW} and \cite{LaW2} for two developed such applications).

We have then spent considerable effort on a particularly interesting
convex constraint, separability. Apart from the obvious relevance
to entanglement theory, the constrained de Finetti reduction provides a
very natural framework in which to derive bounds on the success
probability of parallel repetitions of tests, and has immediate
applications in the parallel repetition of $\mathrm{QMA}(2)$, quantum
Merlin-Arthur interactive proof systems with two unentangled provers (see \cite{HM} for further details).
Conversely, we showed that certain progress in entanglement theory
(on the conjectured faithfulness properties of the CEMI entanglement
measure for instance) would imply even stronger, dimension-independent bounds, which
would show in particular that the soundness gap of $\mathrm{QMA}(2)$ can
be amplified exponentially by parallel repetition, without any other
devices. It is curious to see that the progress on questions like this
can depend on the properties of a simple, but little-understood
entanglement measure such as CEMI, and we would like to recommend
its study to the reader's attention. Indeed, it seems to be the best candidate so far for a \emph{magical}, or even \emph{supercalifragilistic}  entanglement measure \cite{Winter}. The latter is defined as one which has the post-selection property with respect to an initial product state and measurement on a separate subsystem (cf.~Lemma \ref{lemma:post-POVM}), is super-additive, and satisfies a universal faithfulness bound with respect to the trace-norm distance (cf.~Conjecture \eqref{eq:E_I-conjecture}).

We have also presented a more abstract framework of convex
constraints, that allows us to demonstrate in greater generality
the interplay between the multiplicative behaviour
of $(i)$ the support function and $(ii)$ the maximum fidelity function. The way $(i)$ is derived from $(ii)$ is via our de Finetti reduction with fidelity weight in the upper bounding operator. And $(ii)$ is obtained from $(i)$ by constructing a test whose failure probability decays exponentially under parallel repetition.

Finally, seeing that the de Finetti reductions had been so far always
limited by the finite dimensionality of the system
involved, we have made first steps towards an extension of
the main technical tool to infinite-dimensional systems under
suitable constraints. It remains to be seen how widely it or a variation can be applied to quantum cryptography in continuous variable systems \cite{CR,CGPLR}, or similar problems.

\section*{Acknowledgments}
%We thank our friends for all the jolly drinks we had together.
This research was supported by the European Research Council (Advanced Grant IRQUAT,
ERC-2010-AdG-267386), the European Commission (STREP RAQUEL, FP7-ICT-2013-C-323970),
the Spanish MINECO (project FIS2013-40627-P), with the support of FEDER
funds, the Generalitat de Catalunya (CIRIT project 2014-SGR-966), and the French CNRS (ANR
projects OSQPI 11-BS01-0008 and Stoq 14-CE25-0033).

\addcontentsline{toc}{section}{References}

\end{document}